\newcommand{\citet}[1]{\cite{#1}}
\newcommandx{\unsure}[2][1=]{\todo[linecolor=green,backgroundcolor=green!25,bordercolor=green,#1]{\normalsize #2}}
\newcommandx{\improvement}[2][1=]{\todo[inline,linecolor=blue,backgroundcolor=blue!05,bordercolor=blue,#1]{\normalsize #2}}
\newcommandx{\info}[2][1=]{\todo[linecolor=yellow,backgroundcolor=yellow!25,bordercolor=yellow,#1]{#2}}
\newcommandx{\floatmodel}[2][1=]{\todo[inline,linecolor=red,backgroundcolor=yellow!25,bordercolor=yellow,#1]{#2}}
\newcommandx{\thiswillnotshow}[2][1=]{\todo[disable,#1]{#2}}
\newcommandx{\celine}[2][1=]{\todo[inline,linecolor=green,backgroundcolor=green!25,bordercolor=green,caption={\normalsize \textbf{Celine}},#1]{\normalsize #2}}
\newcommandx{\karol}[2][1=]{\todo[inline,linecolor=blue,backgroundcolor=blue!25,bordercolor=blue,caption={\normalsize \textbf{Karol}},#1]{\normalsize #2}}
\newcommandx{\jesper}[2][1=]{\todo[inline,linecolor=red,backgroundcolor=red!25,bordercolor=red,caption={\normalsize \textbf{Jesper}},#1]{\normalsize #2}}
\newcommandx{\kuba}[2][1=]{\todo[inline,linecolor=gray,backgroundcolor=red!25,bordercolor=red,caption={\normalsize \textbf{Kuba}},#1]{\normalsize #2}}
\newtheorem{theorem}{Theorem}
\newtheorem{definition}[theorem]{Definition}
\newtheorem{lemma}[theorem]{Lemma}
\newtheorem{corollary}[theorem]{Corollary}
\newtheorem{claim}[theorem]{Claim}
\newtheorem{observation}[theorem]{Observation}
\newtheorem*{rep@theorem}{\rep@title}
\newcommand{\newreptheorem}[2]{%
	\newenvironment{rep#1}[1]{%
		\def\rep@title{#2 \ref{##1}}%
		\begin{rep@theorem}[restated]}%
		{\end{rep@theorem}}}
\numberwithin{theorem}{section}
\newcommand{\eps}{\varepsilon}
\newcommand{\Oh}{\mathcal{O}}
\newcommand{\Os}{\Oh^{\star}}
\newcommand{\Otilde}{\widetilde{\Oh}}
\newcommand{\Ot}{\Otilde}
\newcommand{\nat}{\mathbb{N}}
\newcommand{\N}{\mathbb{N}}
\newcommand{\real}{\mathbb{R}_{+}}
\newcommand{\poly}{\mathrm{poly}}
\newcommand{\polylog}{\,\textup{polylog}}
\newcommand{\Ex}[1]{\mathbb{E}\left[ #1 \right]}
\newcommand{\Uni}{\mathrm{Uni}}
\newcommand{\Bin}{\mathrm{Bin}}
\newcommand{\prun}{\mathtt{crit}} 
\newcommand{\setrng}[1]{\{1,\ldots,#1\}} 
\newcommand{\coverprod}{\circledast}
\newcommand{\DP}{\mathtt{DP}}
\renewcommand{\leq}{\leqslant}
\renewcommand{\geq}{\geqslant}
\renewcommand{\le}{\leqslant}
\renewcommand{\ge}{\geqslant}
\newenvironment{case}
{\mdfsetup{%
		nobreak=true,
		middlelinecolor=gray,
		middlelinewidth=1pt,
		backgroundcolor=gray!10,
		innertopmargin=7pt,
		roundcorner=5pt}
	\begin{mdframed}}
	{\end{mdframed}}
\newcommand{\cS}{\ensuremath{\mathcal{S}}}
\newcommand{\cW}{\ensuremath{\mathcal{W}}}
\newcommand{\bbN}{\ensuremath{\mathbb{N}}}
\newcommand{\dc}{{\downarrow}}
\newcommand{\uc}{{\uparrow}}
\newcounter{openquestion}
\newenvironment{openquestion}
{\vspace{-0.4em}\begin{center}\begin{minipage}{\textwidth}\begin{framed}\refstepcounter{openquestion}\vspace{-0.3em}\textbf{Question~\theopenquestion:}}	{\vspace{-0.4em}\end{framed}\end{minipage}\end{center}\vspace{-0.4em}}
\title{A Faster Exponential Time Algorithm for Bin Packing With a Constant Number of Bins via Additive Combinatorics \footnote{An extended abstract of this manuscript was presented and included in the proceedings of the 2021 ACM-SIAM Symposium on Discrete Algorithms.}}
\date{}
\author{
    Jesper Nederlof\footnote{Utrecht University, The
    Netherlands, \texttt{j.nederlof@uu.nl}. Supported by
    the project CRACKNP that has received funding from the European
    Research Council (ERC) under the European Union’s Horizon 2020 research and
    innovation programme (grant agreement No 853234).}
    \and
    Jakub Pawlewicz\footnote{Institute of Informatics, University of
    Warsaw, Poland, \texttt{pan@mimuw.edu.pl}. }
    \and
    Céline M. F. Swennenhuis\footnote{Eindhoven University of Technology, The
    Netherlands, \texttt{c.m.f.swennenhuis@tue.nl}. Supported by the Netherlands
    Organization for Scientific Research under project no. 613.009.031b.}
    \and
    Karol W\k{e}grzycki\footnote{Saarland University and Max Planck Institute for Informatics,
        Saarbr\"ucken, Germany, \texttt{wegrzycki@cs.uni-saarland.de}.  
    This work is part of the project TIPEA that has
    received funding from the European Research Council (ERC) under the European Unions Horizon
    2020 research and innovation programme (grant agreement No. 850979).
    Author was also supported Foundation for Polish Science (FNP), by the grants
    2016/21/N/ST6/01468 and 2018/28/T/ST6/00084 of the Polish National Science
    Center and project TOTAL that has received funding from the European
    Research Council (ERC) under the European Union’s Horizon 2020 research and
    innovation programme (grant agreement No 677651).}
}
\begin{document}
\maketitle
\begin{abstract}

    In the Bin Packing problem one is given $n$ items with weights
    $w(1),\ldots,w(n)$ and $m$ bins with capacities $c_1,\ldots,c_m$. The goal is
    to find a partition of the items into sets $S_1,\ldots,S_m$ such that
    $w(S_j) \leq c_j$ for every bin $j$, where $w(X)$ denotes $\sum_{i \in
    X}w(i)$.

    Bj{\"{o}}rklund, Husfeldt and Koivisto (SICOMP 2009) presented an $\Os(2^n)$ time algorithm for
    Bin Packing (the $\Os(\cdot)$ notation omits factors polynomial in the input size). In this paper, we show that for every $m \in \nat$ there exists
    a constant $\sigma_m >0$ such that an instance of Bin Packing with $m$ bins can be solved
    in $\Oh(2^{(1-\sigma_m)n})$ randomized time. 
    Before our work, such improved algorithms were not known even for $m$ equals
    $4$.
	
	A key step in our approach is the following new result in Littlewood-Offord
    theory on the additive combinatorics of subset sums: For every $\delta >0$
    there exists an $\eps >0$ such that if $|\{ X\subseteq \{1,\ldots,n \} :
    w(X)=v \}| \geq 2^{(1-\eps)n}$ for some $v$ then $|\{ w(X): X \subseteq \{1,\ldots,n\} \}|\leq 2^{\delta n}$.

\end{abstract}

\thispagestyle{empty}

 \begin{picture}(0,0)
 \put(462,-145)
 {\hbox{\includegraphics[width=40px]{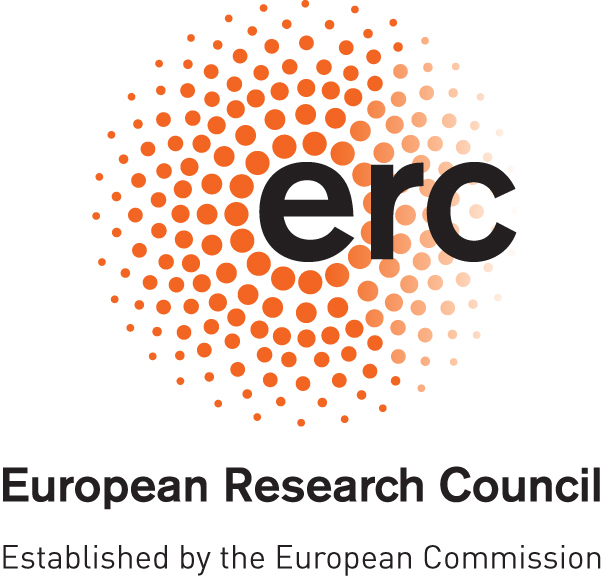}}}
 \put(452,-205)
 {\hbox{\includegraphics[width=60px]{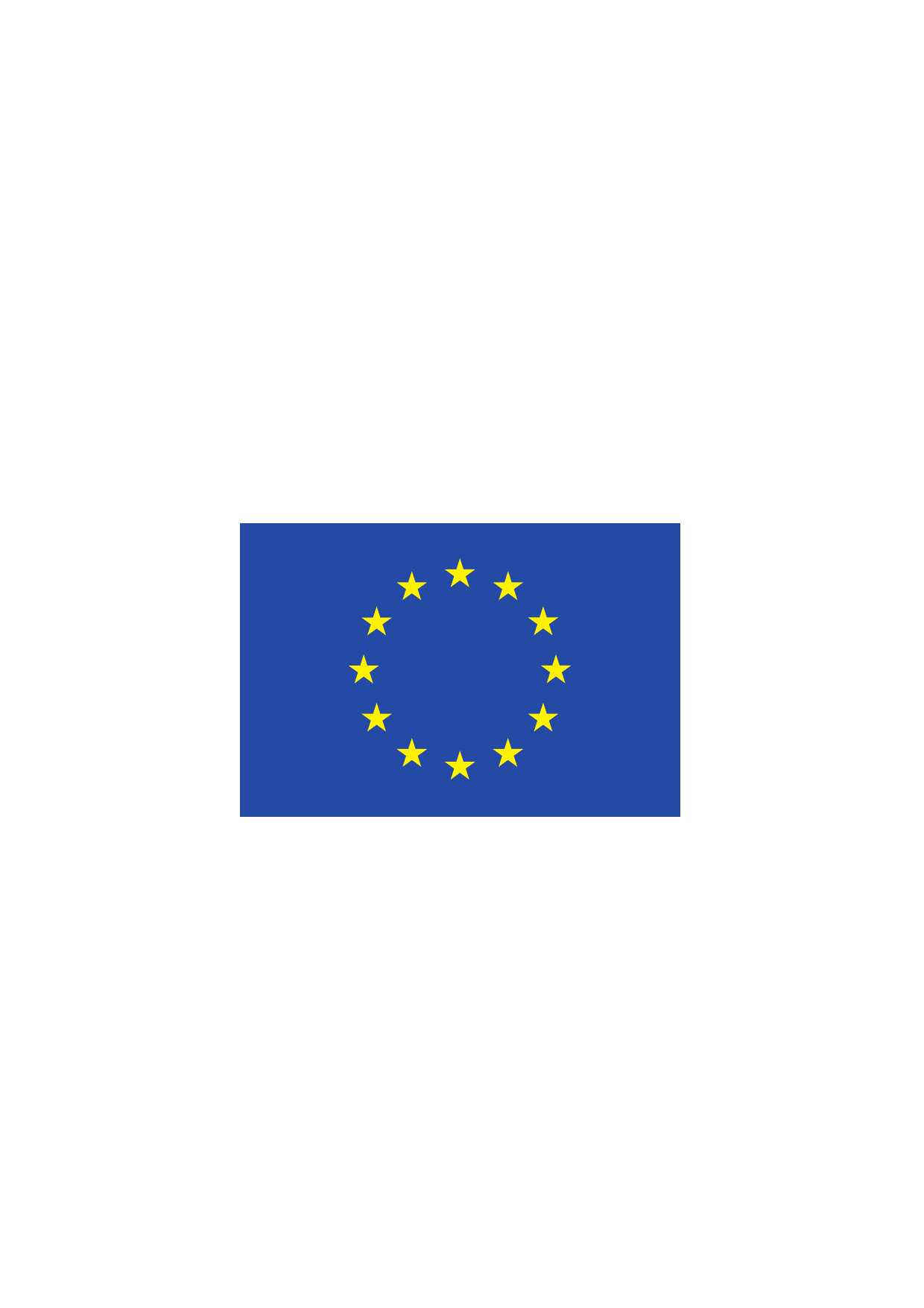}}}
 \end{picture}

\clearpage
\setcounter{page}{1}

\section{Introduction}\label{sec:intro}
A central aim in contemporary algorithm design is to minimize the worst-case complexity of an algorithm for a given (supposedly) hard computational problem in a fine-grained sense. 
The underlying goal is to reveal the optimal running time witnessed by (1) an algorithm with worst-case complexity $T(n)$ on instances with parameter $n$, and (2) a lower bound that excludes improvements to $T(n)^{1-\eps}$ time for some constant $\eps >0$.
For some problems, it is an especially intriguing question whether natural running times of the basic algorithms solving them are optimal.
One of the most important instances of such a question for an NP-complete
problem is about improvements over a relatively direct dynamic programming algorithm for Set Cover: 
\begin{openquestion}\label{q1}
	Can Set Cover with $n$ elements be solved in $\Os((2-\eps)^n)$ time, for some $\eps >0$?
\end{openquestion}
Here and throughout the paper, we use the $\Os$ notation to hide factors polynomial in the input size.\footnote{In principle, it is natural to assume the Set Cover instance has $n$ elements and $\poly(n)$ sets, but an algorithm by Bj\"orklund et al.~\cite{BjorklundHK09} solves Set
Cover instances in $2^nn^{\Oh(1)}$ time irrespective of the number of sets.}
Unfortunately, Question~\ref{q1} seems to have a fate similar to the Strong
Exponential Time Hypothesis (which is about a similar improvement for the CNF-SAT problem): While there is an increasing interest and dependence on its validity (see e.g.~\cite{subset-sum-lower, krauthgamer_et_al:LIPIcs:2019:10284}), we seem to be far from resolving it.

Therefore, it is natural to study Question~\ref{q1} for special cases of Set Cover. And indeed, improved algorithms of the type asked in Question~\ref{q1} were already presented for instances with small sets~\cite{Koivisto09},  (more generally) large solutions~\cite{Nederlof16}, and for several other cases (see e.g.~\cite{GolovnevKM16}).

However, some of the most fundamental NP-complete problems that are special cases of Set Cover such as Graph Coloring and Directed Hamiltonicity\footnote{Krauthgamer and Trabelsi~\cite{krauthgamer_et_al:LIPIcs:2019:10284} rewrite a Directed Hamiltonicity instance efficiently as a Set Cover instance.} still defy considerable research efforts to obtain the type of improved algorithms asked for in Question~\ref{q1} (see e.g.~\cite{BjorklundKK17,FominK13}).

\paragraph{Bin Packing}
We study one such fundamental NP-complete problem, the \emph{Bin Packing problem}:
Given item weights $w(1),\ldots,w(n) \in \nat$ and capacities
$c_1,\ldots,c_m \in \nat$, can we partition items $\{1,\ldots,n\}$ into $m$ sets $S_1, \ldots,
S_m \subseteq \{1,\ldots,n\}$ such that $w(S_j)\leq c_j$ for each $j \in
\{1,\ldots,m\}$? Here $w(X)$ denotes $\sum_{i \in X}w(i)$.
Due to its elegant formulation and clear practical applicability, Bin Packing is
a central problem in computer science. For example, it models the
most basic non-trivial scheduling problem with multiple machines. While Bin
Packing has been extensively studied from an approximation and online algorithms perspective~\cite{CoffmanJr.2013}, much less research has been devoted to \emph{exact algorithms} for Bin Packing.

The currently fastest algorithm for Bin Packing is a consequence\footnote{Assuming the capacity of each bin equals $c$, create a Set Cover instance with all item sets of weight at most $c$.\label{footnote}}  of the aforementioned algorithm for Set Cover from~\cite{BjorklundHK09}, and it runs in $\Os(2^n)$ time. With Question~\ref{q1} on the horizon, we ask whether this can be improved:

\begin{openquestion}\label{q2}
	Can Bin Packing with $n$ items be solved in $\Oh((2-\eps)^n)$ time, for some $\eps >0$?
\end{openquestion}

The only improvement over the $\Os(2^n)$ time algorithm for Bin Packing
is due to Lente et al.~\cite{LenteLST13}, who
gave an $\Os(m^{n/2})$ time algorithm. Note that this is only an improvement for $m=2,3$
bins and Question~\ref{q2} remained illusive for $m = 4$ already.
In stark contrast, our main result is an improvement over the $\Os(2^n)$ time algorithm for \emph{every} constant number of bins:

\begin{theorem}[Main Theorem]
	\label{mainthm}
	For every $m \in \nat$ there is a constant $\sigma_m >0$ such that 
    every Bin Packing instance with $m$ bins can be solved in
    $\Oh(2^{(1-\sigma_m)n})$ time with high probability.
\end{theorem}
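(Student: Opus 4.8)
I would run a win--win argument controlled by the number of distinct subset sums $D := |\{\, w(X) : X \subseteq [n] \,\}|$, using the Littlewood--Offord statement from the abstract as the bridge between the two regimes. Fix a small constant $\delta_m > 0$, to be optimized at the end, and let $\eps_m > 0$ be the constant it produces through that statement.

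\emph{Few distinct subset sums ($D \le 2^{\delta_m n}$).} Here I would use a direct dynamic program over ordered bin-load vectors: scan the items $1,\dots,n$ and maintain the set of vectors $(w(S_1),\dots,w(S_m)) \in \Z_{\ge 0}^m$ that are realizable by an ordered partition $(S_1,\dots,S_m)$ of the items seen so far, discarding any vector with a coordinate exceeding the matching capacity $c_j$ (legitimate, since later items only increase coordinates). Each coordinate is a subset sum and so takes at most $D$ values, while the last coordinate is fixed by the total weight processed so far; hence every layer of the table has at most $D^{m-1}$ entries, and the whole run takes $D^{m-1}\cdot\poly(n) \le 2^{(m-1)\delta_m n}\poly(n)$ time, which is well below $2^{(1-\sigma_m)n}$ once $(m-1)\delta_m$ is small. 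This branch is recognized adaptively: run the program and abort the instant a layer exceeds $2^{(m-1)\delta_m n}$ entries, which can only happen when $D > 2^{\delta_m n}$.

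\emph{Many distinct subset sums ($D > 2^{\delta_m n}$).} Now the contrapositive of the quoted Littlewood--Offord statement, applied with $\delta = \delta_m$, guarantees that no value $v$ is attained by as many as $2^{(1-\eps_m)n}$ subsets of $[n]$: the subset sums are ``spread''. I would attack this case by meet-in-the-middle. For a split $[n] = L \sqcup R$, the set of load vectors realizable by a partition of the whole instance is exactly the Minkowski sum $P_L + P_R$ of the corresponding sets for $L$ and for $R$, so it suffices to decide whether $P_L + P_R$ contains a vector that is coordinatewise at most $(c_1,\dots,c_m)$. The naive lists have size up to $m^{n/2}$, useless once $m \ge 4$, so the idea is to shrink them by a representation-style refinement in the spirit of Howgrave-Graham and Joux for Subset Sum: split each bin's contents into several blocks drawn from a common ground set, so that every solution acquires $2^{\Omega(n)}$ representations, and keep in each intermediate list only the configurations whose load vector lies in a random residue class modulo a suitable integer, so that a single surviving representation is enough. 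The spread bound is precisely what forces the number of \emph{spurious} surviving partial configurations to be a factor $2^{c_m n}$ (for some $c_m>0$) below $2^n$ in expectation, so that every list stays of size $2^{(1-\sigma_m)n}$; this is where the randomization -- and hence the ``with high probability'' -- enters.

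\emph{Main obstacle.} The structured branch is routine; essentially all the difficulty sits in the spread branch: making the representation/meet-in-the-middle scheme work for \emph{every} constant $m$ rather than only $m=2$ (the bins interact, and box feasibility is an inequality, not an equality, which complicates both the modular sieve and the final dominance test), and quantifying how the collision bound $2^{(1-\eps_m)n}$ turns into an explicit exponent saving. A final bookkeeping step trades off the choice of $\delta_m$ -- a smaller $\delta_m$ speeds up the structured branch but yields a smaller $\eps_m$, weakening the saving in the spread branch -- so that both branches run in $2^{(1-\sigma_m)n}$ time for one common $\sigma_m>0$.
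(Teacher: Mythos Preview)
Your win--win split on $D=|w(2^{[n]})|$ and the DP branch are essentially the paper's Lemma~\ref{lem:Case0}, so that part is fine. The gap is in the spread branch, and it is not just a matter of fleshing out the representation technique.

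The core issue is that the bound $\beta(w)\le 2^{(1-\eps_m)n}$ controls the number of subsets with a \emph{fixed} total weight, but a Bin Packing solution satisfies $w(S_j)\le c_j$, not $w(S_j)=c_j$. So even if subset sums are spread, the number of candidates for $S^L=\bigcup_{j\in L}S_j$ can still be essentially $2^n$ unless the bins are tight. Your modular sieve idea inherits this problem: filtering partial configurations by a residue class of the load vector does not constrain anything useful when the target is a box rather than a point, and you explicitly flag this (``box feasibility is an inequality'') without resolving it. The paper's main algorithmic contribution is precisely a mechanism to recover tightness: it introduces a \emph{pruned} weight function $w_\theta$ (keeping only the $\theta$ most significant bits, with $\theta$ chosen so that $|w_\theta(2^{[n]})|\approx 2^{\delta n}$), and then splits bins into small-slack (tight up to $\poly(n)$ after pruning, so anti-concentration on $w_\theta$ bounds the witness count) and large-slack (where it suffices to track pruned weights, and $|w_\theta(2^{[n]})|^m$ is small). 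This two-scale argument, together with a separate balanced/unbalanced case split handled by witness sampling and fast zeta/M\"obius transforms on down- and up-closures (not meet-in-the-middle or representations), is what actually drives the proof; none of it appears in your proposal.

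In short: the structured branch matches the paper, but the spread branch as written does not go through, because you have not explained how anti-concentration of subset sums translates into a saving when the feasibility constraints are inequalities. The paper's pruner/slack machinery is the missing idea.
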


While our algorithm does not resolve Question~\ref{q2}, we believe it makes
substantial progress on it because (1) Set Cover with a
constant-sized solution is as least as hard as a general Set Cover, and (2) the other
extreme, Set Cover with a linear number sets in the solution (and hence Bin Packing with a linear number of bins with equal capacity\textsuperscript{\ref{footnote}}), can be solved in
$\Oh((2-\eps)^n)$ time (see~\cite{Nederlof16}).

\subsection{Our Approach for Proving Theorem~\ref{mainthm}}

As our starting point, we extend the methods from \cite{BjorklundHKK09, Nederlof16} to show
that instances of Bin Packing with the following restrictions admit an $\Os(2^{(1-\sigma_m)n})$ time randomized algorithm for some
$\sigma_m > 0$:

\begin{description}
    \item[Restriction~(1)] the instance has a \emph{low concentration} in the sense that
        $\beta(w) \leq 2^{(1-\eps)n}$ for some $\eps >0$, where $\beta(w)$ is the
        \emph{maximum frequency} $\max_{v}|\{ X\subseteq \{1,\ldots,n \} : w(X)=v \}|$, and
    \item[Restriction~(2)] the instance is \emph{tight} in the sense that
        $\sum_{j=1}^m c_j =w(\setrng{n})$.
\end{description}

\newcommand{\re}[1]{Restriction~(#1)}

Fix a set of bins $L \subseteq \{1,\ldots,m\}$ and recall  $(S_1,\ldots,S_m)$ denotes a solution.
The crux of \re{1} and \re{2} is that together they imply that the number of
distinct sets $S^L:= \bigcup_{j \in L}S_j$
is at most $2^{(1-\eps)n}$ since $w(S^L)=\sum_{j \in L} c_j$. We explain in \S~\ref{subsubsec:zetmob} how this allows a faster algorithm via the methods
of~\cite{BjorklundHKK09, Nederlof16}. In the nutshell, these sets $S^L$
correspond to the \emph{candidates} for the solutions that we need to check and
bounding this number automatically corresponds to the running time of the
algorithm.

However, extending this algorithm to an improved algorithm that solves \emph{all} instances with a constant number of bins requires both new combinatorial (for relaxing \re{1}) and new algorithmic (for relaxing \re{2}) insights that are our main contributions. Therefore we first discuss these insights.
\titleformat{\subsubsection}
{\normalfont\bfseries}{\thesubsubsection}{1em}{}
\subsubsection{Combinatorial Ideas: Lifting \re{1} via Littlewood--Offord Theory.}\label{subsubsec:liftr1}
Our main combinatorial contribution is a new structural insight on instances that do not satisfy \re{1}, i.e. vectors $w$ with $|\{ X\subseteq \{1,\ldots,n \} : w(X)=v \}| \geq 2^{(1-\eps)n}$ for some $v$ and $\eps >0$. 

The challenge of determining the structure of such vectors $w$ is well-known in additive
combinatorics as the \emph{Littlewood--Offord Problem}.
Its rich theory has found applications ranging from pure mathematics (such as
estimating the singularity of random Bernoulli matrices~\cite{annals} or zeroes
of random polynomials~\cite{littlewood1943number}), to database
security~\cite{Griggs1998DatabaseSA}, and to computational complexity theory~\cite{DiakonikolasS13,KaneW16, MekaNV16}. See also the designated chapter in the standard textbook on additive combinatorics~\cite{0020358}.
However, whereas most works (with notable exceptions being
e.g.~\cite{halasz1977estimates,RUDELSON2008600}) assumed inversely
\emph{polynomially} small concentration, e.g. $\beta(w) \geq 2^n / \poly(n)$,
\re{1} is about inversely \emph{exponentially} small concentration.

Recent work studied such exponentially small concentration with applications to improved exponential time algorithms for the Subset Sum problem~\cite{stacs2015, BansalGN018}.
Specifically, they studied trade-off between the parameters $\beta(w)$ and
$|w(2^{\setrng{n}})|:= |\{w(X) : X \subseteq \setrng{n}\}|$.
 Two extremal cases are:
\begin{align*}
    \text{If } w_a :=& (0,0,\ldots,0) &\text{ then }&&  |w_a(2^{\setrng{n}})| = 1 \text{ and } \beta(w_a) = 2^n\\
    \text{If } w_b :=& (1,2,\ldots,2^{n-1}) &\text{ then }&& |w_b(2^{\setrng{n}})| = 2^n \text{ and } \beta(w_b) = 1
\end{align*}
One may suspect that all vectors $w \in \mathbb{Z}^n$ are a combination of these
two extremes and therefore that a smooth trade-off between $\beta(w)$ and
$w(2^{\setrng{n}})$ can be proved.
In the case $w \in \mathbb{F}^{n\times n}_2$ (where
$|w(2^{\setrng{n}})|\beta(w)=2^n$), this suspicion can be confirmed.\footnote{Here $w(i)$ should be interpreted as the the $i$'th row of $w$, so it is a $n$-dimensional binary vector for every $i$. Then by the rank-nullity theorem $|w(2^{\setrng{n}})|=2^{\mathrm{rk}_2(w)}$ and
	$\beta(w)=2^{n-\mathrm{rk}_2(w)}$, where $\mathrm{rk}_2$ is the rank
	over $\mathbb{F}^n_2$.}
Observe that a similar trade-off for $w \in \mathbb{Z}^n$ would allow us to lift
\re{1} by a simple $\Os(|w(2^{\setrng{n}})|^m)$ time algorithm for Bin Packing (Lemma~\ref{lem:CaseA}).

Unfortunately, this intuition is not true and the case $w \in \mathbb{Z}^n$ is
far more subtle. For instance, Wiman~\cite{wiman2017improved} showed in his
remarkable bachelor thesis that, surprisingly, vectors satisfying simultaneously
both $|w(2^{\setrng{n}})| \geq 2^{(1-\eps)n}$ and $\beta(w)\geq 2^{0.2563n}$ exist for any $\eps >0$.
Our main combinatorial contribution is a proof that instances with the same parameters
but the roles of $\beta(w)$ and $|w(2^{\setrng{n}})|$ swapped do \emph{not}
exist:\footnote{See Section~\ref{sec:prel} for the formal definition of the
    $\Oh(\cdot)$,$\Os(\cdot)$
and $\Oh_{\eps \rightarrow 0}(\cdot)$ notation.}:

\begin{theorem}
	\label{lothm}
    Let $\eps >0$. If $\beta(w) \geq 2^{(1-\eps)n}$, then $|w(2^{\setrng{n}})|\leq
    2^{\delta n}$, where $\delta= \delta(\eps) = \Oh_{\eps \rightarrow
    0}\left(\frac{\log\log(1/\eps)}{\sqrt{\log(1/\eps)}}\right)$.
\end{theorem}

The dependency of $\delta$ on $\eps$ was recently improved to $\delta(\eps) = \Oh(\sqrt{\eps})$ by Jain et al.~\cite{jain2021anticoncentration}. 
The previous best bounds were given by Austrin et al.~\cite{stacs2015} who found
a connection with \emph{Uniquely Decodable Code Pairs (UDCPs)} from information
theory (see Subsection~\ref{subsec:relwork} for details).  
This implies for example that if $\beta(w) \geq 2^{(1-\eps)n}$, then
$|w(2^{\setrng{n}})| \leq 2^{0.4228n+\sqrt{\eps}}$ by a result on UDCPs from~\cite{AustrinKKN18}.
However, the reduction from~\cite{stacs2015} is symmetric with respect to
swapping the roles of $\beta(w)$ and $w(2^{\setrng{n}})$, and thus by the result from~\cite{wiman2017improved}
UDCP techniques alone are not enough to decrease the constant $0.4228$ beyond $0.2563$.

Therefore, we need new ideas to reduce the constant $0.4228$ to an arbitrarily small one. To do so, we first investigate the combinatorial structure of the hyperplane $H := \{x \in \mathbb{Z}^n : \langle w,x \rangle = v \}$, assuming $|H \cap \{0,1\}^n| \geq 2^{(1-\eps)n}$. Afterwards we apply an argument similar to the UDCP connection from~\cite{stacs2015}. 
We formally describe our approach for proving Theorem~\ref{lothm} in
Section~\ref{sec:LO}.

Note that Theorem~\ref{lothm} enables us to lift~\re{1}: We may assume $\beta(w)
\leq 2^{(1-\eps_m)n}$ where $\eps_m >0$ depends on $m$ since otherwise a simple
dynamic programming algorithm that runs in  $\Oh(n \cdot m \cdot |w(2^{\setrng{n}})|^m)$ time will be fast enough for
constant $m$ (see
Lemma~\ref{lem:CaseA}).

\subsubsection{New Algorithmic Ideas: Lifting~\re{2}}\label{subsubsec:liftr2}

As mentioned before, \re{2} is algorithmically useful because of the following reason: We aim
to detect a solution $S_1,\ldots,S_m$ to the Bin Packing instance by listing all
candidates for $S^L:=\bigcup_{j \in L}S_j$ for some $L\subseteq \{1,\ldots,m\}$, and \re{2} implies
that $w\big(S^L\big)=\sum_{j \in L}c_j$. This allows us to narrow down the number of candidates to $2^{(1-\eps)n}$ by \re{1} (we explain in \S\ref{subsubsec:zetmob} why this is useful). 
Note this even narrows down the number of candidates for $S^L$ if all bins have polynomially bounded \emph{slack}, 
i.e.,  $c_j - w(S_j) \leq \poly(n)$ since the number of possibilities of
$w(S^L)$ is only $\poly(n)$ as $m=\Oh(1)$.

But generally this strategy does not work whenever a bin has a large
\emph{slack}, that is when $c_j - w(S_j)$ is large. While reductions in several
similar situations were able to turn inequalities into equalities via general
rounding techniques (such as~\cite{NederlofLZ12,VVWW10}), we need a more sophisticated method in this paper to deal with this issue: The idea of~\cite{NederlofLZ12} is to divide the weights by roughly $c_j - w(S_j)$ and (conservatively) round to an integer. In this case, the bin $j$ has small slack with respect to the rounded weight function.
The major complication however is that for different bins we would then need to
work with differently rounded weight functions, which still does not allow us to
narrow down the number of options for $w(S^L)$ and
hence (via \re{1}) the number of candidates for $S^L$.

Instead, for an integer $k$ we work with a rounded version $w_k$ of weights $w$
where $w_k(i)$ is obtained from $w(i)$ by only keeping the $k$ most significant
bits.  We will show we can choose integer $\prun$ (which we call \emph{critical
pruner}) such that $|w_\prun(2^{\setrng{n}})|\approx 2^{\delta n}$, for some parameter
$\delta$ that depends on $m$. We will deal with the bins in two different ways
depending on whether it has large slack (i.e. its slack is at least approximately $n \cdot
2^{\ell-\prun}$, assuming all weights are $\ell$-bit integers) or not:

\begin{itemize}
	\item \textbf{Large Slack Bins:} Our approach for such bins is loosely inspired
        by rounding approximation algorithms, e.g. the FPTAS for Knapsack (see e.g.~\cite[Section 11.8]{0015106}).
        Observe that if some bin has large slack, we can split it into two parts, and we only need to keep track of the rounded weight of these parts in
        order to verify whether they indeed jointly fit into the bin. Because we
        assumed the \emph{upper bound} $|w_\prun(2^{\setrng{n}})| \le 2^{\delta n}
        \cdot \poly(n)$ we can afford
        to keep track of all combinations of rounded weights as long as $\delta < 1/m$.
        
    \item \textbf{Small Slack Bins:} We deal with all small slack bins jointly by considering a split of the bins $(L,R)$ such that and all bins in $L$ have small slack. Now we use the \emph{lower bound} $|w_\prun(2^{\setrng{n}})| \ge 2^{\delta n} /\poly(n)$ and our additive
        combinatorics result guarantees $\beta(w_\prun) \leq
        2^{(1-\eps(\delta))n}$ for some $\eps(\delta) > 0$.  Now, we use the
        fact that all bins have small slack. Note, that there are only
        $\poly(n)$ candidates for $w_\prun(S^L)$ and therefore
        there are at most $\Os(2^{(1-\eps(\delta))n})$ candidates for
        $S^L$, which can be algorithmically exploited.
\end{itemize}

In this informal discussion, we omitted several nontrivial technical issues.
In particular, to deal with instances with \emph{both} a substantial
number of small slack bins and large slack bins, we need to distinguish several additional cases. Due to the subtle technical issues, we need to deal with each one of
them in slightly different ways. Details are postponed to
Section~\ref{sec:main}.

\subsubsection{Solving Instances that Satisfy~\re{1} and~\re{2}.}\label{subsubsec:zetmob}

We now discuss how the methods from~\cite{BjorklundHKK09, Nederlof16} can be
used to solve all instances that satisfy \re{1} and \re{2} in $\Os(
2^{(1-\sigma_m)n})$ time for some $\sigma_m > 0$.
An important subroutine from~\cite{BjorklundHKK09} is an algorithm that, given a
set family $\cW \subseteq 2^{\setrng{n}}$ and set of bins $L$, computes for all $W \in
\cW$ whether the items in $W$ can be divided among the bins in $L$. That is,  it
computes whether $W$ can be a candidate for $S^L=\bigcup_{j \in L}S_j$.
The running time of this algorithm is $\Oh(|\dc\cW|n)$, where $\dc\cW := \{X \subseteq W : W \in \cW\}$
is defined as the \emph{down-closure} of $\cW$. The analogous \emph{up-closure}
of all supersets of elements from $\cW$ is denoted with $\uc \cW$.
 Let us fix a solution $(S_1,\ldots,S_m)$. We consider two cases based on how
 `balanced'\footnote{The actual definition of $\alpha$-balancedness
 (Definition~\ref{def:abalsol}) will be independent of the ordering of the
 bins.} a solution is, with respect to a small parameter $0 \le \alpha \le 1/2$: 

\begin{figure}[t!]
	\centering
	\begin{subfigure}[b]{0.45\textwidth}
		\includegraphics[width=\linewidth]{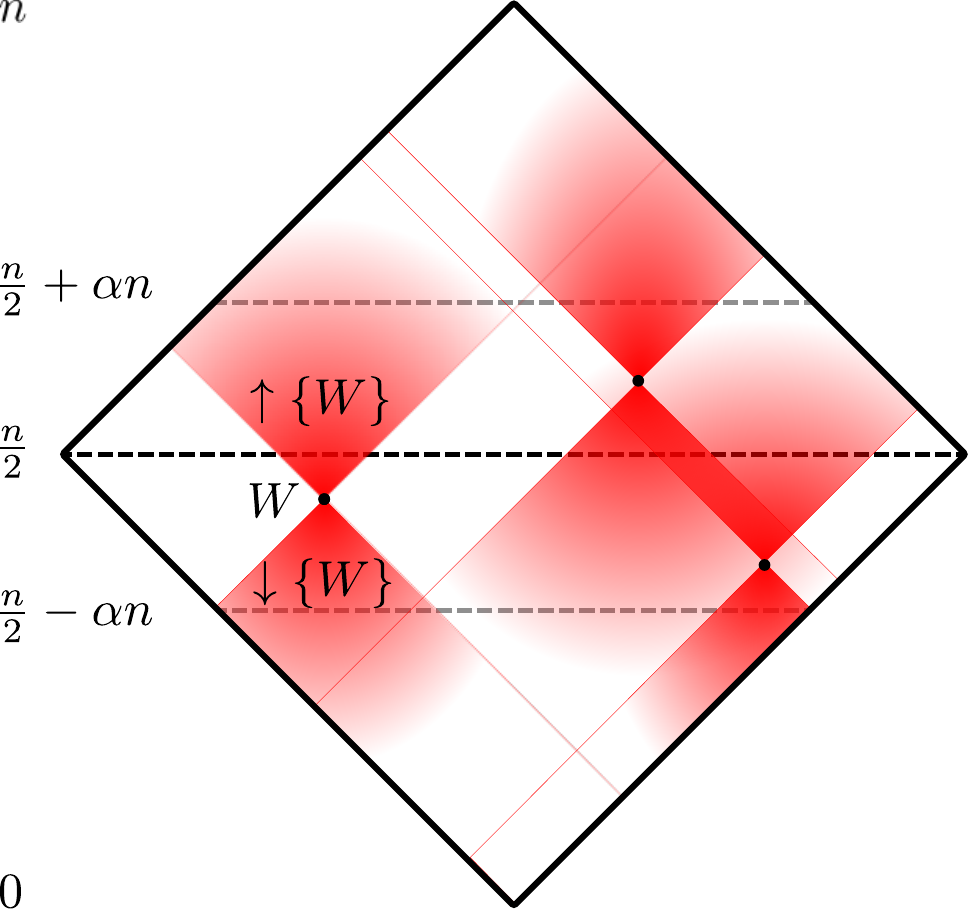}
	\end{subfigure}
    \hspace*{1cm}
	\begin{subfigure}[b]{0.45\textwidth}
		\includegraphics[width=\linewidth]{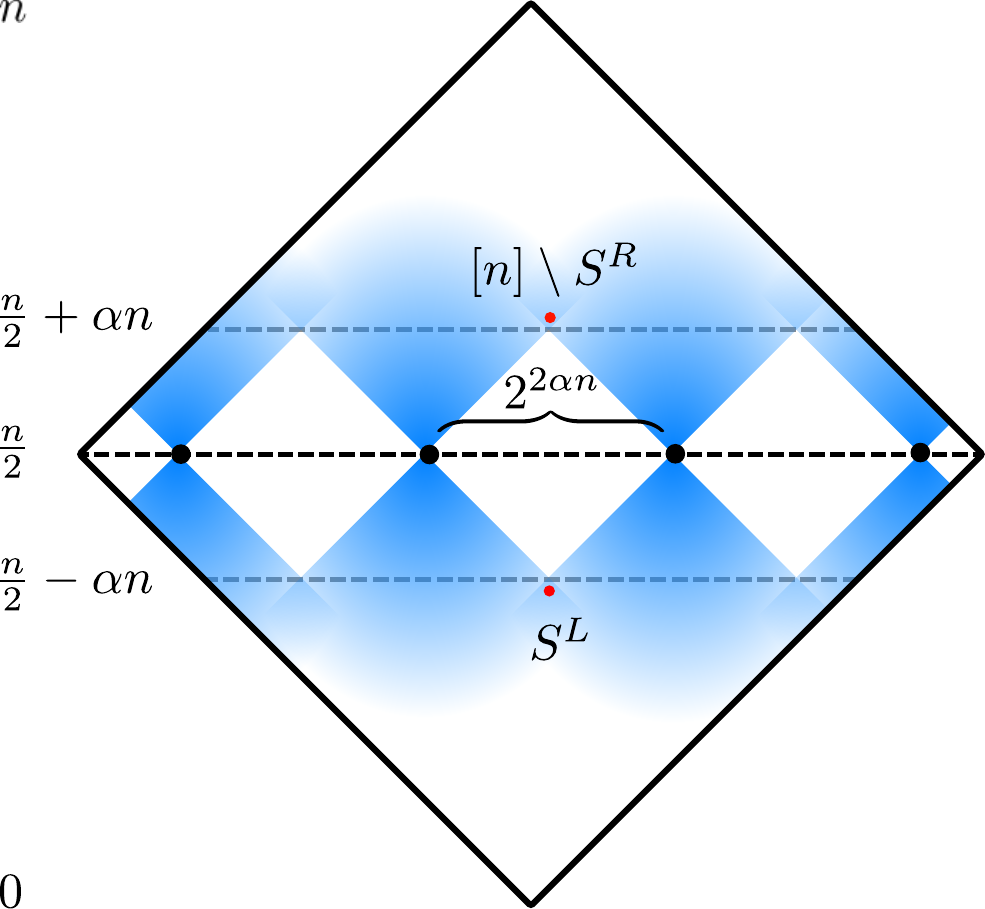}
	\end{subfigure}
	\caption{Schematic view of the algorithm from~\S\ref{subsubsec:zetmob} . A point in the square represents a set in
        $2^{\setrng{n}}$. The vertical axis corresponds to the cardinality of this set
        (e.g., longest horizontal line represents all sets in $\binom{\setrng{n}}{n/2}$).
		The left figure illustrates the analysis for
		the case when there exists an $\alpha$-balanced
        solution $W \subseteq \setrng{n}$ with $|n/2 - |W|| \le \alpha n$. We iterate through all
        $W$ in time proportional to area the of the colored region. The right figure
        illustrates the case of an $\alpha$-unbalanced solution.
		A division
        of the solution $(L,R)$ is witnessed by the roughly $2^{2\alpha n}$ sets $W$ in
    $\binom{\setrng{n}}{n/2}$ satisfying $S^L \subseteq W \subseteq \setrng{n} \setminus S^R$.}
	\label{fig:alg1}
\end{figure}

\subparagraph*{Case 1:} There exists an integer $b \in \{1,\ldots,m\}$ such that
$|n/2 - \sum_{j=1}^b |S_j|| \le \alpha n$.  In this case, observe
$\bigcup_{j=1}^bS_j$ is an element of 

\begin{displaymath} 
    \mathcal{W} \coloneqq \left\{Y \subseteq \{1,\ldots,n\} \text{ such that }
        w(Y) = \sum_{j=1}^b c_j \text{ and } \left| \frac{n}{2} - |Y|\right| \le \alpha n\right\}.
\end{displaymath}

Moreover, by~\re{1} we have $|\mathcal{W}| \le \beta(w) \le 2^{(1-\eps)n}$.
This means that we can enumerate $\cW$ in essentially $2^{(1-\eps)n}$ time,
because we will present an $\Oh((|\dc \cW| + |\uc\cW|)n)$ time algorithm that
for each $W  \in \cW$ computes whether $W$ can divided among bins $1,\ldots,b$
and $\{1,\ldots,n\} \setminus W$ among bins $b+1,\ldots,m$ (this algorithm is
based on techniques from~\cite{BjorklundHKK09}). This will detect a solution if
it exists. We bound the running time using the property $n/2 - \alpha n \le |W|
\le n/2 + \alpha n$. In this case we will show $|\dc \cW|+|\uc\cW| \leq
2^{(1-\eps')n}$ and hence the algorithm is fast enough (see left
Figure~\ref{fig:alg1} for an illustration).

\subparagraph*{Case 2:} For every $b \in \{1,\ldots,m\}$ we have $|n/2 - \sum_{j=1}^b |S_j|| > \alpha n]$.
Here we can use a method from~\cite{Nederlof16}: We let $\mathcal{W}$ consist of
$2^{(1-2\alpha)n}$ independently sampled subsets of $\{1,\ldots,n\}$ of cardinality $n/2$.
We answer \emph{yes} if there exist $W \in \cW$, disjoint sets
$S^L = S'_1,\ldots,S'_{b-1} \subseteq W$ and $S^R =S'_{b+1},\ldots,S'_{m}
\subseteq \{1,\ldots,n\} \setminus W$ such that $w(S'_j)=c_j$ for all $j \in
\{1,\ldots,m\}\setminus\{b\}$. This
condition can also be computed in $\Oh((|\dc \cW| +|\uc\cW|)n)$ time by the methods of~\cite{BjorklundHKK09}.
The crux is that both conditions together imply our instance is a
\emph{yes}-instance, since the remaining elements have total weight $c_b$ by
\re{2}. Moreover, by the balancedness assumption at least $2^{2\alpha
n}$ sets $W \subseteq \{1,\ldots,n\}$ with the above conditions exist.  Therefore the
random sampling will include such a $W$ with good probability (see right
Figure~\ref{fig:alg1} for an illustration).

\subsection{Related Work}
\label{subsec:relwork}

\paragraph{Littlewood--Offord, UDCP's, and Exponential Time Algorithms.}
Two sets $A,B \subseteq \{0,1\}^n$ form a Uniquely Decodable Code Pair (UDCP) if $|A+B|=|A|\cdot|B|$, where $A+B:= \{a+b: a\in A,b \in B\}$ (and addition is in $\mathbb{Z}^n$). The maximal sizes of UDCP's have been very well studied in information theory. See e.g.~\cite[Section 3.5.1]{schlegelgrant} for a (not so recent) overview. Two record upper bounds are $|A|\cdot |B| \leq 2^{1.5n}$ (from~\cite{Tilborg78}) and $|A| \leq 2^{(0.4228+\sqrt{\eps})n}$ whenever $|B| \leq 2^{(1-\eps)n}$ (from~\cite{AustrinKKN18}).
The study of UDCP's is relevant for this paper by the following connection shown
in~\cite{stacs2016}: For any vector $w \in \mathbb{Z}^n$, there is a UDCP $A,B
\subseteq \{0,1\}^n$ such that $|A|=|w(2^{\setrng{n}})|$ and $|B|=\beta(w)$.

A study of the trade-off between the parameters $|w(2^{\setrng{n}})|$ and $\beta(w)$
was already fruitful for obtaining improved exponential time algorithms in two
earlier papers in the context of the Subset Sum problem. In this problem one is
given $w \in \mathbb{Z}^n$ and a target integer $t$ and one needs to find a
subset $X \subseteq \setrng{n}$ such that $w(X)=t$.
First, the aforementioned paper~\cite{stacs2016} combined their connection to
UDCP's with the bound from~\cite{Tilborg78} to show that instances of Subset Sum
satisfying $|w(2^{\setrng{n}})| \geq 2^{0.997n}$ can be solved in $\Oh(2^{0.49991n})$
time, thereby improving the best $\Os(2^{n/2})$ worst case running time
from~\cite{HorowitzS74} for these instances.  Second, a slight variant of the
trade-off was used in~\cite{BansalGN018} to give a $\Oh(2^{0.86n})$ time
algorithm that uses polynomial space (assuming read-only access to 
the exponential number of random bits).

In a recent work by Jain et al.~\cite{jain2021anticoncentration}, the dependency of $\delta$ on $\eps$ in Theorem~\ref{lothm} was improved to $\delta(\eps) = \Oh(\sqrt{\eps})$. As a corollary, $\sigma_m$ in Theorem~\ref{mainthm} can be bounded with $\sigma_m = \Omega(m^{-12})$.

\paragraph{Exact Algorithms for Set Cover.}
Question~\ref{q1} was for the first time explicitly posed in~\cite{CyganDLMNOPSW16}, who showed that a negative answer to (a variant of) the question implies hardness in a fine-grained sense for the Subset Sum, Steiner Tree, and Connected Vertex Cover problems. A main motivation in~\cite{CyganDLMNOPSW16} for posing the question was a curious reduction showing that there is no improved algorithm for counting the number of Set Cover solutions modulo 2 unless improved algorithms for CNF-Sat exist (i.e. the Strong Exponential Time Hypothesis fails).
Later the assumption that no improved algorithm exists was dubbed as `Set Cover Conjecture' (see e.g.~\cite[Conjecture 14.36]{CyganFKLMPPS15}).
Since then, the conjecture has been used in several works, e.g. in~\cite{Abboud19,krauthgamer_et_al:LIPIcs:2019:10284}.

On the positive side, (especially for this work) important algorithmic tools were developed in~\cite{BjorklundHK09}: Fast zeta and M\"obius
transformations were introduced in the area of exponential time algorithms to
show that Set Cover can be solved in $2^{n}\cdot \poly(n)$ time even when the number of
sets in the input is exponential in $n$. One major consequence was a $2^n \cdot
\poly(n)$ time algorithm for computing whether an input graph on $n$ vertices
has a proper coloring with $k$ colors. While for $k \le 6$ faster algorithms exist~\cite{zamir:LIPIcs.ICALP.2021.113} this is still the fastest algorithm for $k > 6$.

Improved algorithms for solving Set Cover instances of sets with bounded cardinality were given in~\cite{Koivisto09}.
Later, this was generalized to improved algorithms for Set Cover instances where the optimum is linear in the universe size~\cite{Nederlof16}. Other instances that allow improved algorithms were also presented in e.g.~\cite{GolovnevKM14}.

\paragraph{Exact Algorithms for Bin Packing.}
In a textbook on exact exponential time algorithms, it was shown that Bin Packing
can be solved in time $\Oh(n2^{n} \cdot \max_{i}\{w(i)\})$ time~\cite[Section
4.2.3]{FominK10}.  A faster algorithm $\Os(2^n)$ time algorithm was given
in~\cite{BjorklundHK09}.  Even faster algorithms were given for $m=2,3$ in~\cite{LenteLST13}.

In~\cite{GoemansR14} it was shown that Bin Packing can be solved in polynomial
time if there are only a constant number of distinct weights of items.  Jansen et
al.~\cite{JansenKMS13} study Bin Packing with a constant number of bins and
bounded items weights was studied. They presented a dynamic programming algorithm
(similar to the one proposed by us in Lemma~\ref{lem:CaseA}) and show that it
runs in time $n^{\Oh(m)}$ if the item weights are polynomial in $n$.  This running time
cannot be improved to $n^{o(m / \log m)}$, unless the Exponential Time
Hypothesis fails~\cite{JansenKMS13}.

\paragraph{Heuristics for Bin Packing.}

The applications and combinatorial properties of Bin Packing have been studied since the
1930's~\cite{kantorovich}. To the best of our knowledge, the first attempt to
exactly solve Bin Packing with the assistance of the modern computer was developed
in the fifties by Eisemann~\cite{eisemann57}, with the motivation to trim losses in cutting
rolls of paper. Starting from the seventies, the research on exact algorithms
for Bin Packing focused on the branch-and-bound technique proposed by
Eilon and Christofides~\cite{eilon1971loading}. These heuristics work great in
practice. Nevertheless, there are no theoretical guarantees on their worst-case performance.

For a modern survey and experimental evaluations of the available software see
\cite{knapsack-book-martello,delorme2016bin}. 

\paragraph{Approximation Algorithms for Bin Packing.}

Bin Packing is one of the problems that initiated the study of approximation
algorithms. The earliest one is the \emph{First Fit} algorithm analysed by
Johnson~\cite{first-fit} that requires at most $1.7 \cdot \text{OPT} + 1$ bins. The
major breakthrough was done by
Karmarkar-Karp~\cite{karmarkar} who provided a polynomial time algorithm
that requires at most $\text{OPT} + \Oh(\log^2(\text{OPT}))$ bins. Recently, a
big leap forward was done by Rothvo{\ss}~\cite{rothvoss13} who
gave a polynomial time algorithm that requires only $\text{OPT} +
\Oh(\log(\text{OPT})\log\log(\text{OPT}))$ bins and Hoberg and
Rothvo{\ss}~\cite{hoberg17} who improved this even further to
$\text{OPT} + \Oh(\log(\text{OPT}))$ bins.

\subsection{Organization}
This paper is organized as follows:
In Section~\ref{sec:prel} we present some preliminaries and introduce some notations.
In Section~\ref{sec:main} we present the algorithm and proof of our main theorem, assuming Theorem~\ref{lothm}.
The latter theorem is proved in the next two Sections~\ref{sec:LO} and~\ref{sec:technical-section}. 
In Appendix~\ref{sec:pruner} we include the proofs of technical Lemmas
from Section~\ref{sec:main}. In Appendix~\ref{sec:inequalities} we include the proofs of
useful inequalities regarding binary entropy.

\section{Preliminaries}\label{sec:prel}

Throughout the paper, we use the $\Os$ notation to hide polynomial factors in the
input size. The notation $\Ot(T)$ means $\Oh(T \cdot \polylog(T))$. 
The number of bins is assumed to be constant, i.e. $m =\Oh(1)$. 
We say a function $f(\eps) = \Oh_{\eps
\rightarrow 0}(g(\eps))$ if there exists a positive number $C$ and sufficiently
small $\eps_0 > 0$, such that $|f(\eps)| \le C \cdot g(\eps)$ for all $\eps <
\eps_0$. We use $\Omega_{\eps \rightarrow 0}$ similarly to express lower
bounds. Finally $\poly(n)$ is a shorthand notation for $n^{\Oh(1)}$.
All the logarithms are base $2$ unless stated otherwise.

In this paper, we assume that basic
arithmetic operations take constant time. 
We use a result of Frank and Tardos~\cite{DBLP:journals/combinatorica/FrankT87},
in a similar way to \cite{etscheid2017polynomial}, to assume that $\max_i
\{\log{w(i)}\} \le \poly(n)$.
%

If $a,b \in \mathbb{R}$ and $b \geq 0$ we let $[a \pm b]$ denote the interval $[a-b,a+b]$.
If $A$ and $B$ are sets, we denote by $B^A$ the set of vectors indexed by $A$ with values from $B$,
and we will interchangeably address these vectors as functions from $A$ to $B$.
If $f \in B^A$ and $b \in B$ we denote $f^{-1}(b)  \coloneqq \{ a \in A: f(a) = b\}$ for its inverse evaluated at $b$.
For example, when $x \in \{0,1\}^{\{1,\ldots,n\}}$ then $x^{-1}(0) = \{ i \in
\{1,\ldots,n\} \; : \; x(i) = 0\}$,
If $x,y \in \mathbb{R}^A$ we denote $\langle x,y \rangle  \coloneqq \sum_{a \in A} x_a\cdot y_a$ for their inner product.

To quickly refer to the properties of a solution of a Bin Packing instance we use the following notations:
The function $w$ indicates the weights of the input. It is extended to sets $X
\subseteq \{1,\ldots,n\}$ by defining $w(X) \coloneqq \sum_{ i \in X}w(i)$ and to set families
$\mathcal{F} \subseteq 2^{\setrng{n}}$ by defining $w(\mathcal{F}) \coloneqq \{w(X) : X \in \mathcal{F}\}$.
We say a set $X\subseteq \setrng{n}$ of items can be \emph{divided} over bins $L \subseteq
\setrng{m}$ if there is a partition $X_1,\dots,X_{|L|}$ of $X$, such that for all $j \in
\{1,\dots,|L|\}$, the set $X_j$ can be placed in bin $j$, i.e., $w(X_j)\le c_j$. 

We abstract a simple probabilistic argument that we use several times in our algorithms. It follows easily by observing that a random element from $X$ is in $Y$ with probability $|X| / |Y|$.
\begin{observation}\label{lem:probabilities}
    Let $U$ be any universe set and let $Y \subseteq U$ be
    an arbitrary nonempty subset of $U$. Let $Z$ be a set obtained by sampling (with
    replacement) $\lceil |U| / |Y| \rceil$ times uniformly at random
    from $U$ (if at the end of this process an element repeats, we take a single
    occurrence of this element). Then $\Pr[Z\cap Y \not= \emptyset] \geq  1-\frac{1}{e}$. 
\end{observation}

\subsection{Preliminary Tools: Fast Transformations}

Our algorithm will crucially rely on the following algorithmic tools and definitions from~\cite{BjorklundHKK09}.
\begin{definition}[Zeta and M\"obius Transform]
	Let $f: 2^U \rightarrow \bbN$. Then the zeta transform $\zeta f$ and M\"obius transform $\mu f$ are functions from $2^U$ to $\bbN$ such that for every $X \subseteq U$:

    \begin{align*}
        (\zeta f)(X) \coloneqq \sum_{Y \subseteq X} f(Y)  &&\text{ and}&& (\mu f)(X) \coloneqq \sum_{Y \subseteq X} (-1)^{|U \setminus Y|} f(Y)
        .
    \end{align*}
    
\end{definition}

\begin{definition}
	Given $\cS \subseteq 2^U$, the \emph{down-closure} $\dc\cS$ and \emph{up-closure} $\uc\cS$ are defined as follows:
    \begin{align*}
        \dc\cS  \coloneqq \{ X : \exists S \in \cS \text{ such that } X \subseteq S   \} &&\text{and}&&
        \uc\cS  \coloneqq \{ X : \exists S \in \cS \text{ such that } X \supseteq S   \}.
    \end{align*}
\end{definition}

\begin{theorem}[Fast zeta and M\"obius transform~\cite{BjorklundHKK09}]\label{Thm:zetamob}
	Suppose that $f:2^U \rightarrow \bbN$ is such that $f(X)$ can be evaluated in $T$ time for any given $X \subseteq U$, and let $\cS \subseteq 2^U$ be a set family.
	There is an algorithm that can compute for every $X \in \dc\cS$ the values
    $(\zeta f)(X)$ and $(\mu f)(X)$. The algorithm runs in
    $\Oh(|\dc\cS|\cdot|U|\cdot T)$ time.
\end{theorem}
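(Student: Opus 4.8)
The plan is to adapt the classical "subset-sum convolution / Yates" dynamic programming for the zeta transform so that it only ever touches sets in $\dc\cS$, and to charge the running time to $|\dc\cS|$ rather than $2^{|U|}$. Write $U=\{u_1,\dots,u_k\}$. For $j\in\{0,1,\dots,k\}$ and $X\subseteq U$ define the partial zeta transform
\[
\zeta_j f(X) := \sum_{\substack{Y\subseteq X \\ Y\cap\{u_{j+1},\dots,u_k\} = X\cap\{u_{j+1},\dots,u_k\}}} f(Y),
\]
i.e. we have "summed over" membership of $u_1,\dots,u_j$ but not yet of $u_{j+1},\dots,u_k$. Then $\zeta_0 f = f$ and $\zeta_k f = \zeta f$, and the standard recurrence is
\[
\zeta_j f(X) = \zeta_{j-1}f(X) + [\,u_j\in X\,]\cdot \zeta_{j-1}f(X\setminus\{u_j\}).
\]
First I would observe that $\dc\cS$ is itself down-closed, so whenever $X\in\dc\cS$ and $u_j\in X$ we also have $X\setminus\{u_j\}\in\dc\cS$; hence the recurrence for $\zeta_j f(X)$ with $X\in\dc\cS$ only ever references values $\zeta_{j-1}f(\cdot)$ at sets that are again in $\dc\cS$. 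So the whole computation can be carried out on the array indexed by $\dc\cS$ alone.

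The algorithm then proceeds in $k$ rounds. We first compute $f(X)$ for every $X\in\dc\cS$ in $O(|\dc\cS|\,T)$ time by direct evaluation (this is where the hypothesis that $f$ is evaluable in time $T$ is used). In round $j$ we update, for every $X\in\dc\cS$, the entry to $\zeta_j f(X)$ using the recurrence above; the lookup $X\setminus\{u_j\}$ is into the same table and, by the down-closedness remark, always succeeds. To make each lookup $O(|U|)$ time one stores $\dc\cS$ in a dictionary keyed by the sets (e.g. a trie / radix tree on the characteristic vectors, or a balanced BST with $O(|U|)$-time comparisons), so a membership-and-retrieval costs $O(|U|)$. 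Each round does $O(|\dc\cS|)$ such operations, giving $O(|\dc\cS|\,|U|)$ per round and $O(|\dc\cS|\,|U|^2)$ overall — and with a trie the factor is actually $O(|\dc\cS|\,|U|)$ total for all rounds since the trie walk can be shared; either way this is within the claimed $O(|\dc\cS|\,|U|\,T)$ bound (and one may fold the evaluation cost $T\ge 1$ in). The Möbius transform is handled identically with the recurrence
\[
\mu_j f(X) = \mu_{j-1}f(X) - [\,u_j\in X\,]\cdot \mu_{j-1}f(X\setminus\{u_j\}),
\]
started from $f$ pre-multiplied by $(-1)^{|U\setminus X|}$, or — cleaner — by running the same subset-sum DP and noting $\mu f = $ the inverse transform, which has the identical support behaviour. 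Correctness is just the standard induction on $j$ establishing $\zeta_j f(X)=\sum\{f(Y): Y\subseteq X,\ Y\supseteq X\setminus\{u_1,\dots,u_j\}\}$, which at $j=k$ gives $\zeta f(X)=\sum_{Y\subseteq X}f(Y)$.

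The one genuine point requiring care — and the step I expect to be the main obstacle in a fully rigorous write-up — is the data-structure bookkeeping that makes "only touch $\dc\cS$" actually cost $O(|\dc\cS|\,|U|)$ rather than $O(|\dc\cS|^2)$ or $O(2^{|U|})$: one must store $\dc\cS$ so that for each $X$ and each $j$ the neighbour $X\setminus\{u_j\}$ is located in $O(|U|)$ time, and one must make sure that generating/representing $\dc\cS$ itself (given $\cS$) fits in the budget. Using a trie over $\{0,1\}^{|U|}$ resolves both: $\dc\cS$ can be built by inserting each $S\in\cS$ and then closing downward along the trie, and the zeta recurrence becomes a local operation between a trie node and its sibling subtree. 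Everything else is the textbook Yates algorithm, just bounded by the support size.
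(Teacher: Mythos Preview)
The paper does not give its own proof of this statement; it is quoted as a black box from~\cite{BjorklundHKK09}. Your proposal recovers exactly the argument of that reference: the Yates-style layered recurrence $\zeta_j f(X)=\zeta_{j-1}f(X)+[u_j\in X]\,\zeta_{j-1}f(X\setminus\{u_j\})$, restricted to $\dc\cS$, together with the observation that $\dc\cS$ is itself down-closed so the recurrence never leaves $\dc\cS$. This is correct and standard.

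One small point worth tightening: your first pass gives $O(|\dc\cS|\,|U|^2)$ because you pay $O(|U|)$ per dictionary lookup in each of the $|U|$ rounds, which overshoots the stated $O(|\dc\cS|\,|U|\,T)$ when $T=O(1)$. The clean $O(|\dc\cS|\,|U|)$ bound (as in~\cite{BjorklundHKK09}) comes from the trie representation you mention at the end: with $\dc\cS$ stored as a trie over $\{0,1\}^{|U|}$, the neighbour $X\setminus\{u_j\}$ in round $j$ is accessed via a constant-time pointer hop rather than a fresh $O(|U|)$ lookup, so each of the $|U|$ rounds costs $O(|\dc\cS|)$. You should commit to that version rather than leaving both analyses on the table.
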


\begin{definition}[Cover and Entry-Wise Product]
	Given $f,g : 2^{U} \rightarrow \bbN$, the cover product $f \coverprod g =h$ and the
    entry-wise product $f \cdot g = h'$ are the functions $h,h': 2^{U} \rightarrow \bbN$ such that
	\begin{align*}
        h(Z)  \coloneqq \sum_{X \cup Y = Z} f(X) g(Y) &&\text{and}&& h'(Z)  \coloneqq f(Z) \cdot g(Z).
	\end{align*}
\end{definition}

\begin{theorem}[\cite{BjorklundHKK07}]\label{thm:coverproductzeta}
	$\mu((\zeta f)\cdot (\zeta g)) = f \coverprod g$.
\end{theorem}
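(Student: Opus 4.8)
The plan is to deduce the identity from a single, more transparent fact: the zeta transform carries the cover product to the pointwise (dot) product. Concretely, I would first prove the auxiliary identity
\[
\zeta(f *_c g) \;=\; (\zeta f)\cdot(\zeta g),
\]
and then apply the M\"obius transform to both sides. Since $\mu$ is the two-sided inverse of $\zeta$ on the subset lattice $2^U$ --- this is exactly M\"obius inversion over the Boolean lattice, under the standard convention for $\mu$ for which $\mu\zeta$ is the identity --- applying $\mu$ gives $f *_c g = \mu\big(\zeta(f *_c g)\big) = \mu\big((\zeta f)\cdot(\zeta g)\big)$, which is the claim.

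To prove the auxiliary identity I would fix $Z\subseteq U$ and expand the left-hand side directly from the definitions:
\[
\big(\zeta(f *_c g)\big)(Z)\;=\;\sum_{W\subseteq Z}(f *_c g)(W)\;=\;\sum_{W\subseteq Z}\ \sum_{X\cup Y=W}f(X)g(Y).
\]
The key observation is that in this double sum every ordered pair $(X,Y)$ with $X\cup Y\subseteq Z$ is counted exactly once: it contributes only for the unique choice $W=X\cup Y$, and that set lies in the outer summation range precisely when $X\cup Y\subseteq Z$. Hence the double sum equals $\sum_{X,Y\,:\,X\cup Y\subseteq Z}f(X)g(Y)$, and using the elementary equivalence $X\cup Y\subseteq Z \Leftrightarrow (X\subseteq Z \wedge Y\subseteq Z)$ this factors as
\[
\Big(\sum_{X\subseteq Z}f(X)\Big)\Big(\sum_{Y\subseteq Z}g(Y)\Big)\;=\;(\zeta f)(Z)\cdot(\zeta g)(Z),
\]
which is the right-hand side evaluated at $Z$. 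As $Z$ was arbitrary, the auxiliary identity follows.

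There is no genuinely hard step here; the statement is a short algebraic manipulation. The only points deserving care are (i) the bookkeeping when interchanging the order of summation --- spelling out the ``each pair $(X,Y)$ counted once'' argument and the equivalence $X\cup Y\subseteq Z \Leftrightarrow (X\subseteq Z\wedge Y\subseteq Z)$ --- and (ii) correctly invoking that $\mu$ and $\zeta$ are mutually inverse on $2^U$, which is what licenses the final composition. One could alternatively argue entirely pointwise, expanding $\mu\big((\zeta f)\cdot(\zeta g)\big)(Z)$ and collecting the telescoping sign $\sum_{X\cup Y\subseteq W\subseteq Z}(-1)^{|Z\setminus W|}$, which vanishes unless $X\cup Y=Z$; but the route through the auxiliary identity is cleaner and is the one I would write up.
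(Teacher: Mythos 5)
Your proposal is correct, but there is nothing in the paper to compare it against: Theorem~\ref{thm:coverproductzeta} is imported from Bj\"orklund et al.~\cite{BjorklundHKK07} and stated without proof, so your write-up supplies the (standard) argument rather than paralleling one. The core step is exactly right: expanding $\zeta(f *_c g)(Z)$, noting that each ordered pair $(X,Y)$ with $X\cup Y\subseteq Z$ is counted once via the unique $W=X\cup Y$, and factoring through the equivalence $X\cup Y\subseteq Z \Leftrightarrow (X\subseteq Z\wedge Y\subseteq Z)$ gives $\zeta(f*_c g)=(\zeta f)\cdot(\zeta g)$. The only point to make explicit is the inversion $\mu\zeta=\mathrm{id}$, which you invoke but do not verify against the paper's own definitions: as printed in the preliminaries the M\"obius transform carries the sign $(-1)^{|U\setminus Y|}$, under which one gets $(\mu\zeta f)(X)=(-1)^{|U\setminus X|}f(X)$ rather than $f(X)$; the identity (and hence the theorem as literally stated) requires the standard exponent $|X\setminus Y|$, so the paper's definition is evidently a typo, but your hedge ``under the standard convention'' is doing real work and deserves a one-line verification. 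Your alternative pointwise route --- expanding $\mu\big((\zeta f)\cdot(\zeta g)\big)(Z)$ and using that $\sum_{X\cup Y\subseteq W\subseteq Z}(-1)^{|Z\setminus W|}$ vanishes unless $X\cup Y=Z$ --- avoids citing the inversion altogether and would make the write-up fully self-contained.
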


\begin{theorem} \label{thm:ZetaMobBins} Suppose that we have a Bin Packing
    instance with bin capacities $c_1,\dots,c_m$ and item weight function $w$.
    Then for any $B \subseteq \setrng{m}$ and set $\cW \subseteq 2^{\setrng{n}}$, computing
    for all $X \in \dc \cW$ whether $X$ can be divided over the bins in $B$ can
    be done in time $\Oh(|\dc \cW|n)$. Similarly, for any $B\subseteq \setrng{m}$ and set
    $\cW\subseteq 2^{\setrng{n}}$, computing for all $X \in \uc \cW$ whether
    $\setrng{n}\setminus X$ can be divided over the bins in $B$ can be done in
    time $\Oh(|\uc \cW|n)$.  
\end{theorem}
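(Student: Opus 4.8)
The plan is to reduce both statements to a single combinatorial computation: testing, for every set in a given family and its down-closure, whether that set can be partitioned into $|B|$ parts each fitting its bin. I would phrase this via the cover product. For each bin $j \in B$, define $f_j : 2^{[n]} \to \bbN$ by $f_j(X) = 1$ if $w(X) \le c_j$ and $f_j(X) = 0$ otherwise; each $f_j$ is evaluable in $O(n)$ time. Then the $|B|$-fold cover product $h := f_{j_1} *_c \cdots *_c f_{j_{|B|}}$ satisfies $h(Z) > 0$ if and only if $Z$ can be written as a union $X_1 \cup \cdots \cup X_{|B|}$ with $w(X_i) \le c_{j_i}$ for each $i$; since the $X_i$ need not be disjoint, any such covering can be refined to a partition by arbitrarily assigning each element of $Z$ to one block containing it, which only decreases the weight of the other blocks, so $h(Z)>0$ is exactly the condition ``$Z$ can be divided over the bins in $B$''. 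So the task is to compute (whether $h(Z)>0$) for all $Z \in \dc\cW$.

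The main step is to do this within the claimed $O(|\dc\cW| n)$ budget (recall $m = \Oh(1)$, so $|B| = \Oh(1)$ and the number of cover-product factors is constant). By Theorem~\ref{thm:coverproductzeta}, $h = \mu\big((\zeta f_{j_1}) \cdot (\zeta f_{j_2}) \cdots (\zeta f_{j_{|B|}})\big)$, where the dot product is pointwise. The plan is: first invoke Theorem~\ref{Thm:zetamob} with the family $\cS = \cW$ to compute $(\zeta f_{j_i})(X)$ for every $X \in \dc\cW$ and every $i$, in total time $O(|\dc\cW|\, n \cdot |B|) = O(|\dc\cW| n)$; then form the pointwise product $g(X) := \prod_i (\zeta f_{j_i})(X)$ for all $X \in \dc\cW$, again in $O(|\dc\cW| n)$ time (the values are bounded by $2^{O(n)}$, so arithmetic is polynomial, absorbed into the $n$ factor after adjusting constants, or one works with the $O(n)$-bit representations directly); finally apply the M\"obius-transform part of Theorem~\ref{Thm:zetamob} — note $g$ is now a function whose values on $\dc\cW$ we have tabulated, and $\dc\cW$ is downward closed, so $\mu g$ restricted to $\dc\cW$ depends only on values of $g$ inside $\dc\cW$ — to obtain $h(Z) = (\mu g)(Z)$ for all $Z \in \dc\cW$ in time $O(|\dc\cW| n)$. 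Reporting ``yes'' exactly when $h(Z) > 0$ gives the first claim.

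For the second claim (``$[n]\setminus X$ can be divided over the bins in $B$'' for all $X \in \uc\cW$), observe this is the mirror statement under complementation: $X \in \uc\cW \iff [n]\setminus X \in \dc\{[n]\setminus W : W \in \cW\}$, and $|\dc\{[n]\setminus W : W\in\cW\}| = |\uc\cW|$. So I would apply the first claim to the complemented family $\cW' := \{[n]\setminus W : W \in \cW\}$, which tests divisibility of every $Y \in \dc\cW'$, and translate back via $Y = [n]\setminus X$; the running time becomes $O(|\dc\cW'| n) = O(|\uc\cW| n)$.

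The one point needing care — and the place I would be most careful in the full write-up — is the interplay between Theorem~\ref{Thm:zetamob}, which computes $\zeta f$ and $\mu f$ on $\dc\cS$ for a function $f$ given by an evaluation oracle, and the fact that here the M\"obius transform is applied to $g = \prod_i \zeta f_{j_i}$, which is not given by a cheap oracle but by the table we just built on $\dc\cW$. The resolution is that the fast M\"obius transform over the downward-closed set $\dc\cW$ only ever accesses $g$ at subsets of members of $\dc\cW$, all of which lie in $\dc\cW$ and hence are tabulated; equivalently, one runs the standard subset-sum-style dynamic program restricted to $\dc\cW$. I would state this as a small lemma (``given a table of $g$ on a downward-closed family $\cF$, one can compute $\mu g$ on $\cF$ in $O(|\cF| |U|)$ time'') or simply cite the relevant internals of~\cite{BjorklundHKK09}, and then the theorem follows by assembling the three $O(|\dc\cW| n)$-time phases.
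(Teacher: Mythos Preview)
Your proposal is correct and follows essentially the same approach as the paper: define the indicator functions $f_j$, express divisibility via the cover product, compute it as $\mu\big(\prod_i \zeta f_{j_i}\big)$ using Theorem~\ref{thm:coverproductzeta} and the fast transforms of Theorem~\ref{Thm:zetamob} restricted to $\dc\cW$, and handle the second part by complementing $\cW$. Your write-up is in fact slightly more careful than the paper's in two places: you spell out why a covering with $h(Z)>0$ can be refined to a partition, and you state the complementation step as $|\dc\cW'| = |\uc\cW|$ (the paper writes $\dc\cW' = \uc\cW$, which is only true up to the complementation bijection); you also flag the point that the M\"obius transform only needs values of $g$ on the downward-closed family $\dc\cW$, which the paper leaves implicit.
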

\begin{proof}
    For all $j = 1,\dots,m$ define a function $f_j: 2^{\setrng{n}} \to \{0,1\}$ as  
	\[
	f_j(X) = \begin{cases} 1, &\text{ if } w(X) \le c_j \\ 0, &\text{ otherwise.}\end{cases}
	\] 
	Assume without loss of generality that $B = \{1,\dots,d\}$.	Notice that $X$
    can be divided over the bins in $B$ if and only if $(f_1\coverprod
    f_2\coverprod \cdots \coverprod f_d)(X) >0 $. By Theorem~\ref{thm:coverproductzeta} we have that 
	\[f_1\coverprod f_2\coverprod \cdots \coverprod f_d = \mu((\zeta f_1) \cdot (\zeta f_2 )\cdots (\zeta f_d)).\]
	Then, the right hand side can be computed in $\Oh(|\dc\cW|)$ time using
    subsequently fast $d$ zeta transformation (Theorem~\ref{Thm:zetamob}),
    na\"ive entry-wise product computation, and one fast M\"obius transformation
    (Theorem~\ref{Thm:zetamob}). The proof for the second part of the theorem
    one takes $\cW'  \coloneqq \{ \setrng{n}\setminus  W: W \in \cW\}$ and applies the technique above to $\cW'$. Notice that indeed $\dc\cW' = \uc \cW$.
\end{proof}

Note this can be used to obtain the algorithm already mentioned in
Section~\ref{sec:intro}. We include the proof to introduce the reader
to the state-of-the-art algorithm which will be expanded in the later
sections.
\begin{theorem}[\cite{BjorklundHKK09}]\label{thm:bpbl}
	Bin Packing with capacities $c_1,\ldots,c_m$ can be solved in $\Os(2^n)$ time.
\end{theorem}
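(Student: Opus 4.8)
The plan is to use Theorem~\ref{thm:ZetaMobBins} with $B = [m]$ and a carefully chosen family $\cW$, combined with the classical observation of Bj\"orklund, Husfeldt, Kaski and Koivisto that one can \emph{split} the set of bins into two halves and enumerate candidates for the union of one half. First I would reduce the problem to the decision version: we must decide whether $[n]$ can be divided over all $m$ bins. A direct application of Theorem~\ref{thm:ZetaMobBins} with $\cW = \{[n]\}$ would give running time $\Oh(|\dc\{[n]\}| \cdot n) = \Oh(2^n n)$, which is exactly the $\Os(2^n)$ bound we want. So in fact the theorem is essentially immediate: take $\cW := \{[n]\}$, so that $\dc\cW = 2^{[n]}$ has size $2^n$, and apply the first part of Theorem~\ref{thm:ZetaMobBins} with $B = [m]$; it computes in $\Oh(2^n n)$ time whether $[n]$ can be divided over bins $1,\dots,m$, which is exactly the Bin Packing question.

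The key steps, in order, are: (i) observe that a Bin Packing instance is a yes-instance if and only if the full item set $[n]$ can be divided over the bin set $[m]$; (ii) invoke Theorem~\ref{thm:ZetaMobBins} with $\cW = \{[n]\}$ and $B = [m]$, noting $\dc\cW = 2^{[n]}$ so $|\dc\cW| = 2^n$; (iii) conclude the running time is $\Oh(2^n n) = \Os(2^n)$. Unwinding Theorem~\ref{thm:ZetaMobBins}, this amounts to: define $f_j(X) = [w(X) \le c_j]$ for each bin $j$, compute the zeta transforms $\zeta f_1, \dots, \zeta f_m$ over the whole Boolean lattice in $\Oh(2^n n)$ time each, take their pointwise product, apply one M\"obius transform, and check whether the resulting value at $[n]$ is positive; by Theorem~\ref{thm:coverproductzeta} this value equals $(f_1 *_c \cdots *_c f_m)([n])$, which counts (with multiplicity) the ways to write $[n]$ as a union of sets each fitting in its respective bin, hence is positive iff a valid packing exists. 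Since $m = \Oh(1)$, the total time is $\Oh(2^n n) \cdot \Oh(m) = \Os(2^n)$.

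There is essentially no obstacle here — the statement is a corollary of the machinery already developed in Theorem~\ref{thm:ZetaMobBins}, which in turn rests on the fast zeta/M\"obius transform of~\cite{BjorklundHKK09} and the cover-product identity of~\cite{BjorklundHKK07}. The only point that warrants a sentence of care is that cover product counts unions rather than disjoint unions, so a positive value at $[n]$ only certifies that $[n] = X_1 \cup \dots \cup X_m$ with $w(X_j) \le c_j$; but from such a cover one extracts a genuine partition greedily (assign each item to the lowest-indexed $X_j$ containing it), and removing items from a bin only decreases its load, so the partition still respects all capacities. Thus the cover formulation is equivalent to the partition formulation for this monotone feasibility predicate, and the algorithm is correct.
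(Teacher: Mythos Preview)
Your proposal is correct and follows essentially the same approach as the paper: define $f_j(X)=[w(X)\le c_j]$, use $\mu((\zeta f_1)\cdots(\zeta f_m))=f_1*_c\cdots*_cf_m$ and check whether its value at $[n]$ is positive, computing everything in $\Os(2^n)$ via fast zeta/M\"obius transforms. The only cosmetic difference is that you route the argument through Theorem~\ref{thm:ZetaMobBins} with $\cW=\{[n]\}$ before unwinding it, whereas the paper writes the cover-product computation directly; your opening remark about splitting the bin set into two halves is extraneous and can be dropped.
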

\begin{proof}
    For $i=1,\ldots,m$ define the function $f_i: 2^{\setrng{n}} \rightarrow \{0,1\}$ as
	\[
	f_i (X) = \begin{cases}
	1, & \text{if } w(X) \leq c_i\\
	0, & \text{otherwise}.
	\end{cases}
	\]
    Note that $(f_1 \coverprod f_2 \coverprod \ldots \coverprod f_m)(\setrng{n})
    > 0$ if and only if the answer to Bin Packing is positive. By Theorem~\ref{thm:coverproductzeta} we have that
	\[
	f_1 \coverprod f_2 \coverprod \ldots \coverprod f_m = \mu ((\zeta f_1) \cdot (\zeta  f_2) \ldots (\zeta \cdot f_m)),
	\]
	and the right hand side can be computed in $\Os(2^n)$ time using
    subsequently fast $m$ zeta transformations (Theorem~\ref{Thm:zetamob}),
    na\"ive entry-wise product computation, and one fast M\"obius transformation (Theorem~\ref{Thm:zetamob}).
\end{proof}

\subsection{The Entropy Function and Binomial Coefficients}
We heavily use properties of the entropy function, which we will now define.
For a discrete probability space $\mathcal{D}=(\Omega,p)$ where $p : \Omega
\rightarrow (0,1)$, the \emph{entropy} of $\mathcal{D}$ is defined as follows:
\begin{equation}\label{eq:entro}
    h(\mathcal{D}) \coloneqq -\sum_{x \in \Omega}p(x)\log p(x).
\end{equation}

We say $p=(p_1,\ldots,p_k)$ is a probability vector if the $p_i$'s are
non-negative and satisfy $\sum_{i=1}^k p_i=1$. If no underlying probability
space is given, we may interpret $p$ as a probability measure over
$\{1,\ldots,k\}$ and thus~\eqref{eq:entro} gives $h(p)=-\sum_{i=1}^k p_i\log
p_i$. The \emph{support} of the vector $p = (p_1,\ldots,p_k)$ the set of its
non-zero coordinates and the \emph{size} of the support is the number of
non-zero coordinates of $p$. If $p \in (0,1)$, we use the shorthand notation
$h(p) \coloneqq h(p,1-p)$. The multinomial coefficient $\binom{n}{p_1n,p_2n,\ldots,p_kn}$can be
approximated with $h(p)$ as follows:
	
\begin{lemma}[\cite{csiszar2004information}, Lemma 2.2]\label{lem:multvsentropy}
	If $p$ is a probability vector with support of size at most $s$, then
	\[
        \binom{n+s-1}{s-1}^{-1} 2^{h(p)n} \leq \binom{n}{p_1 n,\ldots,p_k n} \leq 2^{h(p)n}.
	\]
\end{lemma}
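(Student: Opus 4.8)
The plan is to derive both inequalities from a single identity: expanding $1 = (p_1 + \dots + p_s)^n$ by the multinomial theorem gives $1 = \sum_{k_1 + \dots + k_s = n} \binom{n}{k_1,\dots,k_s} \prod_{i=1}^s p_i^{k_i}$, where the sum ranges over all $s$-tuples of nonnegative integers summing to $n$. Write $T(k_1,\dots,k_s) := \binom{n}{k_1,\dots,k_s}\prod_{i} p_i^{k_i}$ for a generic term. The point I would emphasize first is the bookkeeping identity $\prod_i p_i^{p_i n} = 2^{\sum_i (p_i n)\log p_i} = 2^{-h(p)n}$, so the term at the tuple $(p_1 n, \dots, p_s n)$ equals exactly $\binom{n}{p\cdot n}\, 2^{-h(p)n}$ (this tuple is integral since otherwise $\binom{n}{p\cdot n}$ is undefined, so it is a legitimate index in the sum).

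The upper bound is then immediate: a single nonnegative term is at most the whole sum, which is $1$, so $\binom{n}{p\cdot n}\,2^{-h(p)n} \le 1$, i.e. $\binom{n}{p\cdot n} \le 2^{h(p)n}$.

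For the lower bound, the key step is to show that $(p_1 n, \dots, p_s n)$ is the \emph{maximizer} of $T$ among all valid tuples. I would prove this by a local exchange argument: if $(k_1,\dots,k_s)$ is any valid tuple different from $(p_1 n,\dots,p_s n)$, then since $\sum_i k_i = \sum_i p_i n = n$ there must be an index $i$ with $k_i > p_i n$ and an index $j$ with $k_j < p_j n$; moving one unit from coordinate $i$ to coordinate $j$ multiplies $T$ by $\frac{k_i}{k_j+1}\cdot\frac{p_j}{p_i}$, and the integrality bounds $k_i \ge p_i n + 1$ and $k_j + 1 \le p_j n$ make this factor strictly larger than $1$. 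Since the move keeps the tuple valid and strictly decreases $\sum_i |k_i - p_i n|$ (by exactly $2$), iterating it terminates at $(p_1 n,\dots,p_s n)$ while only increasing $T$, so $T(k_1,\dots,k_s) < T(p_1 n,\dots,p_s n)$. Finally, a stars-and-bars count shows there are exactly $\binom{n+s-1}{s-1}$ valid tuples, so from $\sum T = 1$ the maximal term is at least $\binom{n+s-1}{s-1}^{-1}$; hence $\binom{n}{p\cdot n}\,2^{-h(p)n} \ge \binom{n+s-1}{s-1}^{-1}$, which rearranges to the claimed lower bound.

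I expect the only part that needs any care is the exchange argument identifying the maximizer — specifically checking that the multiplicative factor is strictly above $1$ (which uses integrality of each $p_i n$) and that the iteration genuinely reaches $(p_1 n,\dots,p_s n)$ rather than getting stuck. The multinomial identity, the entropy rewriting, and the counting of tuples are all routine, and the two rearrangements at the end are one line each.
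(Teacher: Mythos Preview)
Your proof is correct; it is in fact the standard method-of-types argument from the cited reference (Csisz\'ar--K\"orner), expanding $1=(\sum_i p_i)^n$ multinomially and identifying the term at $(p_1n,\ldots,p_sn)$ as both a single summand ($\le 1$) and the maximal one among $\binom{n+s-1}{s-1}$ terms. The paper itself does not prove this lemma but merely cites it as a known result, so there is no in-paper proof to compare against.
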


We will frequently use the special case $\binom{n}{pn} \leq 2^{h(p)n}$ when $p \in (0,1)$.

The following lemma states the intuitive fact that close probability vectors have close entropy.
\begin{lemma}\label{lem:entclose}
	Let $p,q \in \mathbb{R}^k$ be probability vectors such that $|p_i-q_i| \leq
    \eps$ for each $i=1,\ldots,k$. Then $|h(p)-h(q)| \leq \frac{1}{\ln(2)} \cdot
    \eps k \log \tfrac{1}{\eps}$.
\end{lemma}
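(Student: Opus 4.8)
The plan is to reduce everything to a one-dimensional estimate on the scalar function $\phi(x) := -x\log x$ (with the convention $\phi(0)=0$, and $\log=\log_2$ as everywhere in the paper). Writing $h(p)-h(q)=\sum_{i=1}^k\bigl(\phi(p_i)-\phi(q_i)\bigr)$ and applying the triangle inequality, it suffices to show that $|\phi(a)-\phi(b)|\le \eps\log(1/\eps)$ (up to lower-order terms) whenever $a,b\in[0,1]$ and $|a-b|\le\eps$, and then to sum this over the $k$ coordinates.

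For the one-dimensional estimate, the key point is to control $\phi$ near $0$, where $\phi'$ blows up. I would use that $\phi$ is concave on $[0,\infty)$ (indeed $\phi''(x)=-1/(x\ln 2)<0$) and satisfies $\phi(0)=0$; a concave function vanishing at the origin is subadditive, so for $a\ge b\ge 0$ we get $\phi(a)=\phi\bigl(b+(a-b)\bigr)\le \phi(b)+\phi(a-b)$, i.e.\ $\phi(a)-\phi(b)\le\phi(a-b)$. Since $\phi$ is increasing on $[0,1/e]$ and $\eps$ may be assumed small, $\phi(a-b)\le\phi(\eps)=\eps\log(1/\eps)$; combining with the analogous inequality with $a,b$ swapped gives $|\phi(a)-\phi(b)|\le\eps\log(1/\eps)$. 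Summing over the $k$ coordinates yields $|h(p)-h(q)|\le k\eps\log(1/\eps)$, which is already the statement up to the value of the constant.

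To sharpen the constant from $1$ to $\ln 2$ (note $\ln 2<1$, so the stated bound is the stronger one), I would exploit the constraint $\sum_i p_i=\sum_i q_i=1$, which the crude triangle-inequality bound ignores. Letting $U=\{i:q_i>p_i\}$, the total increase $\sum_{i\in U}(q_i-p_i)$ equals the total decrease $\sum_{i\notin U}(p_i-q_i)$, and since each coordinate moves by at most $\eps$ this forces $|U|\le k/2$. Only coordinates in $U$ can contribute an amount comparable to $\phi(\eps)$ to $h(q)-h(p)$; on the remaining coordinates $\phi(q_i)-\phi(p_i)$ is non-positive except on the interval around $x=1/e$, where $|\phi'|$ is $O(\eps)$, so those contribute only $O(k\eps)$ in total. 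Hence $h(q)-h(p)\le \tfrac12 k\eps\log(1/\eps)+O(k\eps)\le \ln(2)\,k\eps\log(1/\eps)$ once $\eps$ is small enough (using $\tfrac12<\ln 2$), and the same bound for $h(p)-h(q)$ follows by swapping $p$ and $q$. The main obstacle is exactly this refinement: one has to argue carefully that the ``downward-moving'' coordinates contribute only lower-order terms despite $\phi=-x\log x$ failing to be monotone on all of $[0,1]$ — which is handled by splitting off the neighbourhood of $1/e$ where $\phi'$ is small. If one is content with the constant $1$ in place of $\ln 2$, the first two paragraphs already constitute a complete proof.
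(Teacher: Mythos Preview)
Your first two paragraphs are essentially the paper's approach: reduce to the scalar function $\phi(x)=x\log(1/x)$, bound its increments coordinate-wise, and sum. The paper records the two-sided estimate $\phi(x)-O(\eps)\le\phi(x+\eps)\le\phi(x)+\eps\log\tfrac{1}{\eps}$; your subadditivity argument (concave with $\phi(0)=0$) is a clean way to get the right-hand inequality. One small gap: ``the analogous inequality with $a,b$ swapped'' only gives $\phi(\text{larger})-\phi(\text{smaller})\le\phi(|a-b|)$, not the full $|\phi(a)-\phi(b)|\le\phi(|a-b|)$, since $\phi$ is not monotone on $[0,1]$. You still need a one-line bound of the form $\phi(b)-\phi(a)\le O(\eps)$ for $a\ge b$ (e.g.\ by the mean value theorem, using that $|\phi'|\le 1/\ln 2$ on $[1/e,1]$ while $\phi$ is increasing on $[0,1/e]$); this is precisely the paper's left-hand inequality. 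With that fix, both arguments yield $|h(p)-h(q)|\le k\eps\log(1/\eps)$.

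Your third paragraph contains a genuine error: the claim ``$|U|\le k/2$'' is false. Equality of total upward and downward mass constrains the \emph{amount} that moves, not the \emph{number} of coordinates moving up. For instance, with $p=(1,0,\ldots,0)$ and $q=\bigl(1-\eps,\tfrac{\eps}{k-1},\ldots,\tfrac{\eps}{k-1}\bigr)$ every coordinate moves by at most $\eps$, yet $|U|=k-1$. So the refinement to the constant $\ln 2$ does not go through as written. You need not worry, though: the paper's own coordinate-wise summation also only yields the constant $1$ (summing the displayed bounds gives at best $k\eps\log(1/\eps)$), so the $\ln(2)$ factor in the statement appears to be a slip --- harmless, since every use of the lemma in the paper only needs the asymptotic shape $O(k\eps\log(1/\eps))$.
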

\begin{proof}
	Recall that $h(p)=\sum_{i=1}^k p_i \log \tfrac{1}{p_i}$.
	Thus the lemma follows by applying the following inequalities to all
    summands of the entropy of $p$ and $q$: When $x,\eps,x+\eps \in [0,1]$, then
    \begin{displaymath}
        x\log\tfrac{1}{x} -\frac{\eps}{\ln(2)} \leq  (x+\eps) \log \tfrac{1}{x+\eps} \leq x \log \tfrac{1}{x} + \eps \log \tfrac{1}{\eps}.
    \end{displaymath}
	The second inequality is direct, and the first inequality can be derived as
    \begin{displaymath}
        (x+\eps) \log \tfrac{1}{x+\eps} = x \log\tfrac{1}{x} + x
        \log\tfrac{x}{x+\eps}+ \eps \log\tfrac{1}{x+\eps} \geq x\log\tfrac{1}{x} - x
        \log(1+\tfrac{\eps}{x}) \geq x \log \tfrac{1}{x} - \frac{1}{\ln(2)}
        \cdot \eps,
    \end{displaymath}
	where in the last inequality we use the standard fact that for every $z \in \real$ it holds that $1+z \leq \exp(z)$.
\end{proof}

\section{Proof of Theorem~\ref{mainthm}}\label{sec:main}

In this section we prove our main theorem which we first restate for convenience:

\begin{theorem}    
 For every $m \in \nat$ there is a constant $\sigma_m >0$ such that every Bin Packing instance with $m$ bins can be solved in
    $\Oh(2^{(1-\sigma_m)n})$ time with high probability.
\end{theorem}

For the proof of Theorem~\ref{mainthm}, we combine four lemmas, each solving
particular types of instances (see Figure~\ref{tikz:MainThmProof} for an overview
of the algorithm). We will refer to these different types of instances by
Case~A, Case~B, Case~C and Case~D.   

This section is organized as follows:
In Subsection~\ref{sec:Definitions} we introduce definitions that will be used throughout this section, such as the key definition of \emph{$\alpha$-balanced solutions}. 
We then prove in Subsections~\ref{sec:CaseA} and \ref{sec:CaseB} that `easy' instances of
Bin Packing, namely those where $w$ generates relatively few distinct sums (Case~A) and those with $\alpha$-unbalanced solutions for some $\alpha>0$ (Case~B), can be solved
fast. We can therefore assume that there are only $\alpha$-balanced solutions
and that $|w(2^{\setrng{n}})| > 2^{\delta n}$ (for some $\delta < \frac{1}{m}$) in the rest of the section.

Subsection~\ref{sec:PrunedSlack} introduces a few more definitions, such as the
``slack of a bin'', which is the vacant capacity of a bin in a solution. This is
also where we define the `$\prun$-pruned item weights' as the bit representation
of the weights, pruned to the $\prun \in \nat$ most significant bits. The parameter
$\prun$ is then chosen such that $|w_\prun(2^{\setrng{n}})| = \Theta(n 2^{\delta n})$, as discussed in \S~\ref{subsubsec:liftr2}.
These definitions will be central in solving the remaining two types of instances. 

In Subsection~\ref{sec:manysmallslack}, we consider instances where at least
roughly half of the items are in a bin with small slack (Case C). This is where we use
the approach discussed in \S\ref{subsubsec:liftr1} and apply Theorem~\ref{lothm}
on the $\prun$-pruned item weights, to conclude that $\beta(w_\prun) \le
2^{(1-\epsilon)n}$ for some $\epsilon >0$. 

Subsection~\ref{sec:fewsmallslack} then solves instances where at least roughly
half of the items are in a bin with large slack (Case D). In the proof, we can split the
large slack bins into two parts, where we use the $\prun$-pruned item weights in
each of these parts to determine whether they fit. Because
$|w_\prun(2^{\setrng{n}})|$ is of order $2^{\delta n} $, there is of order $2^{\delta
m n}$ different tuples of weights. This we can keep track of since we assumed
$\delta < \frac{1}{m}$. Furthermore, we split the large slack bins into
two parts. This operation guarantees (that with a constant probability) we can
correctly guess the partition of the items in small slack bins into two parts. 

Finally, the proof of Theorem~\ref{mainthm} can be found in
Subsection~\ref{subsec:mainproof}, where we combine all these results by
selecting the appropriate values for $\delta$ and $\alpha$ based on the number of
bins (see Figure~\ref{tikz:MainThmProof} for overview of the algorithm).
\begin{figure}[t]
    \centering
     \begin{tikzpicture} [scale=0.6, every node/.style={scale=0.8}] 
		\draw [rounded corners, fill = white!100!black] (-.5,0) rectangle (4.5,1.5)
        node[pos=.5] {$|w(2^{\setrng{n}})| \le 2^{\delta n}$? };
		
		\draw [rounded corners, fill = white!80!black] (-4.5,-3) rectangle (0.5,-1.5) node[pos=.5] {Case A: Lemma~\ref{lem:CaseA}};
		\draw [rounded corners, fill = white!100!black] (3,-3) rectangle (9,-1.5) node[pos=.5] {$\alpha$-unbalanced solution?};
		
		\draw[->,thick]  (1,0) to (-1,-1.3);
        \node at (-1,-0.7) {yes};
		\draw[->,thick]  (3,0) to (5,-1.3);
        \node at (5,-0.7) {no};
		
		\draw [rounded corners, fill = white!80!black] (-.5,-6) rectangle (4.5,-4.5) node[pos=.5] {Case B: Lemma~\ref{lem:CaseB}};
		\draw [rounded corners, fill = white!100!black] (5.5,-6) rectangle (14.5,-4.5) node[pos=.5] {At least $(1/2-\alpha)n$ small slack items?};
		
		\draw[->,thick]  (5,-3) to (3,-4.3);
		\node at (3,-3.7) {yes};
		\draw[->,thick]  (7,-3) to (9,-4.3);
		\node at (9,-3.7) {no};
		
		\draw [rounded corners, fill = white!80!black] (3.5,-9) rectangle (8.5,-7.5) node[pos=.5] {Case C: Lemma~\ref{lem:CaseC}};
		\draw [rounded corners, fill = white!80!black] (11.5,-9) rectangle (16.5,-7.5) node[pos=.5] {Case D: Lemma~\ref{lem:CaseD}};
		
		\draw[->,thick]  (9,-6) to (7,-7.3);
		\node at (7,-6.7) {yes};
		\draw[->,thick]  (11,-6) to (13,-7.3);
		\node at (13,-6.7) {no};

    \end{tikzpicture}
    
    \caption{Overview of use of Lemmas proving Theorem~\ref{mainthm}. }
    \label{tikz:MainThmProof}
\end{figure}

\subsection{Balanced Solutions and Witnesses}\label{sec:Definitions}

Fix an instance of Bin Packing. We begin by formally defining solutions.

\begin{definition}[Solution]
    A partition $S_1,\dots,S_m$ of $\setrng{n}$ is a \emph{solution} of an instance of
    Bin Packing with $n$ items and $m$ bins, if for all $j \in \{1,\dots,m\}$
    the set $S_j$ fits into the bin $j$ (i.e., $w(S_j) \le c_j$).
\end{definition}

The following notion of a \emph{witness} will be crucial in our approach.

\begin{definition} [$(L,R)$-witnesses] 
    Let $L,R \subseteq \setrng{m}$, $L\cap R = \emptyset$. A set $W \subseteq \setrng{n}$ is an
    $(L,R)$-\emph{witness} if there is a solution $S_1,\dots,S_m$ such that
    \begin{align*}
        \bigcup_{j \in L} S_j \subseteq W &&\text{ and } && \bigcup_{j \in R} S_j \subseteq \setrng{n}\setminus W  
      .
    \end{align*}
\end{definition}
We commonly denote $S^L \coloneqq \bigcup_{j \in L} S_j$ and $S^R \coloneqq \bigcup_{j \in R} S_j$.
Observe that 
to verify that a set $W\subseteq \setrng{n}$ is an $(L,R)$-witness, it is
sufficient to find $S^L \subseteq W$ and $S^R \subseteq \setrng{n}\setminus W$
with the following properties:
(i) the items in $S^L$ can be distributed to the bins in $L$, (ii)
items $S^R$ can be distributed to the bins in $R$, and (iii)
items $\setrng{n}\setminus (S^L\cup S^R)$ can be distributed to the bins in $\setrng{m}\setminus (L\cup R)$.
Hence, finding a witness gives us a `certificate' for the existence of a solution. This will be used several times throughout this section. 
 
Our algorithmic approach will heavily depend on whether or not the set of items can be evenly divided, which we formalize as follows:
\begin{definition} [$\alpha$-balanced solution]\label{def:abalsol}
Let $S_1,\dots,S_m$ be a solution of Bin Packing. The solution is
$\alpha$\emph{-balanced} if for all permutations $\pi : \setrng{m} \to
\setrng{m}$ there
exists an index $b \in \setrng{m}$ such that $\sum_{j=1}^b |S_{\pi(j)}| \in [n/2 \pm \alpha n]$. If a solution is not $\alpha$-balanced, it is called $\alpha$\emph{-unbalanced}.
\end{definition}

Hence, a solution is $\alpha$-unbalanced if and only if there exists a
permutation $\pi : \setrng{m} \to \setrng{m}$ and a $b\in \setrng{m}$ such that $\sum_{j = 1}^{b-1} |S_{\pi(j)}| < (1/2 - \alpha)n$ and $ \sum_{j = 1}^{b} |S_{\pi(j)}|> (1/2 + \alpha)n$.

\subsection{Solving Case A: Few Distinct Sums}

 If the instance generates relatively few distinct sums in the sense that
 $|w(2^{\setrng{n}})| \le 2^{\delta n}$ for some small $\delta < 1/m$, we can solve Bin
 Packing sufficiently fast. This is the algorithm we use in Case A.

 \label{sec:CaseA}
\begin{case}
	\textbf{In Case A:} \; (\ding{51})  $|w(2^{\setrng{n}})| \le 2^{\delta n}$
\end{case}

\begin{lemma} \label{lem:CaseA}
	A solution of Bin Packing can be found in $\Oh(n \cdot m \cdot |w(2^{\setrng{n}})|^{m})$ time.
\end{lemma}
\begin{proof}
	As a first step, we compute the set $w(2^{\setrng{n}})$ in $\Oh(n \cdot |w(2^{\setrng{n}})|)$ time with
	Lemma~\ref{lem:computesums}.  Next, we use the following dynamic
	programming algorithm:
	
	For every $i \in \{0,\ldots,n\}$ and $a_1,\ldots,a_m \in w(2^{\setrng{n}})$, define 
	\begin{displaymath}
		\DP_i[a_1,\dots,a_m] \coloneqq \begin{cases}
			\mathtt{true} &\text{if items } \{1,\dots,i\} \text{ can be
				distributed to } \\& \text{bins with capacities } a_1,\ldots,a_m \text{ exactly},\\
			\mathtt{false} &\text{otherwise.}
		\end{cases}
	\end{displaymath}
	
	We initiate a dynamic programming table with $\DP_0[0,\ldots,0] =
	\mathtt{true}$ and the remaining entries of $\DP_0$ are set to
	$\mathtt{false}$, i.e., $\DP_0[a_1,\ldots,a_m] = \mathtt{false}$ if there
	exists $j \in \{1,\ldots,m\}$ with $a_j \neq 0$.  Then the following
	recurrence relation holds for every $i \in \{1,\ldots,n\}$ and
	$a_1,\ldots,a_m \in w(2^{\setrng{n}})$:
	
	\begin{displaymath}
		\DP_i[a_1,\ldots,a_m] = \bigvee_{j \in \{1,\ldots,m\}} \DP_{i-1}[a_1,\ldots, a_j - w(i),\ldots, a_m]
		.
	\end{displaymath}
	
	This concludes the description of the dynamic programming procedure. Observe
	that in the above recursion we only need to keep track of $\mathtt{true}$
	entries of table $\DP_i$ for every $i \in \{1,\ldots,n\}$. Hence, we only
	need to check entries with $a_1,\ldots, a_m \in w(2^{\setrng{n}})$, since
	those are all the possible sums that $w$ generates. Therefore, we have 
	$\DP_i(a_1,\ldots,a_m)=\mathtt{false}$ whenever $a_j \notin
	w(2^{\setrng{n}})$ for some $j \in \{1,\ldots,n\}$.  Since each entry of $\DP_i$ table can be
	computed in $\Oh(m)$ time, the running time follows.

	Finally, observe that when $c_1,\dots,c_m$ are the capacities of the bins of
	the bin packing instance, then there exists $a_1,\ldots,a_m$ such that
	$\DP_n(a_1,...,a_m) = \mathtt{true}$ and $a_j \le c_j$ for all $j \in
	\{1,\ldots,m\}$ if and only if there is a solution to bin packing instance.
	This condition can be verified by scanning $\Oh(w(2^{\setrng{n}}))$ many $\mathtt{true}$-entries of table $\DP_n$ .
\end{proof}

\subsection{Solving Case B: Unbalanced Solutions}
\label{sec:CaseB}

Next, we show that $\alpha$-unbalanced solutions (for some $\alpha>0$) can be detected quickly, i.e. we solve the instances in Case~B.

\begin{case}
	\textbf{In Case B:} \; (\ding{51}) The instance has an $\alpha$-unbalanced solution (for some $\alpha > 0$).
\end{case}
Note that we might also add the assumption that $|w(2^{\setrng{n}})| > 2^{\delta n}$ as Case A solves all other instances. However, we do not need this assumption for Case B, as we have the following result. 
\begin{lemma} \label{lem:CaseB}
	If a Bin Packing instance has an $\alpha$-unbalanced solution, then such a
	solution can be found in  $\Os(2^{(1-f_B(\alpha))n})$ time with probability $\ge
	1/2$ where $f_B(\alpha) = \Omega_{\alpha \to 0} \left (\frac{\alpha^2}{\log^2 (\alpha)} \right )$. 
\end{lemma}

\begin{proof}
	
	The algorithm iterates over all subsets $L, R \subseteq \setrng{m}$ such
	that $L\cap R = \emptyset$, $|L\cup R| = m-1$. Let $b\in \setrng{m}$ be the
	only element not in $L\cup R$. For each such $L$ and $R$, the algorithm
	will search for an $(L,R)$-witness of size $\frac{n}{2}$.
	Concretely, it samples a set $\cW$ of $2^{(1-2\alpha)n}$ random subsets (with replacement, and removing copies afterward) of
	$\setrng{n}$ of size $\frac{n}{2}$, and it computes for every $W\in \cW$ whether it is an $(L,R)$-witness as follows: First, it computes which sets from $\dc \cW \cup \uc\cW$ are potential candidates for $S^L$ and $S^R$. This is done by computing the booleans $l_X$ for every $X \in \dc\cW$ and $r_X$ for every $X \in \uc\cW$, where 
	\begin{align*}
		l_X &:= \begin{cases} \mathtt{true} & \text{ if }X \text{ can be
				distributed to the bins in } L, \\ \mathtt{false} & \text{ otherwise,}\end{cases}\\
		r_X &:= \begin{cases} \mathtt{true} & \text{ if } \setrng{n}\setminus X
			\text{ can be distributed to the bins in } R, \\ \mathtt{false} & \text{ otherwise.} \end{cases} 
	\end{align*}
	This can be done in time $\Oh((|\dc\cW|+|\uc\cW|)n)$ using Theorem~\ref{thm:ZetaMobBins}.
	
	Second, for each $W \in \cW$, we search for sets $X^L \subseteq W$ and $X^R
	\subseteq \setrng{n}\setminus W$ of maximum weight such that they can be distributed to the bins in $L$ and $R$ respectively. To do this, we compute $l^*_X$ for every $X \in \dc\cW$ and $r^*_X$ for every $X \in \uc\cW$, where 
	\[ l^*_X := \max_{Y \subseteq X: l_Y=\mathtt{true}} w(Y), \qquad r^*_X := \max_{Y \supseteq
		X: r_Y=\mathtt{true}} w(\setrng{n}\setminus Y).   \]
	This can be done using dynamic programming with the recurrence relations 
	\[l^*_X = \begin{cases} w(X) &\text{if }l_X = \mathtt{true}, \\ \max_{i \in X}
		l^*_{X\setminus \{i\}} &\text{if }l_X=\mathtt{false}, 
	\end{cases}  \qquad \text{ and } \qquad r^*_X = \begin{cases}
		w(\setrng{n}\setminus X) &\text{if }r_X = \mathtt{true}, \\ \max_{i \not
			\in X} r^*_{X\cup \{i\}} &\text{if }r_X=\mathtt{false}.
	\end{cases} \]
	The running time is only $\Oh((|\dc\cW|+|\uc\cW|)n)$ since the values $l^*_X$ for $X
	\in \dc \cW$ do not depend on entries $l^*_{Y}$ for $Y \notin \dc\cW$, and the values $r^*_X$ for $X \in \uc \cW$ do not depend on entries $r^*_{Y}$ for $Y \notin \uc\cW$. Thus the algorithm only needs to evaluate $|\dc\cW|+|\uc\cW|$ table entries which can be done in time $\Oh(n)$ per entry.
	
	Third, the algorithm checks if there exists a $W \in \cW$ such that
	$w(\{1,\ldots,n\}) - l^*_W -r^*_W \le c_b$ and returns \emph{yes} if this is the
	case. If for all different choices of $L$ and $R$, no $(L,R)$-witness has been
	found, the algorithm returns \emph{no}.
	
	\subsubsection*{Correctness of Algorithm}
	Assume that there is an $\alpha$-unbalanced solution $S_1,\ldots,S_m$. Let $\pi
	: \setrng{m} \to \setrng{m}$ be a permutation of the bins such that $\sum_{j = 1}^{b-1}
	|S_{\pi(j)}| < (1/2 - \alpha)n$ and $ \sum_{j = 1}^{b} |S_{\pi(j)}|> (1/2 +
	\alpha)n$ for some $b\in\setrng{m}$. Thus $|S_{\pi(b)}| \geq 2\alpha n$.
	Take $L = \{\pi(1),\dots,\pi(b-1)\}$ and $R = \{\pi(b+1),\dots,\pi(m)\}$.
	Recall the notation $S^L=\bigcup_{j = 1}^{b-1} S_{\pi(j)}$.
	Since for every $Y \in \binom{S_{\pi(b)}}{n/2-|S^L|}$ the set $Y \cup S^L$ is an
	$(L,R)$-witness of size $\frac{n}{2}$, there are at least
	$\binom{|S_{\pi(b)}|}{n/2 - |S^L|}$ $(L,R)$-witnesses of cardinality
	$\frac{n}{2}$, which is at least $\binom{2\alpha n}{\alpha n}$ since
	$|S_{\pi(b)}| \ge 2 \alpha n$ and $n/2 - |S^L| \geq \alpha n$. 
	Hence,  there are $\binom{n}{n/2}$ subsets of $\setrng{n}$ of cardinality $n/2$ and at least $\binom{2\alpha n}{\alpha n}$ of those are $(L,R)$-witnesses. 
	Observation~\ref{lem:probabilities} then gives that $\cW$ contains an $(L,R)$-witness with probability $\ge 1/2$ as $\cW$ is a family of $2^{(1-2\alpha)n} = 2^n / 2^{2\alpha n}$ random subsets of
	$\setrng{n}$ of size $\frac{n}{2}$.
Notice that for any witness $W$ it will hold that $\sum_{i=1}^n w(i) - l^*_W - r^*_W \le c_{\pi(b)}$ and so the algorithm will return yes if $W \in \cW$.
	
	Moreover, when the algorithm finds a $W \in \cW$ such that $\sum_{i=1}^n w(i) -
	l^*_W -r^*_W \le c_{\pi(b)}$, it means there exist sets $X^L \subseteq W$ and
	$X^R \subseteq \setrng{n}\setminus W$ that can be distributed to the bins of $L$ and $R$ respectively, such that $
	\setrng{n} \setminus (X^L \cup X^R)$ fits into bin $\pi(b)$. Therefore, $W$ is an $(L,R)$-witness and we proved the existence of a solution to the Bin Packing instance.  
	
	\subsubsection*{Running time Analysis}
	
	We are left to prove the running time of the algorithm. Recall that the algorithm
	will repeat the procedure above for all $m \cdot 2^{m-1}$ combinations of $L$ and $R$.
	The running time per one guess of $L$ and $R$ is dominated by
	$\Oh((|\dc\cW|+|\uc\cW|)n)$, hence we are left to prove that $|\dc\cW|+|\uc\cW| = \Oh(2^{(1-f_B(\alpha))n})$. For this we use Lemma~\ref{lem:boundeddownset} with $z = 2\alpha$ and $c=0$. This implies that 
	\[|\dc\cW|+|\uc\cW| \le \Oh\left(2^{(1-\rho(2\alpha,0))n}\right),\] where 
	\[\rho(2\alpha,0) = \frac{2}{\ln(2)} \cdot \left(\frac{2\alpha}{4\log(12/2\alpha)}\right)^2  = \Omega_{\alpha \to 0} \left (\frac{\alpha^2}{\log^2
		(1/\alpha)} \right) = \Omega_{\alpha \to 0} \left (\frac{\alpha^2}{\log^2
		(\alpha)} \right)\]
\end{proof}

\subsection{Pruned Item Weights and Slack} \label{sec:PrunedSlack}
The results from the previous subsection enable us to assume that both
$|w(2^{\setrng{n}})| \ge 2^{\delta n}$ for some small constant $\delta > 0$ (that we will
fix later) and that there is an $\alpha$-balanced solution for some $\alpha >0$.
To solve these instances of Bin Packing, we first need to define different
parameters of an instance that determine our proof strategy.

\begin{definition} [$s$-pruned item weights]
Let $l= 1+ \lceil \log(\max_i \{w(i)\}) \rceil $. For $s \in \{0,\dots,l\}$, define the $s$\emph{-pruned weight} of an item $i$ as
\[ w_s(i) := \lfloor w(i) / 2^{l-s} \rfloor.\]
\end{definition}
The $s$-pruned weight of item $i$ comes down to pruning the $l$-bit representation of $w(i)$ to the $s$ most significant bits. Indeed, $w_s(i) \le 2^s$, and $w_0(i) = 0$ for all items $i$ and $w_l =w$.
We will need the fact that the sequence
\[
    1= |w_0(2^{\setrng{n}})|, \;\; |w_1(2^{\setrng{n}})|, \;\; \ldots, \;\;
    |w_l(2^{\setrng{n}})| = |w(2^{\setrng{n}})|,
\]
is almost non-decreasing and relatively smooth. Observe that the sequence may
not be non-decreasing. For example when $w=(3,7,10)$ the number of bits is $l=5$, and
\begin{align*}
    &w_0=(0,0,0), &|w_0(2^{\setrng{n}})|&=1 \qquad\quad &&w_1=(0,0,0),
    &&|w_1(2^{\setrng{n}})|=1\\
    &w_2=(0,0,1), &|w_2(2^{\setrng{n}})|&=2 \qquad\quad &&w_3=(0,1,2),
    &&|w_3(2^{\setrng{n}})|=1\\
    &w_4=(1,3,5), &|w_4(2^{\setrng{n}})|&=8 \qquad\quad &&w_5=(3,7,10),
    &&|w_5(2^{\setrng{n}})|=7.
\end{align*}
Nevertheless, this is only an artifact of smaller-order rounding errors
and that the sequence in fact is \emph{smooth} in the following precise sense (see Lemma~\ref{lem:relationps}): For all $s \in \{1,\dots,l\}$:
\[\left(\frac{1}{3n}\right)|w_s(2^{\setrng{n}})| \le
|w_{s-1}(2^{\setrng{n}})| \le \left (\frac{3n}{2} \right
)|w_s(2^{\setrng{n}})|.
\]

A part of our strategy is to use techniques from Lemma~\ref{lem:CaseA} to deal
with mostly empty bins.  The analogous dynamic programming table needs
to be indexed by $w_s$ for some $s \in \{1,\dots,l\}$. To achieve this, we will need a
notion of \emph{precision}. The precision parameter we will use is the
following:

\begin{definition}[Critical pruner]
    Let $\delta \in (0,1)$ be a fixed parameter\footnote{Which we will be set later in Subsection~\ref{subsec:mainproof}}
    such that $|w(2^{\setrng{n}})|\ge 2^{\delta n}$. We define the \emph{critical
    pruner} $\prun$ as
    \[  \prun:= \prun(\delta) \coloneqq \min \left\{s \in \N :
            |w_s(2^{\setrng{n}}) |\ge
2^{\delta n}  \right\}. \]
\end{definition}

Observe, that $|w_\prun(2^{\setrng{n}})|=\Theta(n2^{\delta n})$ by
Lemma~\ref{lem:relationps} and the fact that $w_0(2^{\setrng{n}}) = \{0\}$.
Furthermore, by Corollary~\ref{cor:critPruner} the critical pruner $\prun$ can be computed in $\Os(2^{\delta n})$ time.

\begin{definition}[Slack]

    The \emph{slack} of a bin $j$ is $c_j - \sum_{i \in S_j} w(i)$. A bin has
    \emph{$\delta$-large slack} if it has slack at least $n \cdot 2^{l -
    \prun}$ and \emph{$\delta$-small slack} otherwise. 
    An item is a \emph{large} \emph{slack item} if it is in a bin of large slack and a
    \emph{small slack item} otherwise.
\end{definition} 

We often omit $\delta$ in the above notation, because $\delta$ will be fixed later in Subsection~\ref{subsec:mainproof}.
    
\subsection{Solving Case C: Balanced Solution with Many Small Slack Items} \label{sec:manysmallslack}
In the next lemma we will solve Bin Packing instances with at least $(1/2-\alpha)n$ small slack items. We will use this algorithm in Case C.

\begin{case}
	\textbf{In Case C:}\begin{minipage}[t]{0.8\linewidth}
		\begin{itemize}
			\item[(\ding{51})] $|w(2^{\setrng{n}})| > 2^{\delta n}$.
			\item[(\ding{51})] The instance has only $\alpha$-balanced solutions.
			\item[(\ding{51})] There are at least $(1/2-\alpha)n$ small slack items  with respect to $\prun$.
		\end{itemize}
	\end{minipage}
\end{case}

\begin{lemma} \label{lem:CaseC} 
	Suppose $|w(2^{\setrng{n}})| \ge 2^{\delta n}$ and $0 < \alpha \le
	2^{-2/\delta^3}$. If a Bin Packing instance has a solution that is
	$\alpha$-balanced and has at least $(1/2 - \alpha)n$ items with
	$\delta$-small slack, then such a solution can be found in time $\Os(2^{(1-f_C(\delta))n})$ for $f_C(\delta) = \Omega_{\delta \to 0} \left(2^{-3/\delta^3} \right)$.
	
\end{lemma}

\begin{proof} 
	Use Corollary~\ref{cor:critPruner} to compute the critical pruner $\prun$ in
	time $\Os(2^{\delta n})$. Then iterate over all combinations of sets $L,R
	\subseteq \setrng{m}$ that form a partition of $\setrng{m}$. For each such a partition, the algorithm searches for $(L,R)$-witnesses of size $[\frac{n}{2}\pm \alpha n]$ as follows:
	First, enumerate $\cW$, which is defined as
    \[\cW := \left \{ W \subseteq \setrng{n} : ||W| - \tfrac{n}{2}| \le \alpha n,\;
    \left(\sum_{j \in L} c_j/2^{l-\prun} - w_\prun(W) \right) \in \left [0, n \cdot (|L|+1) \right] \right\}. \]
	
	We can enumerate $\cW$ in time $\Oh(2^{n/2}+|\cW|)$ with a standard Meet-in-the-Middle approach (see e.g. \cite[Section 3.2]{generic-knapsack2} or Lemma~\cite[Lemma 3.8]{MuchaNPW19}).
	Next, for every $W\in \cW$ we determine whether $W$ is an $(L,R)$-witness. This is done by computing the boolean $l_X$ for every $X \in \dc\cW$ and $r_X$ for every $X \in \uc\cW$, where 
	\begin{align*}
		l_X &:= \begin{cases} \mathtt{true} & \text{ if }X \text{ can be divided
				over the bins in } L, \\ \mathtt{false} & \text{ otherwise,}\end{cases} \\
		r_X &:= \begin{cases} \mathtt{true} & \text{ if } \setrng{n}\setminus X
			\text{ can be divided over the bins in } R, \\ \mathtt{false} & \text{ otherwise.} \end{cases}
	\end{align*}
	
	Using Theorem~\ref{thm:ZetaMobBins} we can do this in time
	$\Oh((|\dc\cW|+|\uc\cW|)n)$. Next, the algorithm checks for all $W\in \cW$,
	whether $l_W = r_W = \mathtt{true}$, and if so the algorithm returns \emph{yes}. If for
	no partition $L,R$ of $\setrng{m}$ the algorithm finds a witness, the algorithm
	returns \emph{no}. 
	
	\subsubsection*{Correctness of Algorithm} 
	Assume that there is an $\alpha$-balanced solution $S_1,\ldots,S_m$. Let
	$\pi: \setrng{m} \to \setrng{m}$ be a permutation of the bins such that all bins with
	small slack have smaller index than the large slack bins, i.e. $ \pi(j)
	\le \pi(j')$ for all small slack bins $j$ and large slack bins $j'$.
	Since we assumed the solution to be $\alpha$-balanced, there exists a
	bin $b \in \setrng{m}$ such that $\sum_{j = 1}^{b} |S_{\pi(j)}| \in
	[\frac{n}{2} \pm \alpha n]$. Take $L = \{\pi(1),\dots,\pi(b)\}$ and $R =
	\{\pi(b+1),\dots,\pi(m)\}$. Notice that $S^L = \bigcup_{j=1}^b S_{\pi(j)}$
	is an $(L,R)$-witness. We will prove that in the iteration of the
	algorithm where the correct partition $L,R$ is chosen, it holds that $S^L \in \cW$. Since there are at least $(1/2 - \alpha)n$ small slack items, all bins in $L$ have small slack. Hence,
	\[\sum_{j=1}^b (c_{\pi(j)} - n\cdot2^{l-\prun}) \le w(S^L) \le \sum_{j=1}^b c_{\pi(j)}. \]
	We also have the bound on the pruned weights of the items
	\[ w_\prun (S^L) \le w(S^L) / 2^{l-\prun} \le \sum_{j=1}^b c_{\pi(j)}/2^{l-\prun}. \]
    On the other hand:
	\begin{align*}
	w_\prun(S^L) &\ge w(S^L)/2^{l-\prun} -n \\
	&\ge \sum_{j=1}^b ( c_{\pi(j)}/2^{l-\prun} - n\cdot 2^{l-\prun}) - n\\ &\ge \sum_{j=1}^b  c_{\pi(j)}/2^{l-\prun} - (b+1) \cdot n	.
	\end{align*}
 Combining this with the fact
	that $|S^L| \in [\frac{n}{2} \pm \alpha n]$, we conclude that the set $S^L$ 
	is present in
	\[\left \{ W \subseteq \setrng{n} : |W| \in [(\tfrac{1}{2} \pm
	\alpha)n],  w_\prun(W) \in \left [\sum_{j=1}^b c_j/2^{l-\prun} - (b+1)
	\cdot n, \sum_{j=1}^b c_j/2^{l-\prun} \right ] \right\} \]
    which matches the definition of $\cW$.
	
    Notice that $l_W = r_W = \mathtt{true}$ if and only if $W$ is an
    $(L,R)$-witness, since we chose $L$ and $R$ to partition $\setrng{m}$.
    Because $S^L\in \cW$, the algorithm always returns \emph{yes} in a
    yes-instance. Furthermore, when we find a $W$ s.t., $l_W =  r_W
    =\mathtt{true}$, we can conclude that there is a solution to the Bin Packing
    instance since all items are divided over all bins. 
	
	\subsubsection*{Running time Analysis}
	It remains to analyze the running time of the algorithm.
	Recall that the algorithm iterates over all $m \cdot 2^{m-1}$ combinations of $L$ and $R$. Each iteration takes $\Oh((|\dc \cW|+|\uc \cW|)n + 2^{n/2})$ time.
	
	Before we can prove that $|\dc \cW|+|\uc \cW| \le 2^{(1-f_C(\delta))n}$, we need to bound the size of $\cW$. 		
	Recall that $\prun$ is the critical pruner, and therefore by definition
	$|w_\prun(2^{\setrng{n}})|\ge 2^{\delta n}$. 
	Theorem~\ref{lothm} states that if $\beta(w_\prun) \ge 2^{(1-\eps')n}$,
	then $|w_\prun(2^{\setrng{n}})| \le 2^{\delta' n}$ where $\delta' = \Oh_{\eps' \to 0} \left(\frac{\log\log(1/\eps')}{\sqrt{\log(1/\eps')}}\right)$. 
	
	Because we assume that $\eps' \to 0$ we use a crude bound $\frac{\log\log(1/\eps')}{\sqrt{\log(1/\eps')}} \le
	\frac{1}{\sqrt[3]{\log(1/\eps')}}$ to guarantee that 
	\begin{displaymath}
		\delta' \le \Oh_{\eps' \to 0}\left(\frac{1}{ \sqrt[3]{\log(1/\eps')}}\right) 
		.
	\end{displaymath}
	Next, we manipulate this inequality to get:
	\begin{equation}\label{eq:deltaeps}
		\eps' \le \Oh_{\delta' \to 0}
		\left(2^{-\left(\frac{1}{\delta'}\right)^3}\right)
		.
	\end{equation}
	Now, we denote $\eps(\delta) \coloneqq
	2^{-\left(\frac{1}{\delta}\right)^3}$. By~\eqref{eq:deltaeps} there exists
	a constant $\delta_{0}> 0$ such that for all $\delta < \delta_0$ it
	holds that if $|w_\prun(2^{\setrng{n}})|\ge 2^{\delta n}$, then $\beta(w_\prun) \le 2^{(1-\eps(\delta))n}$.
	Note that because we only claim an asymptotic time bound in the lemma, we may assume that $\delta \leq \delta_0$.
	As a consequence, for fixed weight value $v$, there are at most
	$2^{(1-\eps(\delta))n}$ sets $W\subseteq \setrng{n}$ that have a weight
	$w_\prun(W)=v$. Because for all $W \in \cW$ it holds that $w_\prun(W) \in \left [\sum_{j=1}^b c_j/2^{l-\prun} - (b+1)
	\cdot n, \sum_{j=1}^b c_j/2^{l-\prun} \right ]$ there at at most $m n$ such weights $v$ and so $|\cW|\le mn\cdot2^{(1-\eps(\delta))n}$.
	
	Therefore, we have a set $\cW$ of size $\le mn\cdot2^{(1-\eps(\delta))n}$ and each set $W \in \cW$ has a size $|W|\in[\frac{n}{2}\pm \alpha n]$. Lemma~\ref{lem:boundeddownset} then bounds the sizes of $\dc \cW$ and $\uc \cW$:
	\[|\dc \cW| + |\uc \cW| \le \Os\left(2^{(1-\rho(\varepsilon(\delta),  \alpha))n}\right),\] 
	where $\rho(\varepsilon(\delta), \alpha) = \frac{2}{\ln(2)} \left( \frac{\varepsilon(\delta)}{4\log(12/\varepsilon(\delta))} - \alpha \right )^2$. 
	Recall that we ensured that $0 < \alpha \le 2^{-2/\delta^3}$. Hence, there is a small enough constant $\delta_0$ such that for any $\delta \in (0,\delta_0)$ we have 
	\[\alpha \le 2^{-2\delta^3} \le \frac{2^{-1/\delta^3}}{8\log(12/2^{-\delta^3})} = \frac{\varepsilon(\delta)}{8\log(12/\varepsilon(\delta))}.\]

	 Hence, $|\dc\cW|+|\uc\cW| = \Oh(2^{(1-f_C(\delta))n})$ where
	\begin{align*}
	f_C(\delta) &= \frac{2}{\ln(2)} \left( \frac{\varepsilon(\delta)}{4\log(12/\varepsilon(\delta))} - \frac{\varepsilon(\delta)}{8\log(12/\varepsilon(\delta))} \right )^2 \\
	&= \frac{\eps(\delta)^2}{32 \ln(2) (\log^2
		(12/\eps(\delta)))}\\
	&= \Omega_{\delta \to 0} \left(
	\frac{\eps(\delta)^2}{\log^2 (1/\eps(\delta))} \right) \\
	&= \Omega_{\delta \to 0} \left(
	\frac{2^{-2/\delta^3}}{\log^2(2^{(1/\delta)^3})} \right) \\
	&= \Omega_{\delta \to 0}
	\left( \delta^6 \cdot 2^{-2/\delta^3} \right).
	\end{align*}
	Since $f_C(\delta) \ll 1/2$ for $\delta \in [0,1]$, the
	$\Oh(2^{n/2})$ time bound is subsumed by the $\Oh(2^{(1-f_C(\delta))n})$
	term. Multiplying this term by $m 2^{m-1}$ different choices for $L$ and $R$, gives us the requested running time.  
\end{proof}

\subsection{Solving Case D: Detecting a Balanced Solution with Few Small Slack Items} \label{sec:fewsmallslack}

We are left to prove the remaining case, namely Case D, for which we assume that the following list of conditions holds.

\begin{case}
	\textbf{In Case D:} \begin{minipage}[t]{0.8\linewidth}
		\begin{itemize}
			\item[(\ding{51})] $|w(2^{\setrng{n}})| > 2^{\delta n}$.
			\item[(\ding{51})] Instance has only $\alpha$-balanced solutions.
			\item[(\ding{51})] There are at most $(1/2-\alpha)n$ small slack items with respect to $\prun$.
		\end{itemize}
	\end{minipage}
\end{case}

First, we observe the following property of an $\alpha$-balanced solution.

\begin{observation}
	\label{lem:largebins}
	Let $S_1,\dots,S_m$ be an $\alpha$-balanced solution for some
	$0<\alpha<\frac{1}{4m}$. Assume $k, k' \in \setrng{m}$ to be two different bins with the most items. Then either:
	\begin{enumerate}
		\item $|S_{k}|, |S_{k'}| \in [(1/2 \pm \alpha)n]$, or
        \item $|S_j| \le (\frac{1}{2}- \frac{1}{4m} )n$ for all bins $j \in \{1,\dots,m\}$.
	\end{enumerate}
\end{observation}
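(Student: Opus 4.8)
The plan is to argue by contradiction, assuming neither alternative holds, and derive a violation of the $\alpha$-balancedness of $S_1,\dots,S_m$ by exhibiting a bad permutation. Let $k$ be the bin with the most items and $k'$ the bin with the second-most items (ties broken arbitrarily), so $|S_k| \ge |S_{k'}| \ge |S_j|$ for every other bin $j$. Suppose alternative~2 fails, i.e.\ some bin has more than $(\tfrac12 - \tfrac1{4m})n$ items; then $|S_k| > (\tfrac12 - \tfrac1{4m})n$. I also suppose alternative~1 fails, and I want to rule this out. Since $|S_k| > (\tfrac12 - \tfrac1{4m})n > (\tfrac12 - \alpha)n$ (using $\alpha < \tfrac1{4m}$), the only way for alternative~1 to fail is $|S_k| > (\tfrac12 + \alpha)n$ or $|S_{k'}| \notin [(\tfrac12\pm\alpha)n]$.

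First I would handle the case $|S_k| > (\tfrac12 + \alpha)n$. Take any permutation $\pi$ that lists bin $k$ last, i.e.\ $\pi(m) = k$. Then for every $b \le m-1$ the partial sum $\sum_{j=1}^b |S_{\pi(j)}| \le \sum_{j \ne k}|S_j| = n - |S_k| < n - (\tfrac12 + \alpha)n = (\tfrac12 - \alpha)n$, so no prefix of length at most $m-1$ lands in $[(\tfrac12\pm\alpha)n]$; and the prefix of length $m$ equals $n > (\tfrac12 + \alpha)n$. Hence no $b \in [m]$ gives a partial sum in $[n/2 \pm \alpha n]$, contradicting $\alpha$-balancedness. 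So we may assume $|S_k| \le (\tfrac12 + \alpha)n$, and combined with $|S_k| > (\tfrac12 - \alpha)n$ we get $|S_k| \in [(\tfrac12\pm\alpha)n]$; thus the failure of alternative~1 must come from $|S_{k'}| \notin [(\tfrac12\pm\alpha)n]$, which (since $|S_{k'}| \le |S_k| \le (\tfrac12+\alpha)n$) forces $|S_{k'}| < (\tfrac12 - \alpha)n$.

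Now I would exploit $|S_{k'}| < (\tfrac12 - \alpha)n$ together with $|S_k| > (\tfrac12 - \tfrac1{4m})n$ to again build a bad permutation. Put bin $k$ at the front: $\pi(1) = k$. The prefix of length $1$ is $|S_k| > (\tfrac12 - \tfrac1{4m})n \ge (\tfrac12 - \alpha)n$... wait, here I need the prefix to \emph{exceed} $(\tfrac12 + \alpha)n$ to jump over the target window, so instead I arrange the remaining bins so that the second element in the permutation is \emph{not} too small; more carefully, since every bin other than $k$ has at most $|S_{k'}| < (\tfrac12 - \alpha)n$ items, and $k$ has $|S_k| = n - \sum_{j \ne k}|S_j|$, the idea is: order the bins as $k$ first, then the rest in any order. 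The length-$1$ prefix is $|S_k|$. If $|S_k| > (\tfrac12 + \alpha)n$ we are already done by the previous paragraph, so assume $|S_k| \in [(\tfrac12\pm\alpha)n]$; but then the length-$1$ prefix already lies in the window, which does \emph{not} contradict balancedness. This shows the front placement is the wrong move — the obstruction must instead be located by a counting argument on how the remaining $n - |S_k|$ items (spread over $m-1$ bins each of size $< (\tfrac12-\alpha)n$, in fact each $\le |S_{k'}|$) can fail to have a prefix sum, when combined before or after $S_k$, hitting $[n/2\pm\alpha n]$; in particular since $n - |S_k| < n - (\tfrac12 - \tfrac1{4m})n = (\tfrac12 + \tfrac1{4m})n$ is itself only slightly above $n/2$, and each individual $|S_j| \le |S_{k'}| < (\tfrac12-\alpha)n$, one shows a permutation not containing $k$ in its first $m-1$...

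The honest statement of the plan: the main obstacle is choosing, in the sub-case $|S_k| \in [(\tfrac12\pm\alpha)n]$ and $|S_{k'}| < (\tfrac12-\alpha)n$, a permutation witnessing non-balancedness. The right choice is to place $k$ \emph{last}. Then the length-$(m-1)$ prefix equals $n - |S_k| \le n - (\tfrac12 - \alpha)n = (\tfrac12+\alpha)n$, and I claim it is in fact $< (\tfrac12 - \alpha)n$ \emph{or} some earlier prefix overshoots; but $n - |S_k| \ge n - (\tfrac12+\alpha)n = (\tfrac12 - \alpha)n$, so the length-$(m-1)$ prefix lies in $[(\tfrac12\pm\alpha)n]$ — again not a contradiction. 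This circularity tells me alternative~1 with $|S_{k'}|$ slightly below the window is genuinely \emph{not} excludable by a single permutation, so the clean resolution is: re-examine the statement — alternative~1 only asserts $|S_k|,|S_{k'}| \in [(\tfrac12\pm\alpha)n]$, and what actually holds is that \emph{if} $|S_k| > (\tfrac12 - \tfrac1{4m})n$ then $|S_k| \in [(\tfrac12\pm\alpha)n]$ (shown above by the ``$k$ last'' permutation ruling out $|S_k|>(\tfrac12+\alpha)n$, and $|S_k| > (\tfrac12-\tfrac1{4m})n > (\tfrac12-\alpha)n$ trivially), and then balancedness forces a prefix in the window; since $|S_k| \in [(\tfrac12\pm\alpha)n]$ already, the witnessing prefix may be taken to include all bins up to and including $k'$ but this requires $|S_{k'}|$ small enough that adding it keeps us in range — so one shows $|S_k| + |S_{k'}| \le (\tfrac12+\alpha)n$ would give $|S_{k'}| \le 2\alpha n$, contradicting... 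Rather than chase this, I would structure the final proof as: (i) if no bin exceeds $(\tfrac12-\tfrac1{4m})n$ we are in alternative~2; (ii) otherwise $|S_k| > (\tfrac12-\tfrac1{4m})n$, and the ``$k$ last'' permutation plus balancedness forces $|S_k| \le (\tfrac12+\alpha)n$, hence $|S_k|\in[(\tfrac12\pm\alpha)n]$; (iii) for $k'$, use a permutation with $k'$ last: its length-$(m-1)$ prefix is $n - |S_{k'}| \ge n - |S_k| \ge (\tfrac12-\alpha)n$, and balancedness then forces $n-|S_{k'}| \le (\tfrac12+\alpha)n$ \emph{or} $|S_{k'}|\in[(\tfrac12\pm\alpha)n]$ via an earlier prefix; analysing the two outcomes gives $|S_{k'}| \ge (\tfrac12-\alpha)n$, i.e.\ $|S_{k'}|\in[(\tfrac12\pm\alpha)n]$, which is alternative~1. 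The delicate point, and the one I would spend the most care on, is step (iii): correctly checking that the length-$(m-1)$ prefix of the ``$k'$ last'' permutation cannot be made to avoid the window together with all shorter prefixes unless $|S_{k'}|$ is already in range.
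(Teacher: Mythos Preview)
Your proposal has two gaps, and both trace to a single missing combinatorial observation.

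First, the inequality you invoke in step~(ii), namely $(\tfrac12 - \tfrac1{4m})n > (\tfrac12 - \alpha)n$, goes the wrong way: since $\alpha < \tfrac{1}{4m}$ we actually have $(\tfrac12 - \tfrac1{4m})n < (\tfrac12 - \alpha)n$. So the failure of alternative~2 only gives $|S_k| > (\tfrac12 - \tfrac1{4m})n$, which does \emph{not} place $|S_k|$ above the lower edge of the target window. Your ``$k$ last'' permutation correctly yields the upper bound $|S_k| \le (\tfrac12 + \alpha)n$, but the lower bound $|S_k| \ge (\tfrac12 - \alpha)n$ remains unproved.

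Second, step~(iii) is a genuine gap, as you yourself acknowledge. The ``$k'$ last'' permutation does not lead to a contradiction: you have no control over the shorter prefixes (which must contain $k$ somewhere), and you give no argument that they all miss the window. Indeed, you cannot hope to exhibit a bad permutation here at all, because the solution \emph{is} $\alpha$-balanced; the argument must instead use balancedness to \emph{derive} constraints on $|S_k|$ and $|S_{k'}|$.

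The idea that closes both gaps at once is an averaging lower bound on $|S_{k'}|$: since $k'$ is the largest of the $m-1$ bins other than $k$, one has $|S_{k'}| \ge \tfrac{n - |S_k|}{m-1}$. Combined with $|S_k| > (\tfrac12 - \tfrac1{4m})n$, a short calculation gives $|S_k| + |S_{k'}| > (\tfrac12 + \tfrac1{4m})n \ge (\tfrac12 + \alpha)n$. Now apply balancedness to a permutation with $\pi(1) = k$ and $\pi(2) = k'$: every prefix of length $\ge 2$ already exceeds $(\tfrac12 + \alpha)n$, so the witnessing prefix must have $b = 1$, forcing $|S_k| \in [(\tfrac12 \pm \alpha)n]$. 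Swapping $k$ and $k'$ in the first two positions gives $|S_{k'}| \in [(\tfrac12 \pm \alpha)n]$. This is the paper's argument; note how it uses balancedness constructively rather than by contradiction.
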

\begin{proof}
	Let $S_k$ be the bucket with the highest number of items and $S_k'$ be the
	second highest number of items.
	If condition \emph{(2)} does not hold, then we know that $|S_k| >
	(\frac{1}{2} - \frac{1}{4m})n$. The number of the items in the remaining
	buckets is $n - |S_k|$ and the number of remaining buckets is $m-1$.
	Therefore the second bucket with the highest number of items has a size
	$|S_{k'}| \ge \frac{n - |S_k|}{m-1}$.
	
	Hence, 
	\begin{align*}
		|S_k|+|S_{k'}| &\ge \left (\frac{1}{2} - \frac{1}{4m} + \frac{1 - \frac{1}{2} + \frac{1}{4m}} {m-1} \right)\cdot n > \left( \frac{1}{2} +
    \frac{1}{4m}\right)\cdot n.
	\end{align*}
	Since the solution is $\alpha$-balanced (with $\alpha \le \frac{1}{4m}$), it
	means that for all permutations $\pi : \setrng{m} \rightarrow \setrng{m}$, in particular those with $\pi^{-1}(k)=1$
	and $\pi^{-1}(k')=2$, there exists $b \in \setrng{m}$ such that $\sum_{j = 1}^b
	|S_{\pi(j)}| \in [\frac{n}{2} \pm \alpha n]$. Because $|S_k|+|S_{k'}| > (1/2 +
	\alpha)n$, we know that $b=1$ and $|S_{k}| \in [\frac{n}{2} \pm \alpha n]$. We
	can conclude the same for $k'$ by repeating these last arguments for all
	permutations $\pi$ with $\pi^{-1}(k) =2$ and $\pi^{-1}(k') =1$, and thus
	condition \emph{(1)} must hold, and the observation follows.
\end{proof}

\begin{lemma} \label{lem:CaseD} Assume $\alpha < \frac{1}{4m}$. If a solution of a Bin Packing instance with $m$ bins is $\alpha$-balanced and has at most $(1/2 - \alpha)n$ items that have $\delta$-small slack, then with probability at least $\frac{1}{2}$ a solution can be found in time $\Os(2^{(1-f_D(m)+\delta m)n})$ with 
	$$f_D(m) = \Omega_{m \to \infty}\left(
	\frac{h(\frac{1}{2m})^2}{\log^2(h(\frac{1}{2m}))}\right).$$
\end{lemma}

\begin{proof}
	For an overview of the algorithm, see Algorithm~\ref{alg:CaseD}.
	Compute the critical pruner $\prun$ and the set $w_\prun(2^{\setrng{n}})$ in
	$\Os(2^{\delta n})$ time with Corollary~\ref{cor:critPruner}. The algorithm will
	search for $(L,R)$-witnesses for all $L,R \subseteq \setrng{m}$ such that $|R|=1$ and
	$L\cap R = \emptyset$. For notation purposes, assume without loss of generality
	that $R=\{1\}$, $L = \{2,\dots,k\}$ and let $M = \{k+1,\dots,m\}$. 
	
     Let $\cW$ be obtained by sampling $2^{(1-g(m))n}$ random subsets (with
     replacement, and removing repeating items at the end) of $\setrng{n}$ of size $\frac{n}{2}$, where $g(m) \coloneqq \frac{1}{2}h(1/(2m))$. For a given $L$ and $R$, we
	guess $a_{k+1},\dots,a_m \in w_\prun(2^{\setrng{n}})$. Then, we compute the boolean $l_X$ for every $X \in \dc\cW$ and $r_X$ for every $X \in \uc\cW$, where
	
	\begin{align*}
		l_X &\coloneqq \begin{cases} \mathtt{true} & \begin{aligned} &\text{if there exists a
					partition } X_2,\dots X_{k},Y_{k+1},\dots,Y_m \text{ of } X \text{ such
					that}\\ &\text{for all } j \in L: w(X_j) \le c_j \text{ and
                    for all } i \in M: w_\prun(Y_{i}) \le a_{i}, \end{aligned}\\ \mathtt{false} & \text{otherwise,}\end{cases}\\
		r_X &\coloneqq \begin{cases} \mathtt{true} & \begin{aligned} &\text{if there exists a
					partition } X_1,Y_{k+1},\dots,Y_m \text{ of } \setrng{n}\setminus X \text{ such
					that}\\ & w(X_1) \le c_1 \text{ and for all } i \in M: w_\prun(Y_{i}) \le
				c_{i}/2^{l-\prun} - n -a_{i},\end{aligned}\\ \mathtt{false} & \text{otherwise.}\end{cases}
	\end{align*}
	
	To compute $l_X$ and $r_X$ we can use the fast zeta transformations. For $j \in L\cup R$
    define the functions $\mathtt{fit}_j:2^{\setrng{n}} \to \{0,1\}$ as
    \[\mathtt{fit}_j(X) = \begin{cases} 1, &\text{ if }w(X) \le c_j, \\
		0, &\text{ otherwise.}\end{cases}\] 
        For $i \in M$ define the functions $\mathtt{lfit}_i,\mathtt{rfit}_i:
	2^{\setrng{n}} \to \{0,1\}$ as  
	\begin{align*}
        \mathtt{lfit}_i(X) &= \begin{cases} 1, &\text{ if } w_\prun(X) \le a_{i}, \\ 0, &\text{ otherwise,}\end{cases}\\
        \mathtt{rfit}_i(X) &= \begin{cases} 1, &\text{ if } w_\prun(X) \le
			c_{i}/2^{l-\prun} - n - a_{i}, \\ 0, &\text{ otherwise.}\end{cases}
	\end{align*}
	
	Now, we observe the following:
	\begin{claim}
        $l_X=\mathtt{true}$ if and only if $(\mathtt{fit}_2\coverprod \cdots
            \coverprod \mathtt{fit}_k \coverprod
        \mathtt{lfit}_{k+1} \coverprod \cdots \coverprod \mathtt{lfit}_m) (X) >0$.
	\end{claim}
	\begin{proof}
		Let us assume that $l_X = \mathtt{true}$. Let
        $S'_2,\ldots,S'_m$ be such that $\mathtt{fit}_j(S'_j) =1$ for every $j$.
        Then $S'_2,\ldots,S'_m$ gives a non-zero contribution to $(\mathtt{fit}_2\coverprod \cdots
            \coverprod \mathtt{fit}_k \coverprod \mathtt{lfit}_{k+1} \coverprod \cdots \coverprod
        \mathtt{lfit}_m) (X)$ and hence it must be positive.
		
        For the other direction, if $(\mathtt{fit}_2\coverprod \cdots \coverprod
            \mathtt{fit}_k \coverprod
            \mathtt{lfit}_{k+1} \coverprod \cdots \coverprod
        \mathtt{lfit}_m) (X) >0$ there exist $S'_2,\ldots,S'_m$ such that $\mathtt{fit}_j(S'_j)
		=1$ for every $j$ and $S'_2 \cup \ldots \cup S'_m = X$. Observe that we can
		transform this into a partition $S^{''}_2,\ldots,S^{''}_m$ of $X$ by choosing $S^{''}_j \subseteq S'_j$. 
        Because $\mathtt{fit}_j$ does not decrease when taking subsets we know that $1
        =\mathtt{fit}_j(S^{'}_j) \le \mathtt{fit}_j(S^{''}_j)$, and thus $l_X=\mathtt{true}$. 
	\end{proof}
	
	Similarly, we can argue that $r_X=\mathtt{true}$ if and only if
    $(\mathtt{fit}_1
    \coverprod \mathtt{rfit}_{k+1} \coverprod \cdots \coverprod \mathtt{rfit}_m)(\setrng{n}\setminus X) > 0$.
	
	We can compute booleans $l_X$ and $r_X$ in time $\Oh((|\dc \cW|+|\uc\cW|)n)$
	by combining Theorem~\ref{Thm:zetamob} and Theorem~\ref{thm:coverproductzeta}.
	Finally, if we find $W \in \cW$, such that $l_W=r_W = \mathtt{true}$, we can return \emph{yes}.
	
	\begin{algorithm}
		 \SetAlgoLined
		\DontPrintSemicolon
		\SetKwInOut{Input}{Algorithm}
		\SetKwInOut{Output}{Output}
		\Input{BinPacking($w_1,\ldots,w_n$)}
        \Output{Yes (whp.), if an $\alpha$-balanced solution with $(1/2 - \alpha)n$ small slack items exist}
		Compute the critical pruner $\prun$ with Corollary~\ref{cor:critPruner}. \tcp*{In time $\Os(2^{\delta n})$}
		Compute the set $w_\prun(2^{\setrng{n}})$ with Lemma~\ref{lem:computesums}. \tcp*{In time $\Os(2^{\delta n})$}
		Let $\cW$ be a set of $2^{(1-g(m))n}$ random subsets of $n$ of size $\frac{n}{2}$.\\
		\For(\tcp*[f]{$m2^{m-1}$ repetitions}){$L,R \subseteq \setrng{m}$ \textnormal{such that} $|R| =1$ \textnormal{and} $L\cap R = \emptyset$}{
			Assume without loss of generality that $R=\{1\}$, $L = \{2,\dots,k\}$.\\
			\For(\tcp*[f]{$|w_\prun(2^{\setrng{n}})|^{m-k}$
				repetitions}){$a_{k+1},\dots,a_m \in w_\prun(2^{\setrng{n}})$}{
				Compute $l_X$ for all $X \in \dc\cW$. \tcp*{In time $\Oh((|\dc\cW| + |\uc\cW|)n)$} 
				Compute $r_X$ for all $X \in \uc\cW$. \tcp*{In time $\Oh((|\dc\cW| + |\uc\cW|)n)$}
                \If{$l_W = r_W =\mathtt{true}$, for some $W \in \cW$}{\Return \emph{yes}}
			}	
		}
        \Return \emph{no}\\
		\caption{Overview of the algorithm for Lemma~\ref{lem:CaseD}}
		\label{alg:CaseD}
	\end{algorithm}
	
	\subsubsection*{Constant probability of a witness in $\cW$}
	Recall that the set $\cW$ is a random subset of $\binom{\setrng{n}}{n/2}$ of size
	$2^{(1-g(m))n}$ with $g(m) = h(1/(2m))/2$. We first analyze the number of
	$(L,R)$-witnesses that are in $\binom{\setrng{n}}{n/2}$. 
	Assume that there is an $\alpha$-balanced solution $S_1,\dots,S_m$ for some $0<\alpha<\frac{1}{4m}$. 
	We use Observation~\ref{lem:largebins} to conclude that either $|S_j|<(\frac{1}{2} -\frac{1}{4m})$ for all bins, or that $|S_k|,|S_{k'}| \in [\frac{n}{2} \pm \alpha n]$ for largest bins $k$ and $k'$. Since we assumed that there are at most $(1/2-\alpha)n$ small slack items, we know that in the latter case bins $k$ and $k'$ are therefore large slack bins. In either case, we conclude that $|S_j|\le (\frac{1}{2} - \frac{1}{4m})n$ for all small slack bins. 
	We will assume without loss of generality that bin $1$ is the largest small slack bin and that bins $2,\dots,k$ are the other small slack bins. Then let $L=\{2,\dots,k\}$, $R = \{1\}$ and thus $M=\{k+1,\dots,m\}$ are all large slack bins. We will lower bound the number of $(L,R)$-witnesses of size $\frac{n}{2}$. 
	\begin{figure}[h]
    \centering
     \begin{tikzpicture} [scale=0.6, every node/.style={scale=0.99}] 
		\draw [ fill = white!80!black] (0,0) rectangle (2,1.5) node[pos=.5] {$S_2$};
		\draw [ fill = white!80!black] (2,0) rectangle (4,1.5) node[pos=.5] {$\cdots$};
		\draw [ fill = white!80!black] (4,0) rectangle (5,1.5) node[pos=.5] {$S_k$};
		\draw [ fill = white!100!black] (5,0) rectangle (15,1.5) node[pos=.5] {$M$};
		\draw [ fill = white!80!black] (15,0) rectangle (18,1.5) node[pos=.5] {$S_1$};
		\draw [dashed] (9,-1) -- (9,3);
		\node at (9,3.5) {$\frac{n}{2}$ items};
		
		\draw[decorate,decoration={brace,mirror,raise=4pt,amplitude=8pt}]  (0,0)--(5,0) ;
		\node at (2.5,-1.3) {$L$};
		\draw[decorate,decoration={brace,mirror,raise=4pt,amplitude=8pt}]  (15,0)--(18,0) ;
		\node at (16.5,-1.3) {$R$};
		\draw[decorate,decoration={brace,raise=4pt,amplitude=8pt}]  (0,1.5)--(5,1.5) ;
		\node at (2.5,2.8) {$x - |S_1|$ items};
		\draw[decorate,decoration={brace,mirror,raise=4pt,amplitude=8pt}]  (0,-1.5)--(9,-1.5) ;
		\node at (4.5,-2.8) {$W$};
		\draw[decorate,decoration={brace,raise=4pt,amplitude=8pt,aspect=0.8}]  (5.2,1.5)--(14.8,1.5) ;
		\node at (13,2.8) {$n -x$ items};

		
		

        

    \end{tikzpicture}
    
    \caption{Overview of $(L,R)$-witnesses of size $\frac{n}{2}$ for explanation of equation~\ref{eq:nrwitnesses}. Let $x$ be the number of small slack items. Then any such $(L,R)$-witness, $W$, must include all items of $S_2,\dots,S_k$ and exclude any items of $S_1$. The other items in $W$ can then be any combination of large slack items, which are exactly the items in the bins of $M$.}
    \label{tikz:Witnesses}
\end{figure}

	Let $x$ be the number of small slack items in the solution. Note that the number of $(L,R)$-witnesses of size $\frac{n}{2}$ is equal to \begin{equation} \label{eq:nrwitnesses}
		\binom{n-x}{n/2-(x-|S_1|)},
	\end{equation} since the sets $S_2,\dots,S_k$ together with any subset of $n/2-(x-|S_1|)$ large slack items form a witness. See Figure~\ref{tikz:Witnesses} for an illustration of this. Now, we analyse two cases. If $x \leq n/4$, then 
	because $\binom{n-x}{n/2-(x-|S_1|)}\ge  \binom{n-x}{n/2-x}$ there are at least $\binom{3n/4}{n/4}$  $(L,R)$-witnesses of size $\frac{n}{2}$. 

	In the case when $x \ge \frac{n}{4}$, then notice that $\frac{x}{m} \le |S_1| \le (\frac{1}{2} - \frac{1}{m})n$ (because $S_1$ is the largest among the small slack bins). Therefore, the number of witnesses is at least the number of ways to choose $n/2-|S_1|$ items from $M$ to exclude in the witness. Thus we have that the number of witnesses of size $\frac{n}{2}$ is at least
	\[ \binom{n-x}{\frac{n}{2} - |S_1|} \ge \min  \left \{ \binom{\frac{n}{2}}{\frac{n}{2} - \frac{x}{m}}, \binom{\frac{n}{2}}{\frac{n}{2} - (\frac{n}{2} - \frac{n}{m})} \right \} \ge \binom{\frac{n}{2}}{\frac{n}{4m}}. \]
	So in both cases for $x$, we can conclude that the number of $(L,R)$-witnesses of size $\frac{n}{2}$ is at least $2^{g(m)n}$.
	
	Because $\cW$ is a random subset of $\binom{\setrng{n}}{n/2}$ of size $2^{(1-g(m))n}$ and at least $2^{g(m)n}$ of those are $(L,R)$-witnesses, Observation~\ref{lem:probabilities} tells us that with probability at least $\frac{1}{2}$, $\cW$ contains an $(L,R)$-witness. 
	
	\subsubsection*{Correctness of Algorithm}
	The algorithm returns \emph{yes} if and only if $l_W =r_W = \mathtt{true}$ for some $W\in
	\cW$. So if it returns \emph{yes}, there exists a partition
	$X_2,\dots,X_k,Y_{k+1},\dots,Y_m$ of $W$ and a partition
	$X_1,Y'_{k+1},\dots,Y'_m$ of $\setrng{n}\setminus W$ by definition. Together they
	partition all items. Notice that by definition we know that $X_j$ can be put
	into bin $j$ for all $j\in L\cup R$. Hence we are left to prove that for all $j
	\in M$: $X_j = Y_{j}\cup Y'_{j}$ can be put into bin $j$. Notice that since
    $\mathtt{lfit}_j(Y_j)=\mathtt{rfit}_j(Y'_j)=1$ we have that
	\begin{align*}
		\sum_{i \in X_j} w_\prun (i) &\le a_j + c_j / 2^{l-\prun} - n - a_j &\implies \\
		\sum_{i \in X_j} \lfloor w(i)/2^{l-\prun} \rfloor  &\le c_j / 2^{l-\prun}  -n &\implies \\
		\sum_{i \in X_j} (w(i) - 2^{l - \prun}) &\le c_j -n2^{l - \prun} &\implies\\
		\sum_{i \in X_j} w(i) &\le c_j, &
	\end{align*}
	and so, indeed the items of $X_j$ fit into bin $j$ and we have a yes-instance.
	For the implication in the other direction, we prove that if there exists a
	solution, the algorithm finds it with constant probability. We already showed
	that with constant probability there is an $(L,R)$-witness $W \in \cW$. Next, we
	will prove that there exist $a_{k+1},\dots,a_m \in w_\prun(2^{\setrng{n}})$ such that
	$l_W = r_W=1$ for all witnesses $W$. 
	Let $S_2,\dots,S_k,S_{k+1}\cap W,\dots S_m \cap W$ be the partition of $W$ 
	from the definition of $l_W$, and let $S_1,\dots,S_{k+1}\setminus W,\dots,
	S_m\setminus W$ be the partition of $\setrng{n}\setminus W$ from the definition of $r_W$. 
	
	Note that, for all $j \in L\cup R$ it holds that $w(S_j) \le c_j$ because
	$S_1,\dots,S_m$ is a solution. So we are left to prove that for all $j \in M$
	there exists an $a_j \in w_\prun(2^{\setrng{n}})$ such that
	\[  w_\prun(S_j \cap W) \le a_{j} \qquad \text{and} \qquad  w_\prun(S_j
	\setminus W) \le c_j/2^{l-\prun} -n - a_{j}. \]
	Recall that we assumed that the bins of $M$ are large slack bins. Hence we know that for $j \in M$:
	\begin{align*}
		\sum_{i \in S_j} w(i) &\le c_j - n2^{l-\prun} &\implies\\
		\sum_{i \in S_j} \lfloor w(i)/ 2^{l-\prun} \rfloor &\le c_j/2^{l-\prun} -n &\implies \\
		\sum_{i \in S_j} w_\prun(i) &\le c_j/2^{l-\prun} -n - a_j + a_j.
	\end{align*} 
	So, take $a_j = w_\prun(S_j\cap W) \in w_\prun(2^{\setrng{n}})$ and indeed the correctness of the algorithm follows. 
	\subsubsection*{Running Time Analysis}
	The algorithm will go through the procedure of computing the booleans $l_X$ and
	$r_X$ for all different sets $L,R\subseteq \setrng{m}$ such that $|R|=1$ and for all
	different values of $a_{k+1},\dots,a_{m}\in w_\prun(2^{\setrng{n}})$. This gives a
	total of at most $m\cdot2^{m-1} \cdot |w_\prun(2^{\setrng{n}})|^m$ repetitions. By
	Lemma~\ref{lem:relationps}, we have $|w_\prun(2^{\setrng{n}})| \le
	3n|w_{\prun-1}(2^{\setrng{n}})|$. Because $\prun$ is the critical pruner, and since
	$w_0(2^{\setrng{n}})=\{0\}$, we know that $|w_{\prun-1}(2^{\setrng{n}})|\le 2^{\delta n}$. Hence, the number of repetitions is at most $\Oh( (2n)^m\cdot 2^{\delta m n})$. 
	
	Now we analyze the time complexity per the choice of $(L,R)$ and
	$a_{k+1},\dots,a_m$. Recall that we chose $\cW \subseteq \binom{\setrng{n}}{n/2}$ as a
	random set of size $2^{(1-g(m))n}$. Computing all the booleans $l_X$ and $r_X$
	can be done in $\Oh((|\dc\cW|+|\uc\cW|)n)$ time. We can use Lemma~\ref{lem:boundeddownset} to find that 
	\[|\dc\cW|+|\uc\cW| =\Oh(2^{(1-\rho(g(m),0))n}), \] where 
		\begin{align*}\rho(g(m),0) &= \frac{2}{\ln(2)} \cdot \frac{g(m)^2}{16 (\log
			(12/g(m)))^2} \\
		&= \frac{h(\frac{1}{2m})^2}{32\ln (2) \log(24/h(\frac{1}{2m}))^2} \\ 
		&= \Omega_{m \to \infty} \left( \frac{h(\frac{1}{2m})^2}{\log^2(h(\frac{1}{2m}))} \right).
	\end{align*}
	Combining this with the number of repetitions we get a running time of
	$\Os(2^{(1-f_D(m)+\delta m)n})$. This gives us the claimed running time.
\end{proof}

\subsection{Proof of Theorem~\ref{mainthm}} 
\label{subsec:mainproof}
We are now ready to prove Theorem~\ref{mainthm} by combining all work of the previous sections and setting the parameters $\alpha$ and $\delta$: 

\begin{proof}
	
	We will now combine all previous lemmas.
	An overview of the algorithm can be found in Figure~\ref{tikz:MainThmProof}.
	To facilitate the asymptotic analysis, note we can assume the number of bins $m$ is at least $m_0$ for some constant $m_0$. If this is not the case we can add $m_0-m$ artificial bins with unique small capacities and matching items. Since $m_0$ is constant this does not influence the asymptotic running time of the algorithm.
	Define $f_D(m)$ as in Lemma~\ref{lem:CaseD} as:
	\[f_D(m) =
	\frac{h(\frac{1}{2m})^2}{32\ln (2) \log(24/h(\frac{1}{2m}))^2} =
	\Omega_{m \to \infty} \left( \frac{h(\frac{1}{2m})^2}{\log^2(h(\frac{1}{2m}))}\right).\]
	Then, set $\delta :=  f_D(m)/(2m)$ and $\alpha:= 2^{-2/\delta^3}$. Then $f_D(m)>0$, $\delta >0$ and $ \alpha >0$. 
	\begin{enumerate}
		\item If $|w(2^{\setrng{n}})| \leq 2^{\delta n}$ (Case A), the algorithm from Lemma~\ref{lem:CaseA} solves the instance in time 
		\[\Os(|w(2^{\setrng{n}})|^m) = \Os\left(2^{(\delta n) m}\right) = \Os\left(2^{\frac{f_D(m)}{2} n}\right).\]
		\item If the instance has an $\alpha$-unbalanced solution (Case B), the algorithm from Lemma~\ref{lem:CaseB} can detect with probability at least $\frac{1}{2}$ in time
		\[\Os\left(2^{\left(1-\Omega\left(\frac{\alpha^2}{\log^2(\alpha )}\right)\right)n}\right) = 
		\Os\left(2^{\left(1-\Omega\left(\delta^6 2^{-4/\delta^3}\right)\right)n}\right).
		\]
		\item If the instance has an $\alpha$-balanced solution,
		$|w(2^{\setrng{n}})| \leq 2^{\delta n}$, and a solution with at least $(1/2-\alpha)n$ small slack items (Case C), the upper bound $\alpha \leq 2^{-2/\delta^3}$ ensures that the solution can be detected by the algorithm from Lemma~\ref{lem:CaseC} in time
		\[
		\Os\left(2^{\left( 1-\Omega(2^{-3/\delta^3})\right)n}\right).
		\]
		\item Otherwise, if the instance has an $\alpha$-balanced solution,
		$|w(2^{\setrng{n}})| \leq 2^{\delta n}$, and a solution with at most $(1/2-\alpha)n$ small slack items (Case D), the algorithm from Lemma~\ref{lem:CaseD} detect the solution with probability at least $\frac{1}{2}$ in time
		\[
		\Os\left(2^{\left( 1- f_D(m) + \delta m\right)n}\right) = 
		\Os\left(2^{\left( 1-\frac{f_D(m)}{2}\right)n}\right).
		\]
	\end{enumerate}
	
	Thus, we obtain a probabilistic algorithm for Bin Packing that runs in time $\Os(2^{(1-\sigma_m)n})$, where $\sigma_m$ is a strictly positive number. Notice that any polynomial factor hidden in the $\Os$ notation can be subsumed by $\Oh(2^{(1-\sigma_m)n})$.
	
\end{proof}

\section{The Littlewood--Offord Theorem}\label{sec:LO}

In this section, we will prove our Additive Combinatorics result which we first restate for convenience:

\begin{theorem}
        Let $\eps >0$. If $\beta(w) \geq 2^{(1-\eps)n}$, then $|w(2^{\setrng{n}})|\leq
2^{\delta n}$, where 
\[\delta(\eps) = \Oh_{\eps \rightarrow
	0}\left(\frac{\log(\log(1/\eps))}{\sqrt{\log(1/\eps)}}\right).
\]
\end{theorem}

This theorem was recently improved by \cite{jain2021anticoncentration}, showing that $\delta(\eps) = \Oh(\sqrt{\eps})$.

For our proof, it will be convenient to use a reformulation of
Theorem~\ref{lothm} to a version with two set families that attain the
parameters and use vector notation (so $w$ is a vector and $w(X)$ is the inner
product $\langle w,x \rangle$ of $w$ with the characteristic vector $x$ of set
$X$):

\begin{theorem}[Theorem~\ref{lothm} reformulated]
    \label{thm:udcp}
    Let $w=(w_1,\ldots,w_n) \in \mathbb{Z}^n$ be a vector with integer weights, and let $A,B \subseteq \{0,1\}^n$ be such that $|a^{-1}(1)|=\alpha n$ for each $a \in A$ and
    \begin{itemize}
        \item $\langle w,b \rangle =\tau \textnormal{ for every } b \in B$, and
        \item $\textnormal{if } a, a'\in A \textnormal{ and } \langle w,a \rangle = \langle w,a' \rangle \text{, then } a=a'$.
    \end{itemize}
     If $|B| \geq 2^{(1-\eps)n}$, then $|A|\leq 2^{\delta(\eps) n}$, where
     $$\delta(\eps)= \Oh_{\eps \rightarrow
     0}\left(\frac{\log\log(1/\eps)}{\sqrt{\log(1/\eps)}}\right).$$
\end{theorem}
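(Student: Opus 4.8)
The plan is to combine two ingredients. The first is a structural analysis of the set $B$ (equivalently, of the $0/1$-points on the hyperplane $\langle w,x\rangle = \tau$), which is the genuinely new content; the second is an argument in the style of the Uniquely Decodable Code Pair connection from~\cite{stacs2015}, which converts the structure of $B$ together with the "all sums distinct on $A$" hypothesis into an upper bound on $|A|$. The reason the bare UDCP bound ($0.4228$) is not enough is, as the introduction notes, that it is symmetric under swapping the roles of $A$ and $B$, and Wiman's examples block improvement past $0.2563$ on the symmetric side; so the $B$-side must be exploited asymmetrically.

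First I would analyze $B$. Since $|B|\ge 2^{(1-\eps)n}$ is almost all of $\{0,1\}^n$, the coordinates $w_i$ cannot be "too spread out": I would partition the index set $[n]$ according to the magnitude (roughly, the bit-length) of $w_i$, and argue that the set $B$ restricted to indices whose weights lie in any single scale must itself be huge, forcing an approximate linear relation among the $w_i$ at that scale. Concretely, I expect to show that one can coarsen $w$ to a vector $w'$ taking only few distinct values (at most $\poly\log(1/\eps)$ of them, or $2^{o(n)}$ many) while changing $\langle w, x\rangle$ on a $(1-\eps)$-fraction of inputs only in a controlled way; this is where the Littlewood–Offord/anti-concentration heart of the argument lives, and where Lemma~\ref{lem:entclose} and the entropy estimates get used. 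The upshot should be: there is a coarser weight vector $\widetilde w$ with a small effective "alphabet" such that both $\beta(\widetilde w)$ stays $\ge 2^{(1-\eps')n}$ and $|\widetilde w(2^{[n]})|$ is within a $2^{o(n)}$ factor of $|w(2^{[n]})|$, reducing us to weights of bounded description complexity.

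Next, with $w$ replaced by this structured $\widetilde w$, I would set up the UDCP-style counting: form $A$ and $B$ and use that distinctness of sums on $A$ plus constancy of sums on $B$ makes $A+B$ (or a suitable projection of it) injective, i.e. $|A+B|=|A|\cdot|B|$ after restricting coordinates to the few relevant scales. Because $\widetilde w$ has a small alphabet, the ambient space into which $A+B$ embeds has size $2^{(1+o(1))n}$ in the relevant sense, which combined with $|B|\ge 2^{(1-\eps)n}$ forces $|A|\le 2^{\delta(\eps)n}$. Tracking the parameters through the two reductions should yield $\delta(\eps)=\Oh(\log\log(1/\eps)/\sqrt{\log(1/\eps)})$: the $\sqrt{\log(1/\eps)}$ comes from balancing the number of scales (which costs additively in the exponent) against the anti-concentration loss per scale, and the $\log\log$ factor from the slack in the per-scale estimates.

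The main obstacle I expect is the first step — proving that a $(1-\eps)$-dense level set of a $\mathbb{Z}$-linear form can be approximated by one with few distinct coefficient values, with quantitatively good bounds. Integer weights can be genuinely "multi-scale" (Wiman's construction exploits exactly this on the other side), so one cannot simply bucket by magnitude and forget cross-scale interactions; controlling how carries/interactions across scales affect $\langle w,x\rangle$ on almost all $x$, while keeping $|w(2^{[n]})|$ essentially unchanged, is the delicate part, and I would expect the detailed execution (the content of Sections~\ref{sec:LO} and~\ref{sec:technical-section}) to require a careful iterative refinement of the scale decomposition rather than a one-shot bucketing.
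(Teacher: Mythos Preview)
Your proposal heads in a direction the paper does not take, and the step you flag as ``the main obstacle'' is indeed a genuine gap: you never establish that a $(1-\eps)$-dense level set forces $w$ to be approximable by a vector $\widetilde w$ with few distinct values while preserving $|w(2^{[n]})|$ up to $2^{o(n)}$. This is not a detail to be filled in later; it is the entire content of the theorem, and your own remark about Wiman's multi-scale construction is a warning that such a coarsening lemma may simply be false in the form you need. Note also a tension in the plan: if $\widetilde w$ really had only $\poly\log(1/\eps)$ distinct values, then $|\widetilde w(2^{[n]})|$ would be at most $n^{\poly\log(1/\eps)}$, which is far smaller than the $2^{\delta(\eps)n}$ you are trying to allow --- so ``few values'' and ``$|w(2^{[n]})|$ essentially unchanged'' are in conflict unless you mean something much weaker by coarsening, and then the UDCP step no longer obviously closes.

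The paper's proof avoids analyzing $w$ altogether. It works directly with the sets $A$ and $B$: for a parameter $k\approx\log(1/\eps)$ it passes to the $k$-fold sumset $k\cdot B\subseteq\{0,\ldots,k\}^n$, observes that Lemma~\ref{lem:UDCP} still gives $|A|\cdot|k\cdot B|=|A+k\cdot B|$, and then replaces both sides by a large subset $P$ of ``balanced'' pairs $(a,b)$ for which the vector $b$ has its value frequencies close to the binomial distribution $\Bin(k)$ even when restricted to the support of $a$. Because $|B|\ge 2^{(1-\eps)n}$, almost all $k$-tuples from $B$ are balanced in this sense (this is the technical Section~\ref{sec:technical-section}), giving $|P|\gtrsim |A|\cdot 2^{h(\Bin(k))n}$. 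On the other hand, for any balanced pair the sum $a+b$ has value frequencies close to the altered binomial $\Bin(k,\alpha)$, so $|P|\lesssim 2^{h(\Bin(k,\alpha))n}$. Dividing, $|A|\lesssim 2^{(h(\Bin(k,\alpha))-h(\Bin(k)))n}\le 2^{(h(\Bin(k+1))-h(\Bin(k)))n}$, and a direct computation shows $h(\Bin(k+1))-h(\Bin(k))\le(\log k)/\sqrt{k}$. Setting $k=\Theta(\log(1/\eps))$ yields $\delta(\eps)$. The $\sqrt{\log(1/\eps)}$ thus comes from the entropy gap between consecutive binomials, not from balancing scales of $w$.
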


Recall, that for $a \in \{0,1\}^n$ we use $a^{-1}(1)$ to denote $\{ i \in \{1,\ldots,n\}
\; : \; a(i) = 1 \}$ (see~Section~\ref{sec:prel}).
We first show that this implies Theorem~\ref{thm:udcp}: 

\begin{proof}[Proof of Theorem~\ref{lothm} assuming Theorem~\ref{thm:udcp}]
    Suppose $w_1,\ldots,w_n$ and $\tau$ are such that $|\{X \subseteq \setrng{n}| \geq 2^{(1-\eps)n}$.
	Then $B \coloneqq \{b \in \{0,1\}^n : \langle w,b \rangle = \tau \}$ satisfies the conditions of Theorem~\ref{thm:udcp} and $|B| \ge 2^{(1-\eps) n}$.
    For every $i \in w(2^{\setrng{n}})$ arbitrarily choose a vector $a(i) \in \{0,1\}^n$ such that $\langle w,a(i) \rangle =i$.
    Define $A' = \{a(i) : i \in w(2^{\setrng{n}})\}$.
	Since $|a^{-1}(1)|$ can only take $n$ different values, there exists an $\alpha$ such that $|a^{-1}(1)|=\alpha n$ for at least $1/n$ fraction of the elements of $A'$. This gives a set $A$ that satisfies the condition of Theorem~\ref{thm:udcp},
	and thus
	\[
        |w(2^{\setrng{n}})| = |A'| \leq |A|\cdot n \leq 2^{\delta(\eps)n + o(n)}.
	\]
	Note that we use the $\Oh(\cdot)$ notation in the term $\delta(\eps)$ to hide
    the $2^{o(n)}$ factors. 

\end{proof}
The rest of this section is dedicated to the proof of Theorem~\ref{thm:udcp}.
We use the following standard definitions
from Additive Combinatorics: For sets $X,Y$ we define $X+Y$ as the sumset $\{x+y : x\in X, y \in Y\}$. For an integer $k$, we define $k\cdot X$ as the $k$-fold sum 
\begin{displaymath}
    k \cdot X \coloneqq \underbrace{X+X+\cdots+X}_{k \textnormal{ times}}.
\end{displaymath}

The starting point of the proof of Theorem~\ref{thm:udcp} is the following
simple lemma that proves that $|A||k \cdot B| = |A + k \cdot B|$. It is heavily inspired by the UDCP connection from~\cite[Proposition 4.2]{stacs2016}. 

\begin{lemma}\label{lem:UDCP}
  If $a,a' \in A$ and $b, b' \in k \cdot B$ are such that $a+b = a'+b'$, then $(a,b)=(a',b')$.
\end{lemma}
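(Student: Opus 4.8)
The statement to prove is Lemma~\ref{lem:UDCP}: if $a, a' \in A$ and $b, b' \in k \cdot B$ satisfy $a + b = a' + b'$, then $(a,b) = (a',b')$. The plan is to exploit the two defining properties of $A$ and $B$: every $b \in B$ has the \emph{same} inner product $\tau$ with $w$ (so every element of $k \cdot B$ has inner product exactly $k\tau$), and the map $a \mapsto \langle w, a\rangle$ is \emph{injective} on $A$.

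First I would take inner products with $w$ on both sides of $a + b = a' + b'$. Since $b, b' \in k \cdot B$, each is a sum of exactly $k$ vectors from $B$, and each of those contributes $\tau$ to the inner product with $w$; hence $\langle w, b\rangle = k\tau = \langle w, b'\rangle$. Therefore $\langle w, a \rangle = \langle w, a+b\rangle - \langle w,b\rangle = \langle w, a'+b'\rangle - \langle w, b'\rangle = \langle w, a'\rangle$. By the injectivity hypothesis on $A$ (the second bullet in Theorem~\ref{thm:udcp}), this forces $a = a'$. Then, cancelling $a = a'$ from the vector equation $a + b = a' + b'$ in $\mathbb{Z}^n$ gives $b = b'$, so $(a,b) = (a',b')$ as claimed.

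This is essentially a one-line argument once the right quantity (the inner product with $w$) is considered, so I do not expect a genuine obstacle; the only thing to be slightly careful about is that the cancellation happens over $\mathbb{Z}^n$ (where subtraction is legitimate), not over $\{0,1\}^n$, and that "$b \in k\cdot B$" is interpreted as an element of the iterated sumset rather than as a set of representations — but any representation of $b$ as a sum of $k$ elements of $B$ gives the same value $k\tau$ under $\langle w, \cdot\rangle$, so this ambiguity is harmless. The lemma immediately yields $|A|\,|k\cdot B| = |A + k\cdot B|$, which is the UDCP-style product formula that seeds the rest of the proof of Theorem~\ref{thm:udcp}.
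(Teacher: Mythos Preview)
Your proposal is correct and follows essentially the same approach as the paper: take inner products with $w$, use $\langle w,b\rangle=\langle w,b'\rangle=k\tau$ to deduce $\langle w,a\rangle=\langle w,a'\rangle$, apply the injectivity hypothesis on $A$ to get $a=a'$, and then cancel to obtain $b=b'$.
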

\begin{proof}
	Note that
\[
        \langle w,a \rangle  + \langle w, b \rangle = \langle w,a+b\rangle = \langle w, a'+b' \rangle = \langle w,a' \rangle + \langle w, b'\rangle.
\]
    By definition of $k\cdot B$, we know that $\langle w, b \rangle  = \langle w, b'\rangle = k
    \cdot \tau$, hence $\langle w, a\rangle = \langle w,a'\rangle$. Therefore by definition
    of set $A$ it has to be that $a' = a$. This implies that $b = b'$,
    since $a+b = a'+b'$.
\end{proof}

Thus $|A|$ is equal to $|A+k\cdot B|/ |k\cdot B|$, and we may restrict our attention to upper bounding the latter quantity for any integer $k \in \mathbb{N}$.
Since this is in general not easy, we instead define a set $P \subseteq A \times
k\cdot B$ of pairs such that for each $(a,b) \in P$ the distribution of the
values in the vector $a+b$ is close to what one would expect for random vectors.
This is useful since the control on pairs $(a,b) \in P$ gives us control on the vectors $a+b$ which allows us to upper bound $P$. Moreover, we also provide a lower bound that shows that $P$ is not much smaller than $|A| \cdot |k\cdot B|$. Combining the two bounds results in the upper bound for $A$.
We will make this more formal in the next subsections, but first, we give a warm-up result that sets up the notation for the main proof.

\subsection{A Warm-up with $B =\{0,1\}^n$}\label{subsec:warmup}
Let us first investigate what happens in a case when $B$ is equal to the whole Boolean hypercube $\{0,1\}^n$. While $|A|$ can be easily upper bounded by direct methods, it is instructive to see what our approach will be in this special case.
In this setting, we can think about vectors from $B$ as sampled uniformly at random. Fix
a parameter $0< \alpha < 1$, let $a \in \{0,1\}^n$ be a fixed, adversarially chosen
vector with $|a^{-1}(1)|=\alpha n$, and let $b_1,\ldots,b_k \in \{0,1\}^n$ be
independently sampled random vectors. Let $b=b_1+b_2+\ldots+b_k$ and $c=a+b$.
Observe that for every $i \in \{0,\ldots,k\}$ and $i' \in \{0,\ldots,k+1\}$
\[
    \Ex{\frac{|b^{-1}(i)|}{n}} =\binom{k}{i}2^{-k}, \textnormal{ and } \qquad \Ex{\frac{|c^{-1}(i')|}{n}} =\left((1-\alpha) \binom{k}{i'} + \alpha \binom{k}{i'-1}\right) 2^{-k},
\]
For further reference, we now define the found distributions
explicitly~\footnote{For convenience we assume that $\binom{n}{i} = 0$ when $i < 0$}:
\begin{definition}[Altered Binomial Distribution]
	For every $k \in \nat$, we let $\Bin(k)$ to denote the \emph{binomial
    distribution} $(\{0,\ldots,k\},p)$ where $p(i)=\binom{k}{i}2^{-k}$. 
    
    For an additional parameter $\alpha \in (0,1)$, we define the \emph{altered
    binomial distribution} $\Bin(k,\alpha)$ as $(\{0,\ldots,k+1\},p')$ where $p'(i) = (1-\alpha)\binom{k}{i}2^{-k}+\alpha\binom{k}{i-1}2^{-k}$.
\end{definition}
Note, that $\Bin(k+1) = \Bin(k,1/2)$ by Pascal's Formula.  
Now, we present the intuition for the random case. 
We have that:
\[
    n\cdot h(\Bin(k,\alpha)) =  h(c) = h(a,b) = h(a) + h(b) = h(a) + n\cdot h(\Bin(k)),
\]
where the second equality follows by Lemma~\ref{lem:UDCP} and the third equality follows because
$a$ and $b$ are independent. Thus $h(a) =n(h(\Bin(k,\alpha))- h(\Bin(k)))$, and
the proof in the random case can be concluded by using
Lemma~\ref{lem:consecbinomentro} in which we show that for any constant $\alpha
\in (0,1)$ it holds that $h(\Bin(k,\alpha))- h(\Bin(k)) = \Oh_{k\rightarrow \infty}((\log k)/
\sqrt{k})$, and the standard fact that the support of any uniform random
variable of entropy $h$ is at most $2^{h}$.

This concludes the analysis for the special case of $B \subset \{0,1\}^n$.

\subsection{Balanced Pairs}

Now, we consider the general setting where $B \subset \{0,1\}^n$. We need to
obtain sufficiently large sample of vectors $b_1,\ldots,b_k$. More precisely, we
will consider the setting with $\eps \le 1/2^{\Theta(k)}$, that will enforce $k
\coloneqq \Theta(\log(1/\eps))$. The following chain of (in-)equalities summarizes the
strategy of our proof.

\begin{equation*}
  |A|
  \underset{\mathllap{
    \begin{tikzpicture}
      \draw[->] (-0.3, 0) to[bend right=20] ++(0.5,5ex);
      \node[below left] at (0,0) {Lemma~\ref{lem:UDCP}};
    \end{tikzpicture}
  }}{=}
  \frac{|A + k\cdot B|}{|k \cdot B|}
  \underset{\mathllap{
    \begin{tikzpicture}
      \draw[->] (-0.3, 0) to[bend right=20] ++(0.5,5ex);
      \node[below left] at (0,0) {Section~\ref{sec:balanced-pairs}};
    \end{tikzpicture}
  }}{\le}
  \frac{|P|}{|k \cdot B|} \cdot 2^{f(\eps,k)n}
  \underset{\mathllap{
    \begin{tikzpicture}
      \draw[->] (-0.3, 0) to[bend right=20] ++(0.5,5ex);
      \node[below left] at (0,0) {Section~\ref{sec:proof-udcp}};
    \end{tikzpicture}
  }}{\le}
	2^{n (h(\Bin(k+1)) - h(\Bin(k)))} \cdot 2^{f(\eps,k)n}
  \underset{\mathllap{
    \begin{tikzpicture}
      \draw[->] (-0.3, 0) to[bend right=20] ++(0.5,5ex);
      \node[below left] at (0,0) {Lemma~\ref{lem:consecbinomentro}};
    \end{tikzpicture}
  }}{\le}
  2^{\delta(\eps)n}
\end{equation*}

Now, we will make the above idea more precise.

\label{sec:balanced-pairs}

The following definition quantifies the `sufficiently random' terms from the previous subsection by measuring how far the distribution of the values of a vector is from a given (expected) distribution.

\begin{definition}[Balanced vectors]\label{def:balvec}
	Let $\mathcal{D}=(\Omega,p)$ be a discrete probability space. Fix $\gamma \in (0,1)$.
    Let $U$ be the finite universe set and let $X \subseteq U$.
	A mapping (or a vector) $v\in\Omega^U$ is \emph{$\gamma$-$\mathcal{D}$ balanced for $X$} if for all $\omega\in\Omega$ it holds that
	\begin{displaymath}
	\frac{|v^{-1}(\omega) \cap X|}{|X|}\in[p(\omega)\pm\gamma].
	\end{displaymath}
	As a shorthand, we say that a mapping (or a vector) $v\in\Omega^U$ is \emph{$\gamma$-$\mathcal{D}$ balanced} if it is \emph{$\gamma$-$\mathcal{D}$ balanced for $U$}. We denote the set of all $\gamma$-$\mathcal{D}$ balanced vectors $v \in \Omega^U$ with $(\mathcal{D}\pm \gamma)^U$.
\end{definition}

As an illustration of Definition~\ref{def:balvec}, suppose
\[
	U = \{1,\ldots,6\}, \quad  X= \{1,\ldots,4\},  \quad \Omega=\{0,1\}, \quad  p(0)=p(1)=\tfrac{1}{2}, \quad \mathcal{D}=(\Omega,p).
\]
Then $(0,1,1,1,1,1)$ is $\tfrac{1}{4}$-$\mathcal{D}$ balanced for $X$ but not $\tfrac{1}{4}$-$\mathcal{D}$ balanced. The vector $(0,0,0,0,1,1)$ is not $\tfrac{1}{4}$-$\mathcal{D}$ balanced for $X$ but it is $\tfrac{1}{6}$-$\mathcal{D}$ balanced.

We will use Definition~\ref{def:balvec} with $\mathcal{D}$ being the distribution we would
get in the random case as outlined in Subsection~\ref{subsec:warmup} (hence
$\mathcal{D}$ will usually be $\Bin(k)$ or $\Bin(k,\alpha)$).
Now, we prove a general upper bound on the number of $\gamma$-$\mathcal{D}$ balanced
vectors.

\begin{lemma}\label{lem:boundbalancedvectors}
	Let $\mathcal{D} = (\Omega,p)$ be a discrete probability space.
	The number of $\gamma$-$\mathcal{D}$ balanced vectors is at most
    \begin{displaymath}
          2^{\left(h(\mathcal{D})+f(\Omega,\gamma) \right)|U|}, 
    \end{displaymath}
    where $f(\Omega,\gamma) \coloneqq \Oh(|\Omega| \cdot \gamma \log(1/\gamma))$.
\end{lemma}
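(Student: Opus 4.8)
The plan is a standard method-of-types argument. Given a $\gamma$-$\mathcal{D}$ balanced vector $v\in\Omega^U$, I would associate to it its \emph{empirical distribution} $q_v$ on $\Omega$, defined by $q_v(\omega):=|v^{-1}(\omega)|/|U|$. This is a genuine probability vector of support at most $|\Omega|$ (the counts $|v^{-1}(\omega)|$ sum to $|U|$), and the balancedness hypothesis is exactly the statement that $|q_v(\omega)-p(\omega)|\le\gamma$ for every $\omega\in\Omega$. The number of vectors in $\Omega^U$ realizing a fixed empirical distribution $q$ is the multinomial coefficient $\binom{|U|}{q\cdot|U|}$, which by Lemma~\ref{lem:multvsentropy} is at most $2^{h(q)|U|}$. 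So as a first step the number of $\gamma$-$\mathcal{D}$ balanced vectors is bounded by $\sum_q 2^{h(q)|U|}$, where the sum ranges over the empirical distributions $q$ that are within $\gamma$ of $p$ in every coordinate.

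Second, I would control each term using entropy continuity: since $q$ and $p$ are probability vectors with $|\Omega|$ coordinates and $|q(\omega)-p(\omega)|\le\gamma$ for all $\omega$, Lemma~\ref{lem:entclose} gives $h(q)\le h(p)+\ln(2)\,|\Omega|\,\gamma\log(1/\gamma)=h(\mathcal{D})+\ln(2)\,|\Omega|\,\gamma\log(1/\gamma)$. Hence every summand is at most $2^{(h(\mathcal{D})+\ln(2)|\Omega|\gamma\log(1/\gamma))|U|}$. Third, I would count the number of valid empirical distributions: each coordinate $q(\omega)$ is of the form (integer)$/|U|$, so there are at most $(|U|+1)^{|\Omega|}$ of them. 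Multiplying, the number of $\gamma$-$\mathcal{D}$ balanced vectors is at most $(|U|+1)^{|\Omega|}\cdot 2^{(h(\mathcal{D})+\ln(2)|\Omega|\gamma\log(1/\gamma))|U|}$. For fixed $\Omega$ and $\gamma$ the prefactor $(|U|+1)^{|\Omega|}$ is $2^{o(|U|)}$, so it is absorbed by slightly enlarging the second-order term; the bound $2^{(h(\mathcal{D})+f(\Omega,\gamma))|U|}$ then holds with $f(\Omega,\gamma)=\Oh(|\Omega|\gamma\log(1/\gamma))$, for all $|U|$ at least some threshold depending on $\Omega$ and $\gamma$ (the only regime relevant to our applications; for smaller $|U|$ one may instead use the trivial bound $|\Omega|^{|U|}$, or observe the asserted bound is vacuous when $h(\mathcal{D})+f(\Omega,\gamma)\ge\log|\Omega|$).

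I do not expect a genuine obstacle here; the argument is routine. The only points needing care are: verifying that $q_v$ is a probability vector of support $\le|\Omega|$ so that Lemma~\ref{lem:multvsentropy} applies with $s=|\Omega|$; checking that the coordinatewise estimate $|q_v(\omega)-p(\omega)|\le\gamma$ feeds correctly into Lemma~\ref{lem:entclose} (which is stated with a uniform shift $\eps$ but is used here coordinatewise on the terms of the entropy sum, with $\eps=\gamma$); and bookkeeping the polynomial-in-$|U|$ number of types into the exponent. Everything else is arithmetic.
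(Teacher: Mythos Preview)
Your proposal is correct and follows essentially the same approach as the paper: group balanced vectors by their empirical type $q$, bound each type class by $2^{h(q)|U|}$ via Lemma~\ref{lem:multvsentropy}, bound $h(q)$ by $h(\mathcal{D})+\ln(2)|\Omega|\gamma\log(1/\gamma)$ via Lemma~\ref{lem:entclose}, and absorb the polynomial-in-$|U|$ number of types into the exponent. Your treatment of the absorption step (noting it requires $|U|$ above a threshold depending on $\Omega,\gamma$) is in fact slightly more careful than the paper's, which simply writes $|U|^{|\Omega|}\ll 2^{|U||\Omega|}$.
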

\begin{proof}
    Let $d$ be the dimension of $\mathcal{D}$, i.e., $p = (p_1,\ldots,p_d) : \Omega \rightarrow (0,1)^d$.  The number of
    $\gamma$-$\mathcal{D}$ balanced vectors is at most 
    \begin{displaymath}
        \sum_{q_1,\ldots,q_d \in (0,1)} \binom{|U|}{q_1 |U|,\ldots,q_n |U|},
    \end{displaymath}
   where the sum is over all the probability distributions $q_i$ such that
   $q_i\cdot|U|$ is integer and $q_i(\omega) \in [p_i(\omega) \pm \gamma]$ for
   every $\omega \in \Omega$ and $i \in \{1,\ldots,d\}$.

	Since the number of possibilities for such a $q_i$ is at most
$|U|^{|\Omega|}$ and $\binom{|U|}{q_1|U|,\ldots,q_d |U|} \leq
2^{h(q_1,\ldots,q_d)|U|}$ by Lemma~\ref{lem:multvsentropy}, we obtain that the
number of $\gamma$-$\mathcal{D}$ balanced vector is at most
	\[
        |U|^{|\Omega|} 2^{h(q_1,\ldots,q_d)|U|} 
        \leq |U|^{|\Omega|} 2^{\left(h(\mathcal{D})+\frac{1}{\ln(2)}|\Omega|\gamma\log \tfrac{1}{\gamma} \right)|U|}
        \leq 2^{h(\mathcal{D}) |U|} \cdot 2^{\Oh\left(|U||\Omega| \gamma
        \log(\tfrac{1}{\gamma})\right)},
	\]
	where the second inequality follows from Lemma~\ref{lem:entclose} and the third comes from $|U|^{|\Omega|} \ll 2^{|U||\Omega|}$.
\end{proof}

For example, Lemma~\ref{lem:boundbalancedvectors} bounds the number of
$\gamma$-$\Bin(k)$ balanced vectors by $2^{nh(\Bin(k)) + n f(\gamma,k)}$ for
some positive function $f(\gamma,k)$ that goes to $0$ when $\gamma \rightarrow 0$.

With Definition~\ref{def:balvec} in hand, we are ready to define the set of
pairs mentioned at the beginning of this section:
\[
	P \coloneqq \{ (a,b) \in A \times k\cdot B: b \in B \textnormal{ is } \eps^{0.01}\textnormal{-}\Bin(k) \textnormal{ balanced for }a \}.
\]
In Section~\ref{sec:technical-section} we prove the following somewhat technical lemma: 

\begin{lemma}
\label{lem:technical-contrib}
    Let $k < 0.01 \cdot \log(1/\eps)$.
    Then, for every $a \in \{0,1\}^n$ with $|a^{-1}(1)| > \eps^{0.01}n$, there exists
    $E_a \subseteq k \cdot B$, such that $|E_a|\geq 2^{(h(\Bin(k))-\eps^{0.1})n}$ and $(a,b) \in P$ for every $b \in E_a$.
\end{lemma}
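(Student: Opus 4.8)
The statement asserts that if $a$ is a $0$-$1$ vector with a non-trivial fraction of ones, then a large subset $E_a$ of the $k$-fold sumset $k\cdot B$ consists of sums $b=b_1+\cdots+b_k$ (with all $b_i\in B$) whose value distribution, restricted to the coordinates in $a^{-1}(1)$, is $\eps^{0.01}$-$\Bin(k)$ balanced, and moreover $E_a$ is almost as large as the trivial upper bound $2^{h(\Bin(k))n}$ on $|k\cdot B|$. The plan is to produce such a $b$ by a probabilistic argument: sample $b_1,\dots,b_k$ \emph{uniformly and independently from $B$}, show that the probability that $b_1+\cdots+b_k$ fails to be $\eps^{0.01}$-$\Bin(k)$ balanced for $a$ is small, and then show that conditioning on the balanced event still leaves a large set of attainable sums.

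\textbf{Step 1: $B$ looks uniform under few coordinate projections.} The first key point is that since $|B|\ge 2^{(1-\eps)n}$, the set $B$ is so dense in $\{0,1\}^n$ that its projection onto any fixed small set $S$ of coordinates is essentially uniform over $\{0,1\}^S$. More precisely, for a uniformly random $b\in B$ and a fixed index $i$, the probability that $b_i=1$ is within (roughly) $O(\eps)$ of $1/2$, and more generally the joint distribution of any $O(k)$ coordinates of a random $b\in B$ is within statistical distance $2^{O(k)}\eps$ of uniform; this follows by a counting/pigeonhole argument on the at most $2^{\eps n}$-to-one fibers of the projection map. In particular, for a single coordinate $i\in a^{-1}(1)$, if $b_1,\dots,b_k$ are independent uniform elements of $B$, then $(b_1)_i+\cdots+(b_k)_i$ is distributed within total variation $2^{O(k)}\eps$ of $\Bin(k)$.

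\textbf{Step 2: concentration over the coordinates of $a^{-1}(1)$.} Fix a value $j\in\{0,\dots,k\}$. We want to show that the fraction of indices $i\in a^{-1}(1)$ with $\big((b_1)_i+\cdots+(b_k)_i\big)=j$ is, with high probability over the sampling of $b_1,\dots,b_k$, within $\eps^{0.01}$ of $\binom{k}{j}2^{-k}$. The expectation is $\binom{k}{j}2^{-k}\pm 2^{O(k)}\eps$, which is well inside the target window because $k<0.01\log(1/\eps)$ forces $2^{O(k)}\eps \ll \eps^{0.01}$ (we can make the implied constant work out; this is where the quantitative slack in the hypotheses is used). For concentration around this expectation we cannot use full independence across the coordinates $i$, since the $b_t$ are global; instead I would use the fact that the coordinates of a single uniform $b\in B$ are "nearly independent" in the relevant quantitative sense, or — cleaner — bound the variance of $|a^{-1}(1)|^{-1}\sum_{i\in a^{-1}(1)}\mathbf 1[\cdot=j]$ directly using the near-pairwise-independence of coordinate pairs of $B$ (again from density), obtaining a variance of order $|a^{-1}(1)|^{-1} + 2^{O(k)}\eps$, and then apply Chebyshev. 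Since $|a^{-1}(1)|>\eps^{0.01}n$ is large, a union bound over the $k+1$ values of $j$ shows that the bad event has probability at most, say, $\eps^{0.1}$ (for $n$ large); more carefully we want the failure probability to be at most $1 - 2^{-\eps^{0.1} n}$ so that the \emph{number} of good sums is large — see Step 3.

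\textbf{Step 3: counting the good sums.} Let $E_a := \{ b_1+\cdots+b_k : b_1,\dots,b_k\in B,\ b_1+\cdots+b_k \text{ is } \eps^{0.01}\text{-}\Bin(k)\text{ balanced for } a\}$; by construction $E_a\subseteq k\cdot B$ and $(a,b)\in P$ for every $b\in E_a$ (note every $b\in E_a$ is of the form required and lies in $k\cdot B$). To lower-bound $|E_a|$, observe that a tuple $(b_1,\dots,b_k)\in B^k$ maps to an element of $E_a$ with probability $\ge 1-\eps^{0.1}$ by Step 2, so the number of \emph{good tuples} is at least $(1-\eps^{0.1})|B|^k \ge 2^{(k(1-\eps)-1)n}$. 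To convert the number of good tuples into a lower bound on the number of \emph{distinct sums} $E_a$, I bound the multiplicity: a fixed sum $s\in k\cdot B$ can be written as $b_1+\cdots+b_k$ in at most $\binom{s}{b}$-type ways, but more simply the number of representations of any $s$ as an ordered sum of $k$ vectors in $\{0,1\}^n$ is at most $\prod_i (\text{number of ways to write } s_i \text{ as a sum of } k \text{ bits}) = \prod_i \binom{k}{s_i} \le (2^k)^n$ trivially, but we need a bound matching $2^{h(\Bin(k))n}$ — this is obtained by noting that the good sums are themselves $\eps^{0.01}$-$\Bin(k)$ balanced (after passing through $a^{-1}(1)$; on the complement coordinates one gets $\Bin(k)$ exactly in distribution, so with an extra union-bound argument one gets balancedness there too, or one simply restricts attention to the $a^{-1}(1)$ coordinates), hence $|k^{-1} \text{reps of } s| \le \binom{n}{(\Bin(k)\pm\eps^{0.01})n}\cdot(\text{poly})$, which by Lemma~\ref{lem:multvsentropy} and Lemma~\ref{lem:entclose} is at most $2^{(h(\Bin(k))+O(\eps^{0.01}\log(1/\eps)))n}$. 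Dividing the count of good tuples by this multiplicity gives $|E_a| \ge 2^{(k(1-\eps) - 1 - h(\Bin(k)) - O(\eps^{0.01}\log(1/\eps)))n}$; one then checks, using $h(\Bin(k))=h(\Bin(k-1))+\Theta((\log k)/\sqrt k)$ and $k(1-\eps)-1 \ge k - O(k\eps) - 1$, that this exponent is at least $h(\Bin(k)) - \eps^{0.1}$ for $\eps$ small and $k$ in the allowed range, as required. (Here the $2^{o(n)}$ polynomial factors are absorbed into the $\eps^{0.1}$ slack, exactly as done elsewhere in the paper.)

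\textbf{Main obstacle.} The delicate part is Step 2 together with the bookkeeping in Step 3: we are juggling three small quantities — $\eps$ (density defect), $\eps^{0.01}$ (balancedness tolerance), $\eps^{0.1}$ (failure probability / loss budget) — against the parameter $k<0.01\log(1/\eps)$, and we must verify that all the error terms of the form $2^{O(k)}\eps$, $|a^{-1}(1)|^{-1}$, $k\cdot(\text{poly}(n))/2^n$ and $O(\eps^{0.01}\log(1/\eps))$ are genuinely dominated by the $\eps^{0.1}n$ exponent we are allowed to lose. In particular the "near-independence of coordinates of a dense set" claim (used both in the single-coordinate marginal estimate and in the variance bound for Chebyshev) must be stated and proved with explicit constants, since a naive union bound over all $2^{\Theta(k)}$ patterns on $k\cdot\text{(few)}$ coordinates costs a factor $2^{\Theta(k)}$ that is exactly why the hypothesis reads $k<0.01\log(1/\eps)$ rather than, say, $k=\Theta(\log(1/\eps))$ with a large constant. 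I expect this is precisely the content deferred to Section~\ref{sec:technical-section}.
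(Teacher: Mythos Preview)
Your Step~1 contains a genuine error that undermines the whole argument. You claim that for a uniformly random $b\in B$ the marginal on any single coordinate (or any $O(k)$ coordinates) is within statistical distance $2^{O(k)}\eps$ of uniform, ``by a counting/pigeonhole argument on the at most $2^{\eps n}$-to-one fibers of the projection map''. This is false: the hypothesis $|B|\ge 2^{(1-\eps)n}$ only says $B$ has \emph{density} $2^{-\eps n}$ in the cube, and a set of that density can be completely degenerate on individual coordinates. Concretely, take $B=\{b\in\{0,1\}^n : b_1=\cdots=b_{\lfloor\eps n\rfloor}=0\}$; then $|B|=2^{(1-\eps)n}$ exactly, yet $\Pr_{b\sim B}[b_1=1]=0$, so the first coordinate is at total-variation distance $1/2$ from uniform, not $O(\eps)$. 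The same example kills the ``near-pairwise-independence'' variance bound you invoke in Step~2: nothing forces most coordinate pairs in $a^{-1}(1)$ to be close to independent, and without that your Chebyshev step has no traction.

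The paper's proof avoids coordinate-wise control entirely. The replacement idea is \emph{block-level} uniformity: for any partition of $[n]$ into pieces of size $\ge\mu n$, all but a tiny fraction of $b\in B$ have roughly half their ones in each block (Lemma~\ref{cor:balancedLBr}). Iterating this over $j=1,\dots,k$, where at step $j$ the blocks are the level sets of the first $j-1$ columns, one shows that at least $(|B|/2)^k$ tuples $(b_1,\dots,b_k)$ have every row pattern $z\in\{0,1\}^k$ occurring $n/2^k\pm\gamma n$ times (Lemma~\ref{balancedBkr}). This already gives global $\Bin(k)$-balancedness of the sums, and your multiplicity division in Step~3 (which is essentially correct and matches Lemma~\ref{lem:sizekBr}) then yields a large set $E$ of globally balanced sums. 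The final step---passing from global balancedness to balancedness for the specific set $a^{-1}(1)$---is done not by probabilistic concentration over the sampling of tuples, but by a permutation-symmetry double count (Lemma~\ref{lem:manybalancers}) applied to the subfamily $E'\subseteq E$ of vectors with one fixed exact value distribution: since the ``not balanced for $X$'' relation is invariant under simultaneous permutation of the vector and of $X$, one can bound the bad vectors for $X=a^{-1}(1)$ by instead bounding, for a fixed vector, the number of bad $X$'s, which is a straightforward tail estimate. Your approach could perhaps be repaired along these lines, but as written the coordinate-marginal claim is the missing load-bearing piece.
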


Note, that we can assume that $\alpha > \eps^{0.01}$ because otherwise $|A|
\le \binom{n}{\eps^{0.01}n} \le 2^{\eps^{0.01}\log(4/\eps)n} \le 2^{\delta n}$
and Theorem~\ref{thm:udcp} follows automatically.

Thus, we may apply Lemma~\ref{lem:technical-contrib} for each $a \in A$ and obtain that
\begin{equation}\label{ineq:lowerboundP}
    |P| \geq |A|\cdot 2^{(h(\Bin(k))-\eps^{0.1})n}.
\end{equation}

On the other hand, the balancedness property can be used to give an upper bound on $P$ via Lemma~\ref{lem:UDCP}. To do so, the following will be useful:

\begin{lemma}\label{lem:Pbalanced}
    If $(a,b) \in P$, then $a+b$ is $(2\eps^{0.01})$-$\Bin(k,\alpha)$ balanced.
\end{lemma}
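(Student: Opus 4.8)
The plan is to unwind the definitions and combine the balancedness of $b$ with respect to $a$ with the trivial (exact) balancedness of $b$ with respect to $[n]\setminus a^{-1}(1)$. Write $S := a^{-1}(1)$, so $|S| = \alpha n$ and $|[n]\setminus S| = (1-\alpha)n$. For a coordinate $j\in[n]$, the value $(a+b)_j$ equals $b_j$ if $j\notin S$ and $b_j+1$ if $j\in S$. Hence for each value $i\in\{0,\ldots,k+1\}$ we have the exact identity
\begin{equation*}
|(a+b)^{-1}(i)| = |b^{-1}(i)\cap ([n]\setminus S)| + |b^{-1}(i-1)\cap S|,
\end{equation*}
with the convention that $b^{-1}(-1)=\emptyset$ and $b^{-1}(k+1)=\emptyset$ (since $b\in k\cdot B$ has coordinates in $\{0,\ldots,k\}$). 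Dividing by $n$, we want to show the right-hand side is within $2\eps^{0.01}$ of $p'(i) = (1-\alpha)\binom{k}{i}2^{-k} + \alpha\binom{k}{i-1}2^{-k}$, the probability mass function of $\Bin(k,\alpha)$.

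First I would control the term $|b^{-1}(i)\cap S|/n$. By the definition of $P$, the vector $b$ is $\eps^{0.01}$-$\Bin(k)$ balanced for $a$, which by Definition~\ref{def:balvec} (applied with $X = a^{-1}(1) = S$) means $|b^{-1}(i)\cap S|/|S| \in [\binom{k}{i}2^{-k}\pm \eps^{0.01}]$. Multiplying through by $|S|/n = \alpha$ gives $|b^{-1}(i)\cap S|/n \in [\alpha\binom{k}{i}2^{-k} \pm \alpha\eps^{0.01}]$, and since $\alpha\le 1$ the error is at most $\eps^{0.01}$. Next I would control $|b^{-1}(i)\cap([n]\setminus S)|/n$. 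Here I use that every $b\in k\cdot B$ satisfies $b = b_1+\cdots+b_k$ with each $b_t\in B$... actually the cleanest route is: $|b^{-1}(i)\cap([n]\setminus S)| = |b^{-1}(i)| - |b^{-1}(i)\cap S|$, and the set of all $\eps^{0.01}$-$\Bin(k)$ balanced-for-$a$ vectors in particular have $b$ itself $\eps^{0.01}$-$\Bin(k)$ balanced for $[n]$ — but this is not automatic from the definition of $P$. The safe approach is to combine the two pieces of information we \emph{do} have: $b$ is balanced for $S$ (from $P$), and we need balancedness for $[n]\setminus S$ as well. I expect the intended reading is that membership in $P$ together with $b\in k\cdot B$ and the warm-up construction guarantees $b$ is also $\eps^{0.01}$-$\Bin(k)$ balanced for $[n]\setminus S$ (or for $[n]$); indeed, this is exactly the extra output one should extract from Lemma~\ref{lem:technical-contrib}, so I would state and use that $b$ is $\eps^{0.01}$-$\Bin(k)$ balanced for both $S$ and $[n]\setminus S$.

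Granting that, $|b^{-1}(i)\cap([n]\setminus S)|/|[n]\setminus S| \in [\binom{k}{i}2^{-k}\pm\eps^{0.01}]$, so $|b^{-1}(i)\cap([n]\setminus S)|/n \in [(1-\alpha)\binom{k}{i}2^{-k}\pm(1-\alpha)\eps^{0.01}]$, with error at most $\eps^{0.01}$. Adding the two displayed estimates (one for the $([n]\setminus S)$-part at value $i$, one for the $S$-part at value $i-1$) and using the triangle inequality yields
\begin{equation*}
\frac{|(a+b)^{-1}(i)|}{n} \in \left[(1-\alpha)\binom{k}{i}2^{-k} + \alpha\binom{k}{i-1}2^{-k} \;\pm\; 2\eps^{0.01}\right] = [p'(i)\pm 2\eps^{0.01}],
\end{equation*}
which is precisely the statement that $a+b$ is $(2\eps^{0.01})$-$\Bin(k,\alpha)$ balanced. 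The main obstacle is the bookkeeping point flagged above: making sure the definition of $P$ (via Lemma~\ref{lem:technical-contrib}) really delivers balancedness of $b$ with respect to the complement $[n]\setminus a^{-1}(1)$ and not only with respect to $a^{-1}(1)$; once that is pinned down, the rest is the routine two-term triangle-inequality computation sketched here.
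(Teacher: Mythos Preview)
Your approach is essentially identical to the paper's: write $|(a+b)^{-1}(i)| = |b^{-1}(i)\cap a^{-1}(0)| + |b^{-1}(i-1)\cap a^{-1}(1)|$, bound each term via balancedness, and add the two errors.

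The bookkeeping concern you flag is real and is exactly the point where the paper is informal. The paper's definition of $P$ says ``$b$ is $\eps^{0.01}$-$\Bin(k)$ balanced for $a$'', where $a$ is a \emph{vector}, not a set---this is not literally covered by Definition~\ref{def:balvec}. In the paper's own proof of the lemma, this phrase is simply read as ``balanced for $a^{-1}(1)$ \emph{and} for $a^{-1}(0)$'' (the paper writes ``And similarly'' and asserts the bound on $|a^{-1}(0)\cap b^{-1}(i)|$ without further comment). So the paper resolves the issue by convention rather than by argument. Your instinct that this must ultimately be backed up by Lemma~\ref{lem:technical-contrib} is correct: the set $E_a$ constructed there lies inside $E$, so every $b\in E_a$ is already globally $(2^k\gamma)$-$\Bin(k)$ balanced for $[n]$ by Lemma~\ref{obs:bnml-dstr}, and hence balancedness for $a^{-1}(0)$ follows by subtraction from balancedness for $a^{-1}(1)$ and for $[n]$. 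With that reading fixed, the rest of your computation is exactly what the paper does.
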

\begin{proof}
    From the definition of $P$, vector $b$ is $\eps^{0.01}$-$\Bin(k)$ balanced
    for $a$. So, for every $i \in \{0,\ldots,k\}$:
    \begin{displaymath}
        |a^{-1}(1) \cap b^{-1}(i)| \in \left[
            \binom{k}{i}\frac{\alpha n}{2^k} \pm \eps^{0.01} n
        \right]
    \end{displaymath}
    And similarly, 
    \begin{displaymath}
        |a^{-1}(0) \cap b^{-1}(i)| \in \left[
            \binom{k}{i}\frac{(1-\alpha) n}{2^k} \pm \eps^{0.01} n
        \right]
    \end{displaymath}
    It follows that for every $i \in \{0,\ldots,k+1\}$ it holds that:
    \begin{displaymath}
        |(a+b)^{-1}(i)| \in \left[
            \binom{k}{i}\frac{(1-\alpha) n}{2^k} +
            \binom{k}{i-1}\frac{\alpha n}{2^k} \pm 2 \eps^{0.01} n
        \right].
    \end{displaymath}
\end{proof}

Next we define function $\eta(a,b) \coloneqq a+b$. Observe that $\eta$ is an injective function on $A \times k\cdot B$ by Lemma~\ref{lem:UDCP}, and since $P \subseteq A \times k\cdot B$ we have $|\eta(P)| = |P|$.
By Lemma~\ref{lem:Pbalanced}, every vector in $\eta(P)$ is $(2\eps^{0.01})$-$\Bin(k,\alpha)$ balanced, and thus Lemma~\ref{lem:boundbalancedvectors} implies
\begin{equation}\label{ineq:upper-bound-P}
    |P| \le  2^{n \cdot h(\Bin(k,\alpha))} \cdot 2^{\Oh(n k \eps \log(1/\eps))}.
\end{equation}

\subsection{Proof of Theorem~\ref{thm:udcp}}
\label{sec:proof-udcp}
By combining~\eqref{ineq:lowerboundP} and~\eqref{ineq:upper-bound-P} we obtain the following bound:
\begin{equation}\label{ineq:Abound}
    |A| \le 2^{n (h(\Bin(k,\alpha))-h(\Bin(k)))} \cdot 2^{\Oh((\eps^{0.01}+k\eps \log \tfrac{1}{\eps})n)}.
\end{equation}

By Lemma~\ref{lem:hatd-hatc} we have that $h(\Bin(k,\alpha))\leq h(\Bin(k+1))$, and thus it remains to bound the difference in entropy of two consecutive binomial distributions as follows:
\begin{lemma}\label{lem:consecbinomentro}
	For large enough $k$, we have that $h(\Bin(k))-h(\Bin(k-1)) \leq \frac{\log k}{\sqrt{k}}$.
\end{lemma}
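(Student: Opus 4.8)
The plan is to compute the entropy of $\Bin(k)$ asymptotically and then take the difference of the resulting closed-form expressions for $k$ and $k-1$. The key fact is the classical local central limit / Stirling estimate: if $Z\sim\Bin(k)$, then $h(Z)=\tfrac12\log(2\pi e k/4)+o(1)=\tfrac12\log\bigl(\tfrac{\pi e k}{2}\bigr)+o(1)$ as $k\to\infty$. I would derive this by writing $h(\Bin(k))=k-2^{-k}\sum_{i=0}^k\binom{k}{i}\log\binom{k}{i}$ and applying Stirling's approximation to $\binom{k}{i}$, concentrated around $i=k/2\pm O(\sqrt k)$; the contribution of the tails $|i-k/2|>k^{0.6}$, say, is negligible by a Chernoff bound, and on the bulk $\log\binom{k}{i}=k-\tfrac12\log(\pi k/2)-\tfrac{2(i-k/2)^2}{k\ln 2}+o(1)$, so that averaging the quadratic term against the (near-Gaussian) binomial weights contributes $\tfrac{1}{2\ln 2}\cdot\tfrac{\mathrm{Var}}{k/4}\cdot\tfrac14=\Theta(1)$ and one collects the stated constant. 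Alternatively, one may simply \emph{cite} the standard asymptotic expansion $h(\Bin(k))=\tfrac12\log k+\tfrac12\log(\pi e/2)+O(1/k)$.

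Granting the expansion $h(\Bin(k))=\tfrac12\log k+C+O(1/k)$ for an absolute constant $C=\tfrac12\log(\pi e/2)$, the difference is
\[
h(\Bin(k))-h(\Bin(k-1))=\tfrac12\log\tfrac{k}{k-1}+O\!\left(\tfrac1k\right)=\tfrac12\log\!\left(1+\tfrac1{k-1}\right)+O\!\left(\tfrac1k\right)=O\!\left(\tfrac1k\right),
\]
using $\log(1+x)\le x/\ln 2$. Since $O(1/k)\le \log k/\sqrt k$ for all sufficiently large $k$, the claimed bound follows. (Note the bound we prove, $O(1/k)$, is in fact stronger than the stated $\log k/\sqrt k$; the weaker form stated in the lemma suffices for the application and leaves room for a cruder estimate of the error term.)

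If one prefers to avoid the full asymptotic expansion and only needs the crude $\log k/\sqrt k$ bound, an alternative is a direct coupling argument: realize $\Bin(k)$ as $\Bin(k-1)$ plus an independent fair coin, so $h(\Bin(k))\le h(\Bin(k-1))+1$ trivially and $h(\Bin(k))\ge h(\Bin(k-1))$ by monotonicity, and then sharpen the upper bound by observing that conditioned on the value $j$ of the first $k-1$ coins, the last coin's contribution to the conditional entropy is $h(\Bin(k))-h(\Bin(k-1))=I\big(X_k;\,S_k\big)$ where $S_k$ is the total sum; bounding this mutual information by the fact that $S_k$ is tightly concentrated (standard deviation $\Theta(\sqrt k)$) and $X_k$ is a single bit that only shifts the mean by $1$ gives $I(X_k;S_k)=O(1/\sqrt k)$, which already beats $\log k/\sqrt k$. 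The main obstacle in either route is purely the bookkeeping in the Stirling/concentration estimate — making the $o(1)$ error terms explicit enough to conclude — rather than any conceptual difficulty; I would handle it by cutting off the binomial tails at $k/2\pm k^{0.6}$ and using that both the entropy contribution of the tail and the Stirling error on the bulk are $O(k^{-1/2+\epsilon})$.
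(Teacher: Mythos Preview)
Your proposal is correct, and both routes you sketch (the Stirling/CLT asymptotic $h(\Bin(k))=\tfrac12\log k+C+O(1/k)$ and the chain-rule identity $h(\Bin(k))-h(\Bin(k-1))=I(X_k;S_k)$ followed by concentration of $S_k/k$ around $1/2$) yield the bound, in fact the sharper estimate $O(1/k)$.

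The paper takes a different, fully elementary route that avoids Stirling, concentration, and any asymptotic expansion. Writing $f(k,i)=\binom{k}{i}2^{-k}\log\bigl(2^k/\binom{k}{i}\bigr)$ so that $h(\Bin(k))=\sum_i f(k,i)$, it proves the term-wise comparison $f(k,i)\le f(k-1,i)$ for $i<\lfloor k/2\rfloor$ and $f(k,i)\le f(k-1,i-1)$ for $i\ge k/2$, by a two-line monotonicity argument on $g(x)=x\log(1/x)$ together with Pascal-type inequalities $\binom{k}{i}2^{-k}\le\binom{k-1}{i}2^{-(k-1)}$ (resp.\ $\le\binom{k-1}{i-1}2^{-(k-1)}$). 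Summing pairs off all terms of $h(\Bin(k))$ against those of $h(\Bin(k-1))$ except the single middle term $f(k,\lfloor k/2\rfloor)$, which is bounded by $\log k/\sqrt{k}$ via $\binom{k}{\lfloor k/2\rfloor}\le 2^k/\sqrt{k}$. So the paper's argument is shorter and entirely self-contained (no $o(1)$ bookkeeping, no tail cutoffs), at the cost of the weaker bound $\log k/\sqrt{k}$ rather than your $O(1/k)$; your approach buys the tighter constant but requires either citing the asymptotic expansion or carrying out the Stirling/Chernoff estimates you outline.
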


Before we present the proof of Lemma~\ref{lem:consecbinomentro}, let us see how to use it.
We choose $k \coloneqq \Theta(\log(1/\eps))$. Thus Lemma~\ref{lem:consecbinomentro} implies that
\begin{displaymath}
    |A| \le 2^{n \left( \log{k}/\sqrt{k} + \eps^{0.01}\log(1/\eps)\right)} = 2^{\Oh(n \cdot\delta(\eps))},
\end{displaymath}

where $\delta(\eps)= \Oh_{\eps \rightarrow 0}\left(\frac{\log(\log(1/\eps))}{\sqrt{\log(1/\eps)}}\right)$, 
because $\eps^{0.01} \log(1/\eps) \ll \delta(\eps)$ for small enough $\eps$. This finishes the
proof of Theorem~\ref{thm:udcp}.

\begin{proof}[Proof of Lemma~\ref{lem:consecbinomentro}]
	
	For every $i,k \in \nat$ such that $i \leq k$ let us define an auxiliary function:
	\[
	f(k,i) \coloneqq \frac{\binom{k}{i}}{2^{k}} \log \left( \frac{2^k}{\binom{k}{i}} \right)
	.
	\]
	Thus we have $h(\Bin(k))=\sum_{i=0}^k f(k,i)$.
	To relate $h(\Bin(k))$ with $h(\Bin(k-1))$, the following will be useful:
	\begin{claim}\label{clm:binom}
		\begin{equation}
		f(k,i) \leq
		\begin{cases}
		f(k-1,i), & \text{if } i < \lfloor k/2 \rfloor,   \\
		f(k-1,i-1), & \text{if } i \geq k/2.
		\end{cases}
		\end{equation}
	\end{claim}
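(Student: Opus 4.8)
The plan is to push everything through the single‑variable function $g(x):=x\log(1/x)$: note that $f(k,i)=g\!\left(\binom{k}{i}/2^k\right)$, and write $u:=\binom{k-1}{i}/2^{k-1}$ and $v:=\binom{k-1}{i-1}/2^{k-1}$. Pascal's identity $\binom{k}{i}=\binom{k-1}{i}+\binom{k-1}{i-1}$ then says $\binom{k}{i}/2^k=\tfrac12(u+v)$, and in particular $\binom{k}{i}/2^k\le\max\{u,v\}$. So the whole claim will follow once I know that $g$ is increasing on the interval where $u$, $v$ and their average live — passing from $\tfrac12(u+v)$ up to $\max\{u,v\}$ then only increases $g$ — together with the trivial identifications $g(u)=f(k-1,i)$ and $g(v)=f(k-1,i-1)$.

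First I would record the two facts that make this work. (i) Differentiating, $g'(x)=-\log(ex)$, so $g$ is strictly increasing on $[0,1/e]$ and $g(0)=0$. (ii) For all sufficiently large $k$ (an explicit small threshold suffices, and Lemma~\ref{lem:consecbinomentro} is in any case only asserted for large $k$), every normalized binomial coefficient satisfies
\[
\frac{\binom{k-1}{j}}{2^{k-1}}\le\frac{\binom{k-1}{\lfloor(k-1)/2\rfloor}}{2^{k-1}}\le\frac1e ,
\]
which follows from the standard bound on the central binomial coefficient (e.g.\ $\binom{2m}{m}\le 4^m/\sqrt{\pi m}$, the odd case being smaller still). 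In particular $u,v$ and $\tfrac12(u+v)$ all lie in $[0,1/e]$. Finally, $v/u=\binom{k-1}{i-1}/\binom{k-1}{i}=i/(k-i)$, so $\max\{u,v\}=u$ exactly when $i\le k/2$ and $\max\{u,v\}=v$ exactly when $i\ge k/2$.

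Combining these, for large $k$,
\[
f(k,i)=g\!\left(\tfrac{u+v}{2}\right)\le g\bigl(\max\{u,v\}\bigr),
\]
using $\tfrac12(u+v)\le\max\{u,v\}$ and that both points lie in $[0,1/e]$, where $g$ is increasing. If $i<\lfloor k/2\rfloor$ then $i\le k/2$, so $\max\{u,v\}=u$ and the right‑hand side is $g(u)=f(k-1,i)$; if $i\ge k/2$ then $\max\{u,v\}=v$ and it is $g(v)=f(k-1,i-1)$. This yields both inequalities of the claim (and, incidentally, also handles the single boundary index $i=\lfloor k/2\rfloor$ that is formally uncovered for odd $k$). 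The extreme indices $i=0$ and $i=k$, where $\binom{k-1}{i-1}$ respectively $\binom{k-1}{i}$ vanishes, are covered by the same computation with $v=0$ respectively $u=0$.

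The only step that is not purely mechanical is fact (ii): the uniform bound $\binom{k-1}{j}/2^{k-1}\le 1/e$ is exactly what keeps all relevant points on the increasing branch of $g$, so that the move ``average $\le$ maximum'' is transported by $g$ in the favorable direction. For small $k$ this bound fails and so does the claim — for instance $f(3,2)=\tfrac38\log\tfrac83>\tfrac12=f(2,1)$ — which is precisely why the surrounding Lemma~\ref{lem:consecbinomentro}, and hence this claim invoked in its proof, is stated only for large $k$; with that hypothesis in place the remainder is just monotonicity of $g$ plus bookkeeping of which of $u,v$ dominates.
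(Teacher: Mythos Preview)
Your proof is correct and follows essentially the same approach as the paper: both introduce $g(x)=x\log(1/x)$, observe that $f(k,i)=g(\binom{k}{i}/2^k)$, use that $g$ is increasing on $[0,1/e]$ together with the central binomial bound to place all arguments in that interval, and then reduce the claim to the elementary inequality $\binom{k}{i}/2^k\le\binom{k-1}{i}/2^{k-1}$ (resp.\ $\binom{k-1}{i-1}/2^{k-1}$) in the two regimes. Your phrasing via Pascal's identity $\binom{k}{i}/2^k=\tfrac12(u+v)\le\max\{u,v\}$ is a slightly cleaner packaging of the same comparison the paper does via the ratios $\binom{k}{i}=\binom{k-1}{i}\cdot\frac{k}{k-i}$ and $\binom{k}{i}=\binom{k-1}{i-1}\cdot\frac{k}{i}$, but the substance is identical.
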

	\begin{proof}
		Define $g(x)=x\cdot\log(1/x)$. Since its derivative is $g'(x)=-\frac{\ln(x)+1}{\ln(2)}$ we have that $g(x) \leq g(x')$ whenever $x \leq x'\leq 1/e$. 
		
		Note $f(k,i)=g(\binom{k}{i}/2^{k})$, and since $\binom{k}{i}/2^{k} \leq 1/\sqrt{k}$ by the standard bound $\binom{k}{i} \leq 2^{k}/\sqrt{k}$, we have $\binom{k}{i}/2^{k}\leq 1/e$ for $k \geq 9$. Thus to prove the claim it remains to show that 
		\[
		\binom{k}{i}2^{-k} \leq
		\begin{cases}
		\binom{k-1}{i}2^{-(k-1)}, & \text{if } i < \lfloor k/2 \rfloor,   \\
		\binom{k-1}{i-1}2^{-(k-1)}, & \text{if } i \geq k/2.
		\end{cases}
		\]
		To see this first suppose $i < \lfloor k/2 \rfloor$. Then we have that
		\[
		\binom{k}{i}2^{-k}=\binom{k-1}{i}\frac{k}{k-i-1}2^{-k} \leq \binom{k-1}{i}2^{-(k-1)}.
		\]
		Second, if $i \geq k/2$, then we have that
		\[
		\binom{k}{i}2^{-k} = \binom{k-1}{i-1}\frac{k-1}{i-1}2^{-k} \leq \binom{k-1}{i-1}2^{-(k-1)}.
		\]
	\end{proof}
	Now we can use Claim~\ref{clm:binom} to give the required upper bound:
	\begin{align*}
	h(\Bin(k)) &= \sum_{i=0}^k f(k,i)\\
	&= \left( \sum_{i=0}^{\lfloor k/2\rfloor - 1} f(k,i) \right) + \left(\sum_{i=\lfloor k/2\rfloor+1}^{k} f(k,i)\right) + f(k,\lfloor k/2 \rfloor)\\ 
	&\leq \left( \sum_{i=0}^{\lfloor k/2\rfloor-1} f(k-1,i) \right) + \left(\sum_{i=\lfloor k/2\rfloor+1}^{k} f(k-1,i-1)\right) + f(k,\lfloor k/2 \rfloor)\\
	&= h(\Bin(k-1) + f(k,\lfloor k/2 \rfloor)\\
	&\leq h(\Bin(k-1)) + \log(k)/\sqrt{k},
	\end{align*}
	where we use Claim~\ref{clm:binom} in the first inequality, and $\binom{k}{\lfloor k/2\rfloor} \leq 2^k / \sqrt{k}$ in the second inequality. 
	Hence $h(\Bin(k)) - h(\Bin(k-1)) \le \log(k)/\sqrt{k}$.
\end{proof}

\section{Properties of $k \cdot B$: Proof of Lemma~\ref{lem:technical-contrib}}
\label{sec:technical-section}

\newcommand{\zsum}[0]{\zeta}

In this section, we prove the Lemma~\ref{lem:technical-contrib}.

\begin{lemma}
 Let $k < 0.01 \cdot \log(1/\eps)$.
	Then, for every $a \in \{0,1\}^n$ with $|a^{-1}(1)| > \eps^{0.01}n$, there exists
	$E_a \subseteq k \cdot B$, such that $|E_a|\geq 2^{(h(\Bin(k))-\eps^{0.1})n}$ and $(a,b) \in P$ for every $b \in E_a$.
\   
\end{lemma}
Recall that
\[
	P = \{ (a,b) \in A \times k\cdot B: b \in B \textnormal{ is } \eps^{0.01}\textnormal{-}\Bin(k) \textnormal{ balanced for }a \}.
\]
Intuitively, we prove that for any fixed set $B \subseteq \{0,1\}^n$ there exists a
large set $E_a \subseteq k \cdot B$ with the following property: for every $b \in E_a$ we can perturb $\eps^{0.01}n$ entries in
$b$, such that it is indistinguishable from a vector randomly sampled from
the binomial distribution, even if we focus on a concrete subset of coordinates
$a^{-1}(1) \subseteq \setrng{n}$.


First, observe that we can interpret a tuple $(b_1,\ldots,b_k) \in B^k$ as the $n \times k$ matrix with the $i$'th column equal to $b_i$. Note that we interpret an $x \times y$ matrix as a tuple of $y$ vectors of dimension $x$.
We interchangeably address such a tuple as an $n \times k$ matrix and as an
element of $\{0,1\}^{n \times k}$.
To emphasize the type of such variables, we denote such matrices with boldface.
For example, $(b_1,\ldots,b_k)$ is denoted with $\textbf{b} \in \{0,1\}^{n
    \times k}$.

The notation $\mathbf{b}^T$ denotes the transpose of a matrix. Next, $C \coloneqq \{
\mathbf{b}^T : \mathbf{b} \in B^k \} \subseteq \{0,1\}^{k \times n}$ denotes the set of matrices $B^k$ interpreted in the
transposed way.

In Section~\ref{sec:uniform-k-tuples} we show how to select a subset $D \subseteq C$
of matrices in $C$, in such a way that for all $\mathbf{b} \in D \subseteq
\{0,1\}^{k \times n}$,
any column $z \in \{0,1\}^k$ occurs $\frac{n}{2^k} \pm f(\eps)n$ times in $\mathbf{b}$, that is $|\mathbf{b}^{-1}(z)| \in [\frac{n}{2^k} \pm f(\eps)n]$.

Next, in Section~\ref{sec:many-distinct-sums} we define the operation
$\zsum((a_1,\ldots,a_x))\coloneqq \sum_{i=1}^x a_i$, that sums the columns of a matrix
$\mathbf{a} \in \mathbb{Z}^{y \times x}$ to a single column $\zsum(\mathbf{a}) \in \mathbb{Z}^y$ (see Figure~\ref{fig:zeta}).
We consider the set $E \coloneqq  \{ \zsum(\mathbf{b}^T) : \mathbf{b} \in D \} \subseteq \{0,\ldots,k\}^{n}$, and argue that each vector in $E$ is $\gamma$-$\Bin(k)$ balanced for some small $\gamma >0$.

Finally, in Section~\ref{sec:selecting-E_a} we take care of $a \in \{0,1\}^n$
and select the set $E_a \subseteq E$ to be all vectors in $E$ that are $\eps^{0.01}$-$\Bin(k)$ balanced for $a^{-1}(1)$.

\begin{figure}[t]
	\centering
	\includegraphics[scale=0.75]{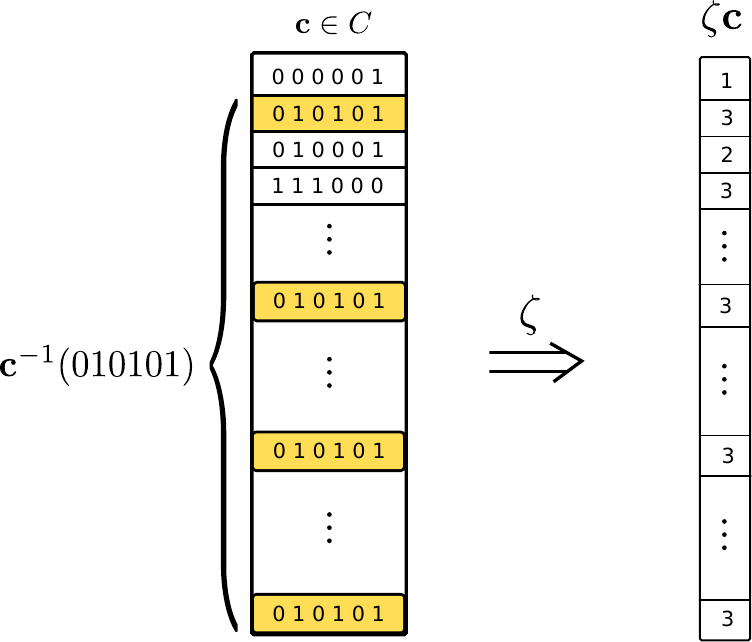}
	\caption{The $\zsum$ operation takes an $x \times y$ matrix $\mathbf{c} \in \mathbb{Z}^{x \times y}$ as input and outputs a vector $\zsum(\mathbf{c}) \in \mathbb{Z}^x$ by adding all columns.}
	\label{fig:zeta}	
\end{figure}

\paragraph{Uniform distribution.} 

We define the \emph{uniform distribution} to be
$\Uni(\Omega)=(\Omega,p)$ if $p(\omega)=\frac{1}{|\Omega|}$ for each
$\omega\in\Omega$. We will focus on the special
cases when $\Omega = \{0,1\}$ and $\Omega = \{0,1\}^k$. Thus, $v
\in \big(\Uni(\{0,1\}) \pm \gamma\big)^{\setrng{n}}$ means that $v^{-1}(i) \in [n/2 \pm
\gamma n]$ for all $i \in \{0,1\}$. Similarly, $\mathbf{v} \in
\big(\Uni(\{0,1\}^k) \pm \gamma\big)^{\setrng{n}}$ means that $\mathbf{v}^{-1}(z) \in
[n/2^k \pm \gamma n]$ for all $z \in \{0,1\}^k$. If $\Omega$ is clear from the context, $v \in \Omega^{U}$ and $X \subseteq U$, we also say that a vector is $\gamma$-uniform for $X$ to refer to the statement that it is $\gamma$-$\Uni(\Omega)$ balanced for $X$.

\paragraph{Inequalities.}

Through the section we assume that $\eps \le 1/2^{4 k}$, $\gamma = 4\sqrt\eps$
and $k > 100$ is an integer. This means that the following inequalities hold:

\begin{equation}
    \label{ineq:2kgamma}
    \eps^{1/2} \le 2^k \gamma \le \eps^{1/4},
\end{equation}

\begin{equation}
    \label{ineq:2kgammalog}
    100 \cdot k 2^k \gamma \log(1/(2^k\gamma)) \le \eps^{1/5}.
\end{equation}

\subsection{Constructing a set $D$ of uniform $k$-tuples}
\label{sec:uniform-k-tuples}
We first prove the following result that will be helpful to obtain the aforementioned set $C$.

\begin{lemma}[Most vectors in $B$ are uniform]\label{cor:balancedLBr}
    Let $U_1 \uplus \ldots \uplus U_\ell = \setrng{n}$ be a partition such that
	$|U_i| \ge \mu n$ for all $i\in[\ell]$. Let $\lambda \in
	(0,1/2]$.
	For every $B \subseteq \{0,1\}^n$ with $|B| \geq
	2^{(1 - \mu \lambda^2)n - o(n)}$ it holds that:
	\begin{displaymath}
	\left| \left\{
	b \in B \; : \; b \text{ is $\lambda$-uniform in }U_i \text{ for every $i$} 
	\right\}\right|  \geq |B|/2.
	\end{displaymath}
\end{lemma}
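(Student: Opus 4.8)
The plan is a union-bound argument: I would count the \emph{exceptional} vectors $b\in\{0,1\}^n$ that fail to be $\lambda$-uniform in at least one block $U_i$, show this count is below $|B|/2$, and conclude that the remaining (at least $|B|/2$) vectors of $B$ are simultaneously $\lambda$-uniform in every block.

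Fix a block $U_i$. Unwinding Definition~\ref{def:balvec} with $\mathcal{D}=\Uni(\{0,1\})$, a vector $b$ fails to be $\lambda$-uniform for $U_i$ precisely when the number of its $1$-coordinates lying in $U_i$ differs from $|U_i|/2$ by more than $\lambda|U_i|$ (the condition for the symbol $0$ is equivalent, since the two counts sum to $|U_i|$). For a uniformly random $b\in\{0,1\}^n$ the $|U_i|$ coordinates indexed by $U_i$ are independent fair coin flips, so Hoeffding's inequality bounds the probability of this deviation by $2\exp(-2\lambda^2|U_i|)\le 2\cdot 2^{-2\lambda^2|U_i|}\le 2\cdot 2^{-2\mu\lambda^2 n}$, where we used $|U_i|\ge\mu n$ and $e^{-x}\le 2^{-x}$. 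Hence at most $2^{n+1}\cdot 2^{-2\mu\lambda^2 n}$ vectors of $\{0,1\}^n$ are exceptional for this particular block $U_i$. (One could equally well bound the number of bad patterns on $U_i$ by summing binomial coefficients and invoking $\binom{t}{pt}\le 2^{h(p)t}$ from Lemma~\ref{lem:multvsentropy}.)

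Next I would take a union bound over $i\in[\ell]$. From $n=\sum_{i=1}^\ell|U_i|\ge\ell\mu n$ we get $\ell\le 1/\mu$, so the total number of exceptional vectors in $\{0,1\}^n$ is at most $\ell\cdot 2^{n+1}\cdot 2^{-2\mu\lambda^2 n}\le(2/\mu)\cdot 2^{(1-2\mu\lambda^2)n}$. Comparing with the hypothesis $|B|\ge 2^{(1-\mu\lambda^2)n-o(n)}$, it suffices to check $\log_2(4/\mu)+o(n)\le\mu\lambda^2 n$, which holds for all sufficiently large $n$ since the left-hand side is sublinear while $\mu\lambda^2 n$ is linear ($\mu$ and $\lambda$ being fixed). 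Consequently at most $|B|/2$ vectors of $\{0,1\}^n$ — in particular at most $|B|/2$ elements of $B$ — are exceptional, so at least $|B|/2$ vectors of $B$ are $\lambda$-uniform in every $U_i$, as claimed.

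The argument is routine; the only slightly delicate point is the bookkeeping around the $-o(n)$ slack, where one must ensure that the exponential saving $2^{-\mu\lambda^2 n}$ coming from concentration dominates both the $2\ell\le 2/\mu$ factor introduced by the union bound and the $2^{o(n)}$ slack in the hypothesis. This is immediate once $\mu$, $\lambda$ (and hence $\ell\le 1/\mu$) are treated as constants independent of $n$, so I do not expect any genuine obstacle.
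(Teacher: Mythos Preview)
Your proposal is correct and follows essentially the same union-bound strategy as the paper: bound the number of vectors in $\{0,1\}^n$ that fail $\lambda$-uniformity on a fixed block $U_i$, sum over the (at most $1/\mu$) blocks, and compare the resulting count to $|B|$. The only cosmetic difference is that you invoke Hoeffding's inequality where the paper uses the entropy bound $\binom{sn}{(1/2-\lambda)sn}\le 2^{h(1/2-\lambda)sn}\le 2^{(1-\lambda^2)sn}$ --- an alternative you already mention parenthetically --- and your Hoeffding route even yields a slightly sharper exponent ($2\mu\lambda^2$ versus the paper's $\mu\lambda^2$), which is immaterial here.
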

\begin{proof}
	For a fixed $i$ we argue that the number of vectors that are not
	$\lambda$-uniform for $U_i$ is bounded by $2^{(1-\lambda^2 \mu)n + o(n)}$. This
	will finish the proof since we can sum this bound over all partitions.
	
	Let $s = |U_i|/n$ and note that $s \ge \mu$.
	Observe that the number of vectors $v \in \{0,1\}^n$ such that $|v^{-1}(1) \cap
	U_i| \notin [sn/2 \pm \lambda s n]$ is at most:
	\begin{displaymath}
	\sum_{\lambda' \notin [-\lambda,\lambda]} \binom{s n}{s n/2 - \lambda' s n}
	2^{n - sn}
	\end{displaymath}
	because a vector $v$ that is not $\lambda$-uniform on $U_i$ can
    be arbitrary in $\setrng{n}\setminus U_i$. We upper bound this with the binary
	entropy function
	
	\begin{displaymath}
	\sum_{0 \le \lambda' \le \lambda} 2^{s n \cdot  h(1/2 - \lambda') +o(n)}
	\cdot 2^{n - sn}
	.
	\end{displaymath}
	The expression is maximized when $\lambda' = \lambda$ because the $h(p)$
	entropy function is increasing in $[0,1/2]$. Hence we can upper-bound the expression with
	\begin{displaymath}
	n 2^{(1+ s (h(1/2 - \lambda) - 1))n + o(n)}
	.
	\end{displaymath}
	Now, we use a bound $h(1/2
	-x) \le 1 - x^2$ when $0 \le x \le 1/2$
	(recall that $\lambda \in [0,1/2]$) and obtain that the
	number of vectors that are not $\lambda$-uniform for $U_i$ is at most
	\begin{displaymath}
	2^{\left(1 - s \lambda^2 \right)n +
		o(n)} \le
	2^{\left(1 - \mu\lambda^2 \right) n + o(n)}
    .
	\end{displaymath}
	
	Thus, by summing over all $U_i$, the number of vectors that are not $\lambda$-uniform for some $U_i$ is at most $2^{\left(1 - \mu\lambda^2 \right) n + o(n)} $, and the number of vectors in $B$ that are $\lambda$-uniform for all $U_i$ is at least
	\[
	|B| - 2^{\left(1 - \mu\lambda^2 \right) n + o(n)}  \geq |B|/2,
	\]
	and the claim follows.
\end{proof}
Set a balance parameter $\gamma= 4\sqrt{\eps}$, and define
\begin{displaymath}
    D \coloneqq  C \cap \big(\Uni(\{0,1\}^k) \pm \gamma\big)^{\setrng{n}}.
\end{displaymath}
\begin{lemma}[Most $k$-tuples are uniform]\label{balancedBkr}
	Let $k \in \nat$ be such that $\eps < 1/4^{k+2}$. Then it holds that
	\begin{displaymath}
	\left| D \right| \ge \left(\frac{|B|}{2}\right)^k.
	\end{displaymath}
\end{lemma}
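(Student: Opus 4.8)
The plan is to build the tuples $(b_1,\dots,b_k)\in B^k$ that contribute to $D$ one coordinate at a time, maintaining at each stage a near‑uniform partition of $[n]$ recording the pattern seen so far. Writing a tuple $\mathbf{b}=(b_1,\dots,b_k)\in B^k$ as the $n\times k$ matrix with columns $b_1,\dots,b_k$, so $\mathbf{b}^T\in C$, note that $\mathbf{b}^T\in D$ precisely when every pattern $z\in\{0,1\}^k$ occurs in $\mathbf{b}^T$ within $\gamma n$ of $n/2^k$ times. For a prefix $(b_1,\dots,b_t)$ let $\mathcal{P}_t=\{U_z:z\in\{0,1\}^t\}$ be the partition where $U_z$ is the set of coordinates on which the first $t$ columns read $z$. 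Call the prefix \emph{good} if, for every $s\le t$, $b_s$ is $\lambda_s$‑uniform on every part of $\mathcal{P}_{s-1}$, where I would set $\lambda_s:=8\cdot 2^{s/2}\sqrt\eps$. Setting $\gamma_0:=0$ and $\gamma_t:=\gamma_{t-1}/2+\lambda_t(2^{-(t-1)}+\gamma_{t-1})$, an easy induction on $t$ shows that a good prefix of length $t$ yields a $\gamma_t$‑uniform $\mathcal{P}_t$ (i.e.\ $|U_z|/n\in[2^{-t}\pm\gamma_t]$ for all $z$), and — unrolling the recursion with $\delta_t:=\lambda_t(2^{-(t-1)}+\gamma_{t-1})\le \lambda_t 2^{-(t-2)}$ — that $\gamma_t=\sum_{s\le t}\delta_s 2^{-(t-s)}=\Oh(\sqrt\eps\,2^{-t/2})$; in particular $\gamma_t\le 2^{-t}$ for all $t\le k$ (using $\eps<4^{-(k+2)}$ and $k>100$) and $\gamma_k\le 4\sqrt\eps=\gamma$. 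Hence every good $k$‑tuple maps, injectively since transposition is injective, to an element of $D$, so it suffices to lower bound the number of good $k$‑tuples.

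To count them I would argue level by level. A good $(t-1)$‑prefix gives a partition $\mathcal{P}_{t-1}$ whose $2^{t-1}$ parts each have size at least $(2^{-(t-1)}-\gamma_{t-1})n\ge 2^{-t}n$, so I apply Lemma~\ref{cor:balancedLBr} to $B$ with this partition, $\mu:=2^{-t}$ and $\lambda:=\lambda_t$. Its hypotheses hold: $\lambda_t\le 8\cdot 2^{k/2}\sqrt\eps\le\tfrac12$, and $\mu\lambda_t^2=64\eps$, so $|B|\ge 2^{(1-\eps)n}$ is far larger than $2^{(1-64\eps)n+o(n)}$, which is enough both to invoke the lemma and to conclude that at least $|B|/2$ vectors $b\in B$ are $\lambda_t$‑uniform on every part of $\mathcal{P}_{t-1}$ — that is, at least $|B|/2$ choices of $b_t$ extend the good $(t-1)$‑prefix to a good $t$‑prefix. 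Multiplying these counts over $t=1,\dots,k$ gives at least $(|B|/2)^k$ good $k$‑tuples, hence $|D|\ge(|B|/2)^k$.

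The genuinely delicate part is the error bookkeeping in the first paragraph, and the observation that makes it work is that an imbalance of $b_t$ within a part $U_z$ of relative size $\approx 2^{-(t-1)}$ contributes only about $\lambda_t\cdot 2^{-(t-1)}$ (not $\lambda_t$) to the global imbalance of $\mathcal{P}_t$; this damping is what lets me take $\lambda_t$ as large as $\Theta(2^{t/2}\sqrt\eps)$ — which is forced, since $\mu=2^{-t}$ shrinks geometrically while Lemma~\ref{cor:balancedLBr} needs $\mu\lambda_t^2\gtrsim\eps$ — while keeping the accumulated error $\gamma_k=\Theta(\sqrt\eps\,2^{-k/2})$ safely below $\gamma=4\sqrt\eps$. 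A secondary point to watch is the $o(n)$ term and the factor $2^{t-1}\le 2^{k-1}=\Oh(1)$ lost when summing the non‑uniform count over the parts of $\mathcal{P}_{t-1}$: these are harmless because $\mu\lambda_t^2=64\eps$ is bounded away from $\eps$ by a constant, so $2^{k}\cdot 2^{(1-64\eps)n+o(n)}$ stays well below $|B|/2\ge\tfrac12 2^{(1-\eps)n}$ for $n$ large.
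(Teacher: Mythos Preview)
Your proof is correct and follows essentially the same inductive scheme as the paper: build the $k$-tuple one coordinate at a time, apply Lemma~\ref{cor:balancedLBr} at each step to the partition induced by the current prefix, and multiply the $|B|/2$ lower bounds. The only notable difference is parameter tuning: the paper keeps a \emph{fixed} balance parameter $\gamma=4\sqrt{\eps}$ at every level and takes $\lambda_j:=2^{j-3}\gamma$ (growing like $2^j\sqrt{\eps}$), showing by a short calculation that the new parts again satisfy $|U_{z'}|\in[n/2^j\pm\gamma n]$; you instead track a shrinking $\gamma_t$ via a recursion with $\lambda_t\sim 2^{t/2}\sqrt{\eps}$, which is a bit more bookkeeping but yields the slightly sharper $\gamma_k=\Oh(\sqrt{\eps}\,2^{-k/2})$ before relaxing to $\gamma$.
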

\begin{proof}
    We will use set of matrices $C_j \subseteq (\{0,1\}^{j \times n}$ to select all matrices obtained by keeping the first $j$ columns of matrices  of $C$, namely
	\[
		C_j \coloneqq \{ \mathbf{b}^T : \mathbf{b} \in B^j \}.
	\]
	Thus, $C = C_k$.
	For $j \in \{1,\ldots,k\}$, let
	\begin{displaymath}
        D_j  \coloneqq C_j \cap \big(\Uni(\{0,1\}^j) \pm\gamma\big)^{\setrng{n}}
	.
	\end{displaymath}
	We prove that $\left| D_j \right| \ge (\frac{|B|}{2})^j$ by induction on $k$.
	First, we prove the base case $j=1$ of the induction, so $|D_1| \ge |B|/2$. This follows by applying Lemma~\ref{cor:balancedLBr} with $\lambda =
    \sqrt{\eps}$ and partition $U_1 = \setrng{n}$, since it implies that
	\begin{displaymath}
	 |B|/2 \leq \left| B \cap \big(\Uni(\{0,1\}) \pm \sqrt{\eps}\big)^{U_1} \right|  \leq  \left| C_1 \cap \big(\Uni(\{0,1\}) \pm \gamma\big)^{U_1} \right| = \left| D_1\right|.
	\end{displaymath}
	The induction step with $j > 1$ is a direct consequence of the following claim, which therefore is sufficient to finish the proof.
	
	\begin{claim}
	 Let $ \mathbf{b} \in D_{j-1}$.
	 Then there are at least $|B|/2$ vectors $b_j \in B$, such that
     $\mathbf{b_+} \in \big(\Uni(\{0,1\}^j) \pm \gamma\big)^{\setrng{n}}$, where $\mathbf{b_+}$ is obtained from $\mathbf{b}$ by appending $b_j$ as the $j$'th row to it.
	\end{claim}
	 \begin{proof}
         Define a partition $\{U_z\}_{z \in \{0,1\}^{j-1}}$ of $\setrng{n}$ by $U_z={\mathbf{b}}^{-1}(z)$.
	Because $\mathbf{b} \in \big(\Uni(\{0,1\}^{(j-1)}) \pm
    \gamma\big)^{\setrng{n}}$ we know that:
\begin{displaymath}
\mu = \min_{z\in\{0,1\}^{j-1}} |U_z|/n \ge \frac{1}{2^{j-1}} - \gamma .
\end{displaymath}

Note, that $\mu > 1/2^j$ because we assumed that $\eps < 1/4^{k+2}$
(hence $\gamma < 1/2^j$).
Now, we use Lemma~\ref{cor:balancedLBr} with partition
$\{U_z\}_{z \in \{0,1\}^{j-1}}$ and $\lambda = 2^{j-3}\cdot \gamma$.
First let us assert that the condition
$|B| \ge 2^{(1-\mu \lambda^2)n + o(n)}$ holds. Recall that we
assumed $|B| \ge 2^{(1-\eps)n + o(n)}$ and $\mu \lambda^2 \ge
\frac{1}{2^j} (2^{j-3} \cdot 4\sqrt{\eps})^2 \ge 2^{j-2} \eps \ge \eps$ (for $j
\ge 2$). Hence $|B| \ge 2^{(1-\eps) n} \ge 2^{(1-\mu \lambda^2)n +
	f(n)}$ for some function $f \in o(n)$.

Lemma~\ref{cor:balancedLBr} states that 
there are at least $|B|/2$ vectors $b_j \in B$ such that for each $z \in \{0,1\}^{j-1}$
\begin{equation}
\label{eq:Uz1_intvlr}
|U_z\cap b_j^{-1}(1)| \in
\left[\frac{|U_z|}{2} \pm  \lambda |U_z|\right]
.
\end{equation}
We know that $|U_z| \in [\frac{n}{2^{j-1}} \pm \gamma n]$ (because $\mathbf{b} \in (\Uni(\{0,1\}^{j-1}) \pm
\gamma)^{\setrng{n}}$). Thus in fact \eqref{eq:Uz1_intvlr} can be rewritten to
\begin{displaymath}
|U_z\cap b_j^{-1}(1)| \in
\left[\frac{n}{2^j} \pm  \bigl(\lambda |U_z|+(\gamma / 2) n\bigr)\right]
.
\end{displaymath}
We bound $\lambda|U_z|$ by
\begin{multline*}
\lambda|U_z|
=2^{j-3}\gamma|U_z|
\le2^{j-3}\gamma\left(\frac{n}{2^{j-1}}+\gamma n\right)
=\gamma n\left(\frac14+2^{j-3}\gamma\right)
=\gamma n\left(\frac14+2^{j-3}4\sqrt{\eps}\right)\\
<\gamma n\left(\frac14+2^{j-3}4\sqrt{1/4^{k+2}}\right)
=\gamma n\left(\frac14+2^{j-1}/2^{k+2}\right)
<\gamma n\left(\frac14+\frac14\right)=(\gamma/2) n,
\end{multline*}
where we use the assumption $\eps \leq \tfrac{1}{4^{k+2}}$ in the second line of the inequality.
Thus, for every $z\in\{0,1\}^{j-1}$ we have
\begin{displaymath}
|U_z\cap b_j^{-1}(1)| \in
\left[\frac{n}{2^j} \pm  2(\gamma/2) n\right]
.
\end{displaymath}
Now, observe that for all $z\in\{0,1\}^{j-1}$ it holds
that:

\begin{displaymath}
U_z\cap b_j^{-1}(1)
=\mathbf{b}^{-1}(z)\cap b_j^{-1}(1)=\mathbf{b}_\mathbf{+}^{-1}\bigl(z'),
\end{displaymath}
where $z' \in \{0,1\}^{j}$ is the vector obtained from $z$ by adding a $j$-th entry with value $1$.
Thus vector $z'$ fulfills  the condition for $\mathbf{b_+}$ to be
in $\big(\Uni(\{0,1\}^j) \pm \gamma\big)^{\setrng{n}}$.
Similarly, we can prove this condition by concatenating a $0$ to the vector $z$.
Hence, for every $\mathbf{b}\in D_{j-1}$ there are
at least $|B|/2$ vectors $b_j \in B$, such that $\mathbf{b_+} \in
\big(\Uni(\{0,1\}^j \pm \gamma\big)^{\setrng{n}}$. 
	 \end{proof}
     Thus this claim proves our induction hypothesis and hence the lemma.
\end{proof}

\subsection{Summing tuples from $D$ gives many distinct sums}
\label{sec:many-distinct-sums}

As mentioned at the beginning of this section we define the operation
$\zsum((a_1,\ldots,a_x))\coloneqq \sum_{i=1}^x a_i$, that sums the columns of a matrix
$\mathbf{a} \in \mathbb{Z}^{y \times x}$ to a single column $\zsum(\mathbf{a}) \in \mathbb{Z}^y$ (see Figure~\ref{fig:zeta}).

We define $E$ to be all sums of tuples from $D$:
\begin{displaymath}
E \coloneqq  \{ \zsum(\mathbf{b}^T) : \mathbf{b} \in D \} \subseteq \{0,\ldots,k\}^{n}
\end{displaymath}
In fact, by the assumption on $D$ we have the following control on the distributions of the values in the vectors in $E$:

\begin{lemma}
	\label{obs:bnml-dstr}
	 If $v \in E$, then for $j \in \{0,\ldots,k\}$ it holds that $|v^{-1}(j)| \in \left[\binom{k}{j}\frac{n}{2^k}\pm\binom{k}{j}\gamma n\right]$, i.e., every vector in $E$ is a $(2^k \gamma)$-$\Bin(k)$ balanced vector.
\end{lemma}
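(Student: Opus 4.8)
The plan is to simply unwind the definitions; no new idea is needed beyond tracking what $\zeta$ does to a transposed tuple.

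\textbf{Step 1: identify $v$ with a vector of Hamming weights.} Fix $v \in E$. By definition there is some $\mathbf{b} \in D$ with $v = \zeta(\mathbf{b}^T)$. Recall $D \subseteq C = \{\mathbf{b}^T : \mathbf{b} \in B^k\}$, so the underlying object is a tuple $(b_1,\dots,b_k) \in B^k$, and $\zeta(\mathbf{b}^T)$ is the coordinate-wise sum $\sum_{\ell=1}^k b_\ell$ of these $k$ vectors in $\{0,1\}^n$. Viewing $\mathbf{b} \in D$ instead as a mapping $[n] \to \{0,1\}^k$ (so that $\mathbf{b}(i) = (b_1(i),\dots,b_k(i))$), the first thing I would record is that for every $i \in [n]$ the coordinate $v(i)$ equals the number of $1$'s among $b_1(i),\dots,b_k(i)$, i.e. $v(i) = \mathrm{wt}(\mathbf{b}(i))$, the Hamming weight of the length-$k$ string $\mathbf{b}(i)$.

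\textbf{Step 2: rewrite level sets and sum near-uniform terms.} From Step 1, the level sets of $v$ are disjoint unions of level sets of $\mathbf{b}$:
\[
  v^{-1}(j) \;=\; \bigcup_{\substack{z \in \{0,1\}^k \\ \mathrm{wt}(z) = j}} \mathbf{b}^{-1}(z),
\]
where the union ranges over exactly the $\binom{k}{j}$ binary strings of weight $j$. Since $\mathbf{b} \in D \subseteq \big(\Uni(\{0,1\}^k) \pm \gamma\big)^{[n]}$, for every $z \in \{0,1\}^k$ we have $|\mathbf{b}^{-1}(z)| \in [\tfrac{n}{2^k} \pm \gamma n]$. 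Summing this estimate over the $\binom{k}{j}$ strings $z$ of weight $j$ gives
\[
  |v^{-1}(j)| \;\in\; \left[\binom{k}{j}\frac{n}{2^k} \,\pm\, \binom{k}{j}\gamma n\right],
\]
which is the first assertion of the lemma.

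\textbf{Step 3: conclude $\Bin(k)$-balancedness.} Finally I would divide the last display by $n$ and use $\binom{k}{j} \le 2^k$. Recalling $\Bin(k) = (\{0,\dots,k\}, p)$ with $p(j) = \binom{k}{j}2^{-k}$, this yields $|v^{-1}(j)|/n \in [p(j) \pm 2^k\gamma]$ for every $j \in \{0,\dots,k\}$, which is exactly the statement (Definition~\ref{def:balvec}) that $v$ is $(2^k\gamma)$-$\Bin(k)$ balanced. I do not expect any real obstacle here; the only point requiring care is the transpose bookkeeping, i.e. making sure that $\zeta(\mathbf{b}^T)$ is correctly seen as the vector whose $i$-th entry is the Hamming weight of $\mathbf{b}(i) \in \{0,1\}^k$, after which the bound is just a sum of $\binom{k}{j}$ almost-uniform contributions.
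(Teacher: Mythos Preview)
Your proposal is correct and follows essentially the same approach as the paper: pick $\mathbf{b}\in D$ with $\zeta(\mathbf{b}^T)=v$, use the $\gamma$-uniformity of $\mathbf{b}$ to bound each $|\mathbf{b}^{-1}(z)|$, and sum over the $\binom{k}{j}$ strings $z$ of weight $j$. Your write-up is in fact a bit more explicit than the paper's, spelling out the Hamming-weight interpretation and the final $\binom{k}{j}\le 2^k$ step.
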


\begin{proof}
	Consider an arbitrarily vector $v \in E$ and fix $j \in
	\{0,\ldots k\}$. From the definition of $E$, there exists a vector
    $\mathbf{b} \in D =  C \cap \big(\Uni(\{0,1\}^k) \pm \gamma\big)^{\setrng{n}}$ such that
	$\zsum(\mathbf{b}^T) = v$. Hence for every $z \in \{0,1\}^k$:
	\begin{displaymath}
	|\mathbf{b}^{-1}(z)|
	\in 
	\left[ \frac{n}{2^k} \pm \gamma n\right]
	.
	\end{displaymath}
	Hence, if we sum over all vectors $z\in \{0,1\}^k$ such that $|z^{-1}(1)|=j$ we have:
	\begin{displaymath}
	|v^{-1}(j)| \le \sum_{\substack{z \in \{0,1\}^k \\ |z^{-1}(1)|=j}} \frac{n}{2^k} + \gamma n
	= \binom{k}{j} \frac{n}{2^k} + \binom{k}{j} \gamma n,
	\end{displaymath}
	and analogously $|v^{-1}(j)| \ge \binom{k}{j} \frac{n}{2^k} - \binom{k}{j} \gamma n$. Thus indeed $v$ is a $(2^k\gamma)$-$\Bin(k)$ balanced vector, as desired.
\end{proof}
We now show that $E$ is sufficiently large:
\begin{lemma}\label{lem:sizekBr}
    It holds that $|E| \geq 2^{(h(\Bin(k)-\eps^{0.2})n}$.
\end{lemma}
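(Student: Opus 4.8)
The plan is to view $\zeta$ as a surjection $\psi\colon D\to E$, $\mathbf{b}\mapsto\zeta(\mathbf{b}^{T})$, and to lower bound $|E|$ by upper bounding every fiber of $\psi$. Recall that $\mathbf{b}\in D\subseteq C\subseteq(\{0,1\}^{k})^{[n]}$ is a tuple of $n$ columns $\mathbf{b}[1],\dots,\mathbf{b}[n]\in\{0,1\}^{k}$, and that $\zeta(\mathbf{b}^{T})$ is the vector $v\in\{0,\dots,k\}^{[n]}$ whose $i$-th coordinate $v(i)=|\mathbf{b}[i]^{-1}(1)|$ is the Hamming weight of the $i$-th column. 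Conversely, any $\mathbf{b}\in(\{0,1\}^{k})^{[n]}$ (not necessarily in $D$) with $\zeta(\mathbf{b}^{T})=v$ is built by independently choosing, for each $i\in[n]$, one of the $\binom{k}{v(i)}$ vectors in $\{0,1\}^{k}$ of weight $v(i)$. Hence $|\psi^{-1}(v)|\le\prod_{i=1}^{n}\binom{k}{v(i)}=\prod_{j=0}^{k}\binom{k}{j}^{|v^{-1}(j)|}$, so that
\[
  |D|=\sum_{v\in E}|\psi^{-1}(v)|\le|E|\cdot\max_{v\in E}\prod_{j=0}^{k}\binom{k}{j}^{|v^{-1}(j)|}.
\]

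Next I would bound the maximum using Lemma~\ref{obs:bnml-dstr}, which says every $v\in E$ is $(2^{k}\gamma)$-$\Bin(k)$ balanced, i.e. $|v^{-1}(j)|\le\binom{k}{j}(\tfrac{n}{2^{k}}+\gamma n)$ for all $j$. Since $\log\binom{k}{j}\ge0$, this gives
\[
  \sum_{j=0}^{k}|v^{-1}(j)|\log\binom{k}{j}\le n(1+2^{k}\gamma)\sum_{j=0}^{k}\frac{\binom{k}{j}}{2^{k}}\log\binom{k}{j}=n(1+2^{k}\gamma)\big(k-h(\Bin(k))\big),
\]
where the last equality is the identity $h(\Bin(k))=k-\sum_{j}\tfrac{\binom{k}{j}}{2^{k}}\log\binom{k}{j}$, obtained by expanding the definition of entropy. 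Using $0\le k-h(\Bin(k))\le k$ together with inequality~\eqref{ineq:2kgammalog} (which, since $2^{k}\gamma\le\eps^{1/4}<\tfrac12$ by~\eqref{ineq:2kgamma} forces $\log(1/(2^{k}\gamma))\ge1$, yields $k\,2^{k}\gamma\le\tfrac{1}{100}\eps^{1/5}$), the overhead is $n\,2^{k}\gamma\,k\le\tfrac12\eps^{0.2}n$, so $\max_{v\in E}\prod_{j}\binom{k}{j}^{|v^{-1}(j)|}\le2^{(k-h(\Bin(k)))n+\frac12\eps^{0.2}n}$.

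Finally I would combine this with the lower bound $|D|\ge(|B|/2)^{k}$ from Lemma~\ref{balancedBkr} — whose hypothesis $\eps<1/4^{k+2}$ follows from the standing assumptions $\eps\le1/2^{4k}$ and $k>100$ — and with $|B|\ge2^{(1-\eps)n}$. Rearranging the displayed inequality,
\[
  |E|\ge\frac{|D|}{2^{(k-h(\Bin(k)))n+\frac12\eps^{0.2}n}}\ge\frac{2^{k(1-\eps)n}}{2^{k}\cdot2^{(k-h(\Bin(k)))n+\frac12\eps^{0.2}n}}=2^{(h(\Bin(k))-k\eps-\frac12\eps^{0.2})n-k}.
\]
Since $k<0.01\log(1/\eps)$ we have $k\eps\le0.01\,\eps\log(1/\eps)\le\tfrac14\eps^{0.2}$ for $\eps$ small enough, and $k=o(n)$, so the exponent is at least $(h(\Bin(k))-\eps^{0.2})n$ and the claim follows.

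The only step requiring genuine thought rather than bookkeeping is the fiber count and the identity relating $\sum_{j}\binom{k}{j}2^{-k}\log\binom{k}{j}$ to $h(\Bin(k))$; all remaining error terms are of strictly lower order (size $\eps^{1/4}$, $\eps\log(1/\eps)$, or $o(n)$) and the slack between them and the $\eps^{0.2}$ in the statement is ample. I would also note in passing, for use downstream though not needed here, that $E\subseteq k\cdot B$: each $\mathbf{b}\in D\subseteq C$ equals $\mathbf{c}^{T}$ for some $(c_{1},\dots,c_{k})\in B^{k}$, and then $\zeta(\mathbf{b}^{T})=\zeta(\mathbf{c})=c_{1}+\dots+c_{k}\in k\cdot B$.
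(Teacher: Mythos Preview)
Your proof is correct and follows essentially the same strategy as the paper: bound $|E|$ from below by $|D|/\max_{v\in E}|\psi^{-1}(v)|$, bound each fiber by $\prod_j\binom{k}{j}^{|v^{-1}(j)|}$, and control this product using the $(2^k\gamma)$-balancedness of $v$. The only difference is cosmetic: the paper bounds $\prod_j\binom{k}{j}^{|v^{-1}(j)|}$ by multiplying with the multinomial $\binom{n}{\phi\cdot n}$ (where $\phi_j=|v^{-1}(j)|/n$), observing the product is at most $2^{kn}$, and then invoking Lemmas~\ref{lem:multvsentropy} and~\ref{lem:entclose} to turn $\binom{n}{\phi\cdot n}$ into $2^{h(\Bin(k))n}$ up to the error; you instead take logarithms directly and use the identity $\sum_j\binom{k}{j}2^{-k}\log\binom{k}{j}=k-h(\Bin(k))$, which is the same computation unpacked and arguably cleaner since it avoids the detour through entropy-stability.
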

\begin{proof}
	For a vector $v \in E$ we define
	\[
		D_v \coloneqq \{ \mathbf{b} \in D : \zsum(\mathbf{b}^T)=v\}.
	\]
	By grouping all elements of $D$ on their image with respect to $\zsum$:	
	\[
		|D| = \sum_{v \in E} |D_v| \leq |E| \max_{v \in E} |D_v|,
	\] 
	which can be rewritten into the following lower bound on $|E|$:
	\begin{equation}\label{eq:boundE}
		|E| \geq |D| / \max_{v \in E} |D_v| \geq (|B|/2)^k / \max_{v \in E} |D_v| \geq  2^{k(n-\eps n+1)} / \max_{v \in E} |D_v|.
	\end{equation}
	Thus in the remainder of the proof we can focus on showing that for any
    vector $v \in E$, $|D_v| \leq 2^{n(k - h(\Bin(k))+\eps^{0.2})}$; the Lemma
    would then follow by the bound in~\eqref{eq:boundE}.
		
	Let $\mathbf{b} \in D_v$. This means that for every $j \in \{0,\ldots,k\}$:
	\[
		\bigcup_{\substack{z \in \{0,1\}^k \\ |z^{-1}(1)|=j}} \mathbf{b}^{-1}(z)=v^{-1}(j).
	\]
	Thus the number of possibilities for $\mathbf{b}$ is
	\[
		\prod_{j=0}^k \binom{k}{j}^{|v^{-1}(j)|}.
	\]
    We multiply this quantity with $\binom{\setrng{n}}{|v^{-1}(0)|, \ldots,
    |v^{-1}(k)|}$  and obtain
	\[
        \binom{\setrng{n}}{|v^{-1}(0)|, \cdots, |v^{-1}(k)|} \cdot \prod_{j=0}^k \binom{k}{j}^{|v^{-1}(j)|} \leq 2^{kn},
	\]
	where the inequality follows since the left-hand side counts partitions of
    $\setrng{n}$ into $\sum_{i=0}^k \binom{k}{i}=2^k$ parts.  Now, let $\bold{r} = (|v^{-1}(0)|/n,\ldots,|v^{-1}(k)|/n)$
    and observe that by Lemma~\ref{lem:multvsentropy} we have $|D_v| \leq 2^{n(k-h(\bold{r}))}$.
	Because $v$ is $\gamma$-$\Bin(k)$ balanced (since it is in $E$), we have
	\[
        h(\bold{r}) \geq h(\Bin(k))- \frac{1}{\ln(2)}\cdot k2^k\gamma \log \frac{1}{2^k\gamma} \geq
        h(\Bin(k)) - \eps^{0.2},
	\]
	where the first inequality is by Lemma~\ref{lem:entclose}, and the second
    inequality uses that $\gamma=4\sqrt{\eps}$ (see Inequality~\ref{ineq:2kgammalog}).
\end{proof}
\subsection{Selecting a set $E_a \subseteq E$ for every $a \in A$}
\label{sec:selecting-E_a}
\begin{lemma}\label{lem:manybalancers}
	Let $\mathcal{D}=(\Omega,p)$ be a discrete probability space, and let $X
    \subseteq \setrng{n}$ with $|X|=\alpha n$.
    The number of vectors $v \in (\mathcal{D}+0)^{\setrng{n}}$ that are not
    $\rho$-$\mathcal{D}$ balanced for $X$ and $\setrng{n}\setminus X$ is at most $2^{n(h(\mathcal{D})-\alpha^2\min(\rho^2,\log \alpha))}$.
\end{lemma}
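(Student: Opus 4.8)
The plan is to bound the number of "bad" vectors by splitting according to how the values of $v$ are distributed on $X$ and on $[n]\setminus X$, and then to argue that the imbalance on $X$ forces an entropy deficit compared to the number of globally $\mathcal{D}$-balanced vectors. Concretely, I would first note that a vector $v \in (\mathcal{D}\pm 0)^{[n]}$ is completely specified by the pair of restrictions $v|_X$ and $v|_{[n]\setminus X}$, and that the number of choices for $v|_X$ realizing a given distribution $q$ on $\Omega$ is $\binom{|X|}{q\cdot|X|} \le 2^{h(q)\alpha n}$, while the number of choices for $v|_{[n]\setminus X}$ realizing a given distribution $q'$ is at most $2^{h(q')(1-\alpha)n}$. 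Summing over all polynomially many pairs $(q,q')$ of distributions with integer counts $q\cdot\alpha n$ and $q'\cdot(1-\alpha)n$, the count of bad $v$ is at most $\mathrm{poly}(n)\cdot \max_{(q,q')} 2^{\alpha n\, h(q) + (1-\alpha)n\, h(q')}$, where the maximum ranges over pairs that (i) average back to something within the global tolerance $0$ of $p$, i.e. $\alpha q + (1-\alpha)q' = p$ exactly (since $v \in (\mathcal{D}+0)^{[n]}$ means $v$ is \emph{exactly} $\mathcal{D}$-distributed), and (ii) are "bad", meaning $q \notin (p\pm\rho)$ \emph{or} $q' \notin (p\pm\rho)$; note that by the exact-averaging constraint (i), $q$ being $\rho$-far from $p$ in some coordinate forces $q'$ to be $\tfrac{\alpha\rho}{1-\alpha}$-far, so it suffices to handle the case where $q$ is $\rho$-far from $p$ in some fixed coordinate $\omega_0$.

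The heart of the argument is the concavity bound: by concavity of entropy, $\alpha h(q) + (1-\alpha) h(q') \le h(\alpha q + (1-\alpha)q') = h(p) = h(\mathcal{D})$, but we need a \emph{quantitative} loss whenever $q$ is $\rho$-far from $p$. I would get this from strong concavity of entropy along the segment from $q$ to $q'$. Writing $p = \alpha q + (1-\alpha)q'$, a standard strengthening of Jensen for entropy (e.g. via the fact that $-x\log x$ has second derivative $-\tfrac{1}{\ln 2\, x}$, or via Pinsker-type estimates $h(p) - \alpha h(q) - (1-\alpha)h(q') = \alpha D(q\|p) + (1-\alpha)D(q'\|p)$) shows that the entropy deficit is $\alpha D(q\|p) + (1-\alpha)D(q'\|p) \ge \alpha D(q\|p)$. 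It then remains to lower bound $D(q\|p)$ when $|q(\omega_0) - p(\omega_0)| \ge \rho$: if $p(\omega_0) \ge \rho$ (say $p(\omega_0)$ bounded away from $0$ and $1$), Pinsker gives $D(q\|p) \ge \tfrac{2}{\ln 2}\rho^2$, contributing the $\alpha \rho^2$ term after multiplying by the extra $\alpha$; if $p(\omega_0)$ is tiny, one instead uses that moving mass $\rho$ onto a coordinate of probability $p(\omega_0) \le \alpha$ costs roughly $\rho \log(1/p(\omega_0)) \gtrsim \alpha\log(1/\alpha)$ (here one uses that the relevant atom has $p(\omega_0)$ as small as $\sim 1/n$ only in degenerate cases, and the $\min(\rho^2, \log\alpha)$ in the statement is exactly engineered to cover both regimes). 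Combining, the deficit is at least $\alpha^2 \min(\rho^2, \log\alpha)$ up to constants, so the bad count is at most $\mathrm{poly}(n)\cdot 2^{n(h(\mathcal{D}) - \alpha^2\min(\rho^2,\log\alpha))}$, and the polynomial factor is absorbed (as elsewhere in the paper) into the $o(n)$ slack implicit in such statements.

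The main obstacle I anticipate is handling the constraint (i) cleanly: the lemma hypothesis $v \in (\mathcal{D}+0)^{[n]}$ means $v$ is \emph{exactly} $\mathcal{D}$-distributed on $[n]$, so $\alpha q + (1-\alpha)q' = p$ is an exact identity and the KL-decomposition $h(p) - \alpha h(q) - (1-\alpha)h(q') = \alpha D(q\|p)+(1-\alpha)D(q'\|p)$ applies verbatim — this is the feature that makes the argument go through cleanly rather than merely approximately. The second delicate point is the two-regime lower bound on $D(q\|p)$: Pinsker alone only yields the $\rho^2$ term and is useless when some atom $p(\omega)$ is exponentially small, which is why one needs the alternative estimate in terms of $\log(1/p(\omega))$ and why the statement hedges with $\min(\rho^2,\log\alpha)$. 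Everything else (the polynomial count of candidate pairs $(q,q')$, the entropy-vs-multinomial estimates from Lemma~\ref{lem:multvsentropy}, and Lemma~\ref{lem:entclose}) is routine bookkeeping of the kind already used repeatedly in Section~\ref{sec:technical-section}.
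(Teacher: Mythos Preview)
Your approach is correct but genuinely different from the paper's. The paper proves the lemma by a double-counting symmetry argument: it sets up a bipartite relation $R$ between vectors $v \in (\mathcal{D}+0)^{[n]}$ and sets $X \in \binom{[n]}{\alpha n}$, observes that permutations of $[n]$ act transitively on both sides so $|R_X|$ can be computed from $|R_v|$, and then bounds $|R_v|$ (the number of bad $X$ for a \emph{fixed} $v$) via the product-of-binomials estimate of Lemma~\ref{binom:inequality}. Your route is the dual one: fix $X$, factor $v$ as $(v|_X,v|_{[n]\setminus X})$, and use the exact mixture identity $\alpha q + (1-\alpha)q' = p$ together with the KL decomposition $h(p) - \alpha h(q) - (1-\alpha)h(q') = \alpha D(q\|p) + (1-\alpha)D(q'\|p)$.

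Two remarks on your execution. First, your Pinsker step already finishes the job by itself: if $|q(\omega_0)-p(\omega_0)| \ge \rho$ then $\|q-p\|_1 \ge 2\rho$, whence $D(q\|p) \ge \tfrac{2}{\ln 2}\rho^2$, and the deficit is at least $\alpha D(q\|p) \ge 2\alpha\rho^2 \ge \alpha^2\rho^2 \ge \alpha^2\min(\rho^2,\log(1/\alpha))$. So your worry about a ``second delicate point'' is unfounded --- Pinsker holds regardless of how small any atom $p(\omega)$ is, and you do not need the alternative regime at all. (The paper's two cases arise only because Lemma~\ref{binom:inequality} has a natural split; the $\min$ in the statement is an artefact of that route, and your argument shows it is not sharp.) Second, your sketch of the small-atom regime --- ``$\rho\log(1/p(\omega_0)) \gtrsim \alpha\log(1/\alpha)$'' --- does not obviously connect to the target $\alpha^2\log(1/\alpha)$ and would need real justification if it were actually required; fortunately it is not.

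In short: your direct entropy/KL argument is cleaner and slightly stronger than the paper's symmetry-plus-binomial-inequality argument, and the $\log\alpha$ branch of the $\min$ is redundant along your route.
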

\begin{proof}
    We define a relation $R\subseteq (\mathcal{D}+0)^{\setrng{n}}
    \times \binom{\setrng{n}}{\alpha n}$ as follows:
	\begin{displaymath}
	(v, X)\in R\Leftrightarrow
	v \text{ is \textbf{not} $\rho$-$\mathcal{D}$ balanced for } X.
	\end{displaymath}
	Additionally, let
	\begin{align*}
        R_v &= R\cap \left(\{v\}\times \binom{\setrng{n}}{\alpha n}\right),
        &&\text{ for }v\in (\mathcal{D}+0)^{\setrng{n}},\\
        R_X &= R\cap \left((\mathcal{D}+0)^{\setrng{n}}\times\{X\}\right),
        &&\text{ for }X\in \binom{\setrng{n}}{\alpha n}.
	\end{align*}
	Note that $|R_X|$ is the value we want to bound.
	Note that the mapping $(v,X)\mapsto(v\circ\pi,\pi(X))$ for any permutation
    $\pi: \setrng{n} \leftrightarrow \setrng{n}$ of the index set $\setrng{n}$ is an automorphism of $R$ (i.e., $(v,X) \in R$ if and only if $(\pi(v),\pi(X)) \in R$). Therefore, we have
	\begin{equation}
	\label{obs:equation-mainr}
    |R|=|(\mathcal{D}+0)^{\setrng{n}}|\cdot|R_v|=
	|R_X|\binom{n}{\alpha n}
	\end{equation}
	for a fixed $v$ and $X$. 
	By~\eqref{obs:equation-mainr} we can focus on bounding $|R_v|$ instead of $|R_X|$.
	To do so, note that if $(v,X)\in R$
	for $X\in\binom{\setrng{n}}{\alpha n}$,
	there must exist $\omega \in \Omega$
	such that $|X\cap v^{-1}(\omega)|\notin[p(\omega) \alpha n \pm \rho \alpha n]$.
	We can construct any such $X$ by first selecting a subset of $v^{-1}(\omega)$
	(which has cardinality $p(\omega)n$), and then choosing the remaining elements.
	Hence:
	
	\begin{align*}
	|R_v|&\le
	\sum_{\omega \in \Omega} \sum_{x \notin [-\rho,\rho]}
	\binom{p(\omega)n}{(p(\omega)+ x)\alpha n}
	\binom{(1 - p(\omega))n}{(1-p(\omega)- x)\alpha n}.
	\end{align*}
	Next, we use Lemma~\ref{binom:inequality}. 
	In our case (with parameters
	$\beta = p(\omega)$,
	$\alpha = \alpha$ and
	$\gamma = -\alpha x$) it implies:
	
	\begin{displaymath}
	|R_v| \le \sum_{j \in \{0,\ldots,k\}} \sum_{x \notin [-\rho,\rho]}
	\binom{n}{\alpha n} 2^{-c \cdot n}
	\end{displaymath}
	where
	\begin{displaymath}
	c=
	\begin{cases}
	(\alpha x)^2, & \text{if } |\alpha x| < \alpha(1-\alpha)p(\omega),\\
	\alpha^2\log(1/(2\alpha)), & \text{otherwise}.
	\end{cases}
	\end{displaymath}
	Since $|x|\ge\rho$ we have
	$c\ge\alpha^2\min\{\rho^2, \log(1/(2\alpha))\}$,
	thus
	\begin{displaymath}
	|R_v| \le \binom{n}{\alpha n} 2^{-\alpha^2\min\{\rho^2, \log(1/(2\alpha)) \} n},
	\end{displaymath}
	which plugged into \eqref{obs:equation-mainr} gives the desired inequality.
\end{proof}

\begin{proof}[Proof of Lemma~\ref{lem:technical-contrib}]
    Recall that we assume that $\alpha > \eps^{0.01}$.
	Lemma~\ref{lem:sizekBr} gives us a lower bound on $E$, and by
    Lemma~\ref{obs:bnml-dstr} each vector in $E$ is a $(2^k\gamma)$-$\Bin(k)$
    balanced vector. Hence,
	by the pigeonhole principle there exists a distribution
    $\mathcal{D}=(\{0,1\ldots,k\},p)$ where $p=(p_0,\ldots,p_k)$ such that
    $|p_0-\binom{k}{j}2^{-k}|\leq 2^k\gamma$ for each $j$ and $E$ has a subset
    $E'$ of at least $|E|/n^{k}$ vectors that are in $\mathcal{D}^{\setrng{n}}$. Hence:

    \begin{displaymath}
        |E'| \ge |E|/n^k \ge 2^{(h(\mathcal{D}) - \eps^{0.2}n) - o(n)}
    \end{displaymath}

	Now for each $a \in A$, define $E_a$ to be all vectors in $E'$ that are
    $\eps^{0.05}$-$\mathcal{D}$ balanced for $a^{-1}(1)$. Observe that this means
    that vectors in $E_a$ are $\eps^{0.01}$-$\Bin(k)$ balanced (because
    $\eps^{0.05}+2^k\gamma \ll \eps^{0.01}$).
    
    Applying
    Lemma~\ref{lem:manybalancers} with $E'$ and $a^{-1}(1)$, we get that there
    are at most $2^{h(\mathcal{D})-\alpha^2\min(\eps^{0.1},\log(1/(2\alpha))}
\le 2^{(h(\mathcal{D}) - \eps^{0.12}) n)}$ vectors
    in $E'$ that are not $\eps^{0.05}$-$\mathcal{D}$ balanced. Hence:

    \begin{displaymath}
        |E' \setminus E_a| \le 2^{(h(\mathcal{D}) - \eps^{0.12}n)} \le |E'|/2
    \end{displaymath}

	Now the lemma follows because
    \begin{displaymath}
        |E_a| \ge |E'|/2 \ge 2^{(h(\mathcal{D}) - \eps^{0.2})n - o(n)} \ge
        2^{(h(\Bin(k) - \frac{1}{\ln(2)}\cdot 2^k\gamma\log(1/(2^k\gamma)) - \eps^{0.2})n} \ge 2^{(h(\Bin(k) - \eps^{0.1})n}
    \end{displaymath}
    
    where the last inequality follows from Lemma~\ref{lem:entclose} and
    Inequality~\ref{ineq:2kgammalog} (since $\eps$ is small enough).

\end{proof}

\section{Conclusion and Open Problems}

In this paper, we present a randomized $\Oh(2^{(1-\sigma_m)n})$ time algorithm
for the Bin Packing problem, where $\sigma_m >0$ and $m$ denotes the number of
bins. This is an improvement over the state-of-the-art algorithm of Bj\"orklund
et al.~\cite{BjorklundHK09} that runs in $\Os(2^n)$ time for small $m$.
Nevertheless, it remains to give an algorithm for Bin Packing that works
in $\Os((2-\eps)^n)$ time for an unbounded number of bins for some fixed
constant $\eps > 0$.
We believe our algorithm made significant progress on this question. One open
end for further research is how the number of bins influences the complexity of an instance. By the methods of~\cite{Nederlof16}, instances of Bin Packing with
a linear number of bins (with equal capacity) can also be solved in time
$\Oh(2^{(1-\eps)n})$ based on a witness sampling technique similar to what we
used in some of our cases.  It is thus natural to wonder whether (an extension)
of the methods presented in this paper are enough to give improved algorithms
for all numbers of bins.

We believe our Additive Combinatorics result is natural and may have
applications beyond the scope of this paper. As mentioned in the introduction,
Littlewood--Offord's theory has a wide variety of applications, and it is natural
to expect that the setting that we address may be of interest in any of these
settings.

In the introduction we mentioned Question~\ref{q1} as one motivation for
studying improved exact exponential time algorithms for the Bin Packing problem.
While it is not clear if we made direct progress on this question, we do believe
that some of our ideas such as the approach to narrow down the number of
witnesses may inspire future work on improved algorithms for Set Cover.

\subsection*{Acknowledgement}

The research leading to the results presented in this paper was partially
carried out during the Parameterized Algorithms Retreat of the University of
Warsaw, PARUW 2020, held in Krynica-Zdr\'{o}j in February 2020. This workshop was
supported by a project that has received funding from the European Research
Council (ERC) under the European Union's Horizon 2020 research and innovation
programme under grant agreement No 714704 (PI: Marcin Pilipczuk).

\bibliographystyle{abbrv}
\bibliography{bib}

\begin{appendices}
\section{Computing the Number of Distinct Sums and Critical Pruner}
\label{sec:pruner}

\begin{lemma} \label{lem:computesums}
    Let $w: \setrng{n} \to \N $ be an item weight function. Then the set
    $w(2^{\setrng{n}})$ can be computed in time $\Oh(n \cdot |w(2^{\setrng{n}})|)$. 
\end{lemma}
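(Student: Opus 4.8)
The plan is to compute $w(2^{[n]})$ incrementally, one item at a time, while exploiting that all intermediate sum--sets are contained in the final one. Concretely, set $S_0 := \{0\}$ and, for $i = 1,\ldots,n$, let $S_i := S_{i-1} \cup (S_{i-1} + w(i))$, where $S + c := \{x + c : x \in S\}$. Then $S_n = w(2^{[n]})$ by a straightforward induction. The structural fact that makes this output-efficient is that $S_i \subseteq S_n$ for every $i$: a sum realized by a subset of $\{1,\ldots,i\}$ is also realized by that same subset viewed inside $[n]$. Hence $|S_i| \le |w(2^{[n]})|$ for all $i$, so every intermediate object we store has size at most the size of the output.

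For the implementation I would represent each $S_i$ as a sorted doubly linked list of its distinct elements. Given $S_{i-1}$ in this form, the shifted copy $S_{i-1} + w(i)$ is again sorted (it is just $S_{i-1}$ with $w(i)$ added to every entry), so $S_i$ can be produced by a single merge of two sorted lists with duplicate removal, in time $O(|S_{i-1}| + |S_i|) = O(|w(2^{[n]})|)$. To obtain the claimed bound without an extra factor of $n$, I would first discard all items of weight $0$ (they leave every $S_i$ unchanged) in $O(n)$ time, then process the remaining items grouped by equal weight and in increasing order of weight, replacing the per-item step by the step $S \mapsto S + \{0, w, 2w, \ldots, cw\}$ for a weight $w$ of multiplicity $c$ (again a bounded number of sorted-list operations, linear in $|S| + |S + \{0,\ldots,cw\}|$). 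One would then argue that if $S_i$ is stored compactly as a union of maximal runs of consecutive integers, the number of runs (and hence the cost of each update) grows geometrically enough that $\sum_i(\text{work at step } i)$ telescopes to $O(|w(2^{[n]})|)$; finally one expands the compact representation into the explicit list of elements, which also costs $O(|w(2^{[n]})|)$.

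The main obstacle is precisely this running-time accounting. The naive ``one merge per item'' analysis only yields $O(n\,|w(2^{[n]})|)$, since in the worst case (for instance, all weights equal) the sets $S_i$ grow only by an additive constant per step, so $\sum_i |S_i|$ can be $\Theta(n\,|w(2^{[n]})|)$. Shaving the factor $n$ down to a constant is where the real work lies: one must combine the grouping of equal weights with the compact run-based representation of $S_i$ and a potential/amortization argument showing that the total update cost over all steps is $O(|w(2^{[n]})|)$ (in the ``all weights equal'' case, for example, every $S_i$ is a single run, so the compact representation costs $O(1)$ per step, while in the ``all subset sums distinct and far apart'' case the run count doubles each step and the geometric series is dominated by the last term). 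This amortized bound is the step I would expect to require the most care; the remaining ingredients --- correctness of the recurrence $S_i = S_{i-1} \cup (S_{i-1}+w(i))$, the sorted-list merges, and the final expansion --- are routine.
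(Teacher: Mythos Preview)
Your recurrence $S_i = S_{i-1}\cup(S_{i-1}+w(i))$ with $S_0=\{0\}$ is exactly the paper's proof; the paper then simply asserts that $\sum_{i=1}^n 2|S_i|=\Oh(|w(2^{[n]})|)$ and stops. You are right to be suspicious of that equality: your ``all weights equal'' example already shows $\sum_i|S_i|=\Theta(n\,|w(2^{[n]})|)$, so the paper's bound as written is off by a factor of $n$ (or, more charitably, is silently absorbing a polynomial factor, consistent with the paper's pervasive use of $\Os$). In every place the lemma is invoked (the $\Oh(n|w(2^{[n]})|^{m})$ dynamic program and the computation of the critical pruner) an extra $n$ is harmless, so nothing downstream is affected.

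Given this, the second half of your proposal---grouping equal weights, storing $S_i$ as unions of maximal runs, and an amortization argument to recover the missing factor of $n$---is chasing a bound the paper never actually establishes. You correctly flag that this is ``where the real work lies'' and leave it as a sketch; I would add that the sketch is not convincing as stated: the ``runs of consecutive integers'' representation can still have $\Theta(|S_i|)$ runs at every step (take, e.g., weights that are all even), and the telescoping you allude to would need a much more careful potential function. In short: your first paragraph already reproduces the paper's argument, your diagnosis of its weakness is sharper than the paper's own proof, and the attempted repair is both incomplete and unnecessary for the paper's purposes.
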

\begin{proof}
    Algorithm is a simple dynamic programming procedure. For all $i \in \{0,\dots,n\}$ define the set $\DP_i$ as:
	\[ \DP_i =  \{w(X): X\subseteq \{0,\dots,i\}\}. \]
    Notice that $\DP_n = w(2^{\setrng{n}})$. We iterate over $i \in \{0,\ldots,n\}$ to compute these sets. In the base case we set $\DP_0 = \{0\}$. Then, for $i \in \{1,\ldots,n\}$ given $\DP_{i-1}$ we compute $\DP_{i}$ as follows:
	\[ \DP_i = \DP_{i-1}\cup \{ x + w(i) \mid x \in \DP_{i-1}\}.  \]
    Note that in a single iteration each item in $\DP_{i-1}$ is touched at most twice.
	Hence, the total number of arithmetic operations can be upper bounded by $\sum_{i=0}^n 2 \cdot |\DP_i| = \Oh(n \cdot |\DP_n|)$.
\end{proof}

\begin{corollary} \label{cor:critPruner}
    Let $\delta \in (0,1)$ be a fixed parameter. If $|w(2^{\setrng{n}})| \ge 2^{\delta n}$, then the critical pruner $\prun(\delta)$ can be computed in time $\Os(2^{\delta n})$. 
\end{corollary}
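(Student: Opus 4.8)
The plan is to compute $\theta$ by a linear scan over the pruning parameter, guided by Lemma~\ref{lem:computesums}. Concretely, I would consider $s=0,1,2,\ldots$ in turn, and for each $s$ run the algorithm of Lemma~\ref{lem:computesums} on the $s$-pruned weight function $w_s$ (which is explicitly computable from $w$), with one modification: as soon as the partial set of sums it maintains reaches cardinality $\lceil 2^{\delta n}\rceil$, abort and record that $|w_s(2^{[n]})|\ge 2^{\delta n}$. The algorithm returns the first $s$ for which the abort is triggered. Since $w_l = w$ and the hypothesis gives $|w(2^{[n]})|\ge 2^{\delta n}$, the set defining $\theta$ is non-empty, so the scan is guaranteed to stop with some $s \le l := 1+\lceil \log(\max_i w(i))\rceil$.

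Correctness is immediate from the definition of the critical pruner: for every $s<\theta$ minimality forces $|w_s(2^{[n]})| < 2^{\delta n}$, so the enumeration of $w_s(2^{[n]})$ runs to completion without ever hitting the abort threshold and the scan proceeds to $s+1$; and at $s=\theta$ we have $|w_\theta(2^{[n]})|\ge 2^{\delta n}$, so the abort fires and $\theta$ is output. For the running time I would bound the cost of each iteration separately. For $s<\theta$ the run of Lemma~\ref{lem:computesums} terminates normally in $\Oh(|w_s(2^{[n]})|)=\Oh(2^{\delta n})$ time. For $s=\theta$, the sets produced by Lemma~\ref{lem:computesums} are nested and each is at most twice the size of its predecessor, so all of them stay below $\lceil 2^{\delta n}\rceil$ until the step that crosses the threshold, and that last step manipulates a set of size below $2\lceil 2^{\delta n}\rceil$; hence this iteration also costs $\Oh(2^{\delta n})$. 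The number of iterations is at most $l$, which is polynomial in $n$ since the Frank--Tardos preprocessing recalled in Section~\ref{sec:prel} lets us assume $\max_i w(i)\le 2^{n^{\Oh(1)}}$. Multiplying, the total time is $\Oh(2^{\delta n})$, absorbing the $l=n^{\Oh(1)}$ factor exactly as the estimate $|w_\theta(2^{[n]})|=\Theta(n2^{\delta n})$ is treated in the main text.

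The step I expect to be the main (and really the only) obstacle is ensuring that \emph{every} iteration, and in particular the final one $s=\theta$, runs in $\Oh(2^{\delta n})$ time: without the abort rule one would compute $w_\theta(2^{[n]})$ in full, which already has size $\Theta(n2^{\delta n})$, and for $s>\theta$ the sizes $|w_s(2^{[n]})|$ are not controlled at all (they can be as large as $2^n$). The early-termination rule, together with the polynomial bound on $l$ provided by the Frank--Tardos rescaling, is precisely what keeps the scan within the claimed budget; if a sharper constant in the $s=\theta$ iteration were desired, one could invoke the smoothness estimate $|w_\theta(2^{[n]})|\le 3n\,|w_{\theta-1}(2^{[n]})| < 3n\cdot 2^{\delta n}$ from Lemma~\ref{lem:relationps}, but this refinement is not needed.
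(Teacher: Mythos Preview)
Your proposal is correct and follows essentially the same approach as the paper: a linear scan over $s=0,1,\ldots$ that runs Lemma~\ref{lem:computesums} on $w_s$ until the threshold $2^{\delta n}$ is reached, with the total cost controlled by the polynomial bound $l=n^{\Oh(1)}$ from Frank--Tardos. The only difference is in bounding the cost of the final iteration $s=\theta$: you use an early-abort rule together with the doubling $|W_i|\le 2|W_{i-1}|$ inherent in Lemma~\ref{lem:computesums}, whereas the paper simply computes $w_\theta(2^{[n]})$ in full and invokes Lemma~\ref{lem:relationps} to bound $|w_\theta(2^{[n]})|\le 3n\cdot 2^{\delta n}$; you already note this alternative yourself, so the two arguments are interchangeable.
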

\begin{proof}
    Recall the following definitions. Let $l = 1+ \lceil \log(\max_i \{w(i)\})
    \rceil $. For $s \in \{0,\dots,l\}$, the $s$-pruned weight of item $i$ is
    $w_s(i) \coloneqq \lfloor w(i) / 2^{l-s} \rfloor.$ The critical pruner,
    $\prun$, is $  \prun(\delta) = \min\{s \in \N \mid | w_s(2^{\setrng{n}})
    |\ge 2^{\delta n}  \} $. Notice that we can assume $l = \poly(n)$ by
    \cite{DBLP:journals/combinatorica/FrankT87}.
	
    The algorithm finds $\prun$ by computing $|w_s(2^{\setrng{n}})|$ using
    Lemma~\ref{lem:computesums} for consecutive $s=1,2,\dots$ until
    $|w_s(2^{\setrng{n}})|\ge 2^{\delta n}$. At the end it returns this last $s$
    as $\prun$. Because $w_0(2^{\setrng{n}}) = \{0\}$ and
    Lemma~\ref{lem:relationps} tells us that
    $\left(\frac{1}{3n}\right)|w_s(2^{\setrng{n}})| \le
    |w_{s-1}(2^{\setrng{n}})|$ for any $s$, we know that
    $|w_\prun(2^{\setrng{n}})| = \Oh(2^{\delta n})$. Analogously, the algorithm
    takes $\Oh(n^2 \cdot 2^{\delta n})$ time per iteration and the number of iterations is
    at most $l$, which gives the claimed running time.
\end{proof}

\begin{lemma} \label{lem:relationps} 

Let $w: \setrng{n} \to \N $ be an item weight function and let $l = 1+\lceil \log(\max_i\{w(i)\})\rceil$ Then for all $s \in \{1,\dots,l\}$:
	\[\left(\frac{1}{3n}\right)|w_s(2^{\setrng{n}})| \le
	|w_{s-1}(2^{\setrng{n}})| \le \left (\frac{3n}{2} \right
	)|w_s(2^{\setrng{n}})|.
	\]
\end{lemma}
\begin{proof}
	Let $l = 1+\lceil \log(\max_i\{w(i)\})\rceil$. We are given $w_s(A)$ for all
	$A\in 2^{\setrng{n}}$. Observe, that we can bound the value of $w_{s-1}(A)$ by the following:
	
	\begin{align*}
		&&&\sum_{i \in A} (w(i)/2^{l-s+1} - 1 ) && \le  w_{s-1}(A) \le && \sum_{i\in A}w(i)/2^{l-s+1}\\
		\Rightarrow   &&& \frac{1}{2}\sum_{i \in A} ( w(i)/2^{l-s} - 2) && \le  w_{s-1}(A) \le&& \frac{1}{2}\sum_{i\in A} w(i)/2^{l-s}\\
		\Rightarrow   &&& \frac{1}{2}\sum_{i \in A} ( \lfloor w(i)/2^{l-s} \rfloor - 2) && \le  w_{s-1}(A) \le&&\frac{1}{2}\sum_{i\in A} (\lfloor w(i)/2^{l-s} \rfloor +1)\\
		\Rightarrow   &&& \frac{1}{2}(w_s(A) - 2n) && \le  w_{s-1}(A) \le&& \frac{1}{2}(w_s(A) +n)
	\end{align*}
	
	Hence, for each value in $w_s(2^{\setrng{n}})$, there are at most
	$\frac{3n}{2}$ values in  $w_{s-1}(2^{\setrng{n}})$, i.e.
	\[|w_{s-1}(2^{\setrng{n}})| \le \left (\frac{3n}{2} \right)
	|w_s(2^{\setrng{n}})|.\]
	
	Analogously, for a given $w_{s-1}(A)$ and any subset $A\in 2^{\setrng{n}}$ we can bound the value of $w_{s}(A)$ by the following:

	\begin{align*}
		&&&\sum_{i \in A} (w(i)/2^{l-s} - 1) && \le  w_{s}(A) \le&& \sum_{i\in A}w(i)/2^{l-s}\\
		\Rightarrow   &&& 2 \sum_{i \in A} \left ( w(i)/2^{l-s+1} - \frac{1}{2} \right ) && \le  w_{s}(A) \le&& 2\sum_{i\in A}  w(i)/2^{l-s+1} \\
		\Rightarrow   &&& 2 \sum_{i \in A} \left ( \lfloor w(i)/2^{l-s+1} \rfloor - \frac{1}{2} \right ) && \le  w_{s}(A) \le&& 2\sum_{i\in A} (\lfloor w(i)/2^{l-s+1} \rfloor +1)\\
		\Rightarrow   &&& 2 \left (w_{s-1}(A) - \frac{n}{2} \right ) && \le  w_{s}(A) \le&& 2(w_{s-1}(A) +n)
	\end{align*}
	
	Hence, for each value in $w_{s-1}(2^{\setrng{n}})$, there are at most $3n$
	values in  $w_{s}(2^{\setrng{n}})$, i.e.
	\[|w_{s-1}(2^{\setrng{n}})| \ge \left (\frac{1}{3n}\right)
	|w_s(2^{\setrng{n}})|.\]
\end{proof}

\section{Inequalities with Binomials and Entropy}
\label{sec:inequalities}

Let us start with the useful facts about binary entropy function.
\begin{displaymath}
    h(x) \coloneqq -x \log(x) - (1-x)\log(1-x).
\end{displaymath}
The first derivative of binary entropy is:
\begin{displaymath}
    h'(x) \coloneqq \log(1-x) - \log(x)
\end{displaymath}
The second derivative:
\begin{displaymath}
    h''(x) \coloneqq -\frac{1}{(\ln{2}) x(1-x)}
\end{displaymath}
and we will also need third derivative
\begin{displaymath}
    h'''(x) \coloneqq \frac{1-2x}{(\ln{2}) x^2(1-x)^2}
\end{displaymath}
Observe, that for $x \in [0,\frac{1}{2}]$ we have that $h'(x) \ge 0$, $h''(x) \le 0$
and $h'''(x) \ge 0$. From 4th derivative we will only need that $h^{(4)}(x) \le
0$ when $x \in [0,\frac{1}{2}]$.
Hence from Taylor expansion for $x \in [0,\frac{1}{2}]$ it holds that:
\begin{displaymath}
    h(x + \eps) \coloneqq h(x) + h'(x) \eps + \frac{h''(x)}{2} \eps^2 + \frac{h'''(x)}{6} \eps^3 + \Oh(\eps^4).
\end{displaymath}
If we assume, that $x \in [0,\frac{1}{2}]$ and $\eps \le \frac{3 h''(x)}{2 h'''(x))}$ then:
\begin{equation}
    \label{eq:taylor}
    h(x + \eps) \le h(x) + h'(x) \eps + \frac{h''(x)}{4} \eps^2.
\end{equation}
because $h^{(4)}(x) \le 0$ when $x \in [0,\frac{1}{2}]$.
\begin{lemma}[Theorem 2.2 from \cite{calabro2009exponential}] \label{lem:inequalEntropy}
	\begin{align*}
	\forall x \in [0,1]: & \qquad 1-4 \left (x-\frac{1}{2} \right)^2 \le h(x) \le 1 - \frac{2}{\ln (2)} \left(x - \frac{1}{2}\right)^2,\\
	\forall x \in [0,1]: & \qquad \frac{x}{2\log(\frac{6}{x})} \le h^{-1}(x) \le \frac{x}{\log\frac{1}{x}},
	\end{align*}
	where the inverse entropy function $h^{-1}: [0,1] \rightarrow [0,1]$ is the inverse of $h$ restricted to the interval $[0,\frac{1}{2}]$.
\end{lemma}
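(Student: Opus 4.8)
The statement is classical; the plan is to reconstruct it from a single exact power‑series identity plus elementary monotonicity. For the two entropy bounds I would linearize around $x=\tfrac{1}{2}$ by substituting $u := 2x-1 \in [-1,1]$. Writing $x=\tfrac{1+u}{2}$, $1-x=\tfrac{1-u}{2}$ and expanding the logarithms, a short computation gives
\[
1-h(x) \;=\; \frac{1}{2\ln 2}\big((1+u)\ln(1+u)+(1-u)\ln(1-u)\big)\;=\;\frac{1}{2\ln 2}\sum_{n\ge 1}\frac{u^{2n}}{n(2n-1)},
\]
where the series follows by collecting Taylor coefficients of $(1\pm u)\ln(1\pm u)$ (the odd powers cancel, the $2n$‑th coefficient being $\tfrac{1}{n(2n-1)}$). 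Both inequalities now drop out of this identity. Keeping only the $n=1$ term and discarding the remaining nonnegative terms yields $1-h(x)\ge \tfrac{u^2}{2\ln 2} = \tfrac{2}{\ln 2}(x-\tfrac{1}{2})^2$, i.e.\ the upper bound $h(x)\le 1-\tfrac{2}{\ln 2}(x-\tfrac{1}{2})^2$. For the lower bound, bound $u^{2n}\le u^2$ for $|u|\le 1$ and use the evaluation $\sum_{n\ge1}\tfrac{1}{n(2n-1)} = 2\ln 2$ (from $\tfrac{1}{n(2n-1)} = \tfrac{2}{2n-1}-\tfrac{1}{n}$ and $H_{2N}-H_N\to\ln 2$), obtaining $1-h(x)\le \tfrac{u^2}{2\ln 2}\cdot 2\ln 2 = u^2 = 4(x-\tfrac{1}{2})^2$, i.e.\ $h(x)\ge 1-4(x-\tfrac{1}{2})^2$.

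For the bounds on $h^{-1}$ I would use that $h$ is an increasing bijection from $[0,\tfrac{1}{2}]$ onto $[0,1]$, so it suffices to evaluate $h$ at the claimed thresholds, together with the elementary two‑sided estimate $t\log\tfrac{1}{t} \le h(t) \le t\log\tfrac{e}{t}$ for $t\in[0,\tfrac{1}{2}]$; the right inequality comes from $(1-t)\log\tfrac{1}{1-t} \le \tfrac{t}{\ln 2} = t\log e$ (a consequence of $\ln\tfrac{1}{1-t}\le \tfrac{t}{1-t}$), the left one is immediate. For the upper bound $h^{-1}(x)\le \tfrac{x}{\log(1/x)}$: if $x>\tfrac{1}{2}$ the right side exceeds $\tfrac{1}{2}$ and the bound is trivial; if $x\le\tfrac{1}{2}$ put $t:=\tfrac{x}{\log(1/x)}\le x\le\tfrac{1}{2}$, so $h(t)\ge t\log\tfrac{1}{t} \ge t\log\tfrac{1}{x} = x$, and monotonicity of $h$ gives $h^{-1}(x)\le t$. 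For the lower bound $h^{-1}(x)\ge \tfrac{x}{2\log(6/x)}$: put $t:=\tfrac{x}{2\log(6/x)}<\tfrac{1}{2}$ and bound $h(t)\le t\log\tfrac{e}{t}$; the desired $t\log\tfrac{e}{t}\le x = 2t\log\tfrac{6}{x}$ rearranges to $\tfrac{e}{t}\le (6/x)^2$, i.e.\ $x\log\tfrac{6}{x}\le \tfrac{18}{e}$, which holds on $[0,1]$ because $x\mapsto x\log\tfrac{6}{x}$ is increasing there with value $\log 6 < \tfrac{18}{e}$ at $x=1$; hence $h(t)\le x$ and $h^{-1}(x)\ge t$.

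I do not expect a genuine obstacle: the whole argument is elementary calculus and series manipulation. The only points needing care are (i) deriving and recognizing the series $\sum_{n\ge1}\tfrac{u^{2n}}{n(2n-1)}$ together with $\sum_{n\ge1}\tfrac{1}{n(2n-1)}=2\ln 2$, which is exactly what makes the constants $\tfrac{2}{\ln 2}$ and $4$ come out sharp; and (ii) bookkeeping on the range of $x$ for the $h^{-1}$ estimates — splitting off the trivial regime $x>\tfrac{1}{2}$ and verifying that the chosen threshold $t$ lies in $[0,\tfrac{1}{2}]$ so that "apply $h$ to both sides" is legitimate.
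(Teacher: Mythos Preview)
Your proof is correct and self-contained. Note, however, that the paper does not actually prove this lemma: it is quoted verbatim as Theorem~2.2 of Calabro's thesis and used as a black box, so there is no in-paper argument to compare against. Your derivation of the sharp constants $\tfrac{2}{\ln 2}$ and $4$ via the identity $1-h(x)=\tfrac{1}{2\ln 2}\sum_{n\ge 1}\tfrac{u^{2n}}{n(2n-1)}$ with $u=2x-1$ is clean and makes both bounds fall out simultaneously; the handling of $h^{-1}$ through the sandwich $t\log\tfrac{1}{t}\le h(t)\le t\log\tfrac{e}{t}$ and monotonicity is exactly the standard route and all the range checks (in particular $t\le\tfrac{1}{2}$ and the trivial regime $x>\tfrac12$) are done correctly.
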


\begin{lemma} \label{lem:boundeddownset}
	Given $\cW \subseteq \{W \subseteq \setrng{n} : |W| \in [\frac{n}{2} \pm c\cdot n]\}$ for some $c \in [0,\frac{1}{2})$ with $|\cW|  \le 2^{(1- z)n}\poly(n)$ and $z \in (0,1)$ such that $\frac{z}{4\log(12/z)} > c$. 
	Then 
	\[|\dc\cW| + |\uc\cW| \le \Oh\left(2^{(1-\rho(z, c))n}\right)\]
	where $\rho(z,c) = \frac{2}{\ln(2)}\left(\frac{z}{4\log(12/z)}-c\right)^2$.
\end{lemma}

\begin{proof}
	We will bound $|\dc\cW|$. Take $\lambda = \frac{z}{4\log(12/z)}-c$ and note that by assumption $\lambda >0$. We can describe any $X \in \dc\cW$ either as a set in $\binom{\setrng{n}}{|X|}$ (if $|X| \le (\frac{1}{2} - \lambda)n$) or as an element $W \in \cW$ together with the items on which $W$ and $X$ differ (if $|X|\ge (\frac{1}{2} - \lambda)n$). In the latter case, $|W \setminus X|\le (\lambda + c)n$. This, together with the fact that $|X|$ can only take $n$ distinct values implies:
	\begin{align*}
		|\dc\cW| &\le n\cdot \binom{n}{(\frac{1}{2} - \lambda)n} + |\cW|\cdot \binom{n}{(\lambda + c)n}\\
		&\le n \cdot 2^{h(\frac{1}{2}- \lambda)n} +  \poly(n) \cdot 2^{(1 - z + h(\lambda + c))n}
	\end{align*}
Which means that:
\[\frac{\log{|\dc\cW|}}{n} \le \max\left\{h\left(\frac{1}{2}- \lambda\right), 1 - z + h(\lambda + c)\right\} + o(1) \]
First, we will show that $h\left(\frac{1}{2}- \lambda\right) \ge 1 - z + h(\lambda + c)$. 
Note that 
\[\lambda+c = \frac{z}{4\log(12/z)} = \frac{z/2}{2\log(6/(z/2))} \le h^{-1}(z/2)\] 
by Lemma~\ref{lem:inequalEntropy}. Hence, $h(\lambda + c) \le z/2$ as $h$ is monotonic on $[0,\frac{1}{2}]$ and $\lambda + c \le \frac{1}{2}$. Therefore:
\begin{align*}
	h\left(\frac{1}{2}- \lambda\right) &\ge 1 - 4 \lambda^2 \\
	& \ge 1 - 4 \left(\frac{z}{4 \log(12/z)} - c \right)^2 \\
	& \ge 1 - 4 \left(\frac{z^2}{16 \log^2(12/z)} \right) \\
	& \ge 1 - 4 \left(\frac{z}{16 \log^2(12)} \right) \\
	& \ge 1 - z/2  \\
	& \ge 1 - z + h(\lambda + c).
\end{align*}
Where the first inequality follows from Lemma~\ref{lem:inequalEntropy}, the second is because $z/\log^2(12/z)$ is an increasing function for $z <1 $ and the last because $h(\lambda + c) \le z/2$.
Finally, we know, using Lemma~\ref{lem:inequalEntropy} that $h\left(\frac{1}{2}- \lambda\right) < 1 - \frac{2}{\ln(2)} \lambda^2$. Here the inequality is strict since $\lambda >0$. Therefore, 
\[|\dc\cW| \le \Oh\left(2^{(1-\rho(z,c))n}\right),\] where $\rho(z, c) = \frac{2}{\ln(2)}\left(\frac{z}{4\log(12/z)}-c\right)^2$ and we use that the inequality on $h\left(\frac{1}{2}-\lambda\right)$ is strict to omit $\poly(n)$ factors.
The same argument establishes $|\uc\cW| \le \Oh\left(2^{(1-\rho(z,c))n}\right)$
\end{proof}

\begin{lemma}
    \label{binom:inequality}
    For every $\beta,\alpha,\gamma \in [0,\frac{1}{2}]$ it holds that :

    \begin{displaymath}
        \binom{\beta n}{\alpha \beta n - \gamma n}
        \binom{(1-\beta)n}{\alpha(1-\beta)n + \gamma n} \le \binom{n}{\alpha n}
        \cdot 2^{-f(\gamma,\alpha,\beta)n}.
    \end{displaymath}

    where

    \begin{displaymath}
        f(\gamma,\alpha,\beta) \coloneqq \begin{cases}
            \gamma^2 & \text{if } |\gamma| < \alpha(1-\alpha)\min\{\beta,(1-\beta)\} \\
            -\alpha^2 \log{2\alpha} & \text{otherwise}
        \end{cases}
    \end{displaymath}

\end{lemma}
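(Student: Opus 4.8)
The plan is to reinterpret the left-hand side as a hypergeometric probability and then run an entropy estimate. First I would dispose of the degenerate cases: a binomial coefficient $\binom{N}{K}$ is $0$ unless $0\le K\le N$, so unless $0\le \alpha\beta-\rho$ and $0\le \alpha(1-\beta)+\rho\le 1-\beta$ the left-hand side vanishes and there is nothing to prove; and if $\alpha=0$ then $\rho=0$ and the inequality reads $1\le 1$. So we may assume $\alpha\in(0,\tfrac12]$, $\rho\le\alpha\beta$ and $\rho\le(1-\alpha)(1-\beta)$. Then, by the Vandermonde identity $\binom{n}{\alpha n}=\sum_{k}\binom{\beta n}{k}\binom{(1-\beta)n}{\alpha n-k}$, the ratio
\[
  q:=\frac{\binom{\beta n}{\alpha\beta n-\rho n}\binom{(1-\beta)n}{\alpha(1-\beta)n+\rho n}}{\binom{n}{\alpha n}}
\]
equals $\Pr[Y=\alpha\beta n-\rho n]$, where $Y$ counts how many elements of a fixed $\beta n$-element subset of $[n]$ land in a uniformly random $\alpha n$-element subset of $[n]$; thus $Y$ is hypergeometric with mean $\mathbb{E}[Y]=\alpha\beta n$, and the statement becomes $\Pr\big[Y=\mathbb{E}[Y]-\rho n\big]\le 2^{-f(\rho,\alpha,\beta)n}$.

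For the regime where the claimed saving is $f=\rho^2$ I would use Lemma~\ref{lem:multvsentropy} to write $\binom{\beta n}{\alpha\beta n-\rho n}\binom{(1-\beta)n}{\alpha(1-\beta)n+\rho n}\le 2^{\phi(\rho)n}$ and $\binom{n}{\alpha n}\ge 2^{h(\alpha)n}/(n+1)$, where
\[
  \phi(\rho):=\beta\, h\!\Big(\tfrac{\alpha\beta-\rho}{\beta}\Big)+(1-\beta)\, h\!\Big(\tfrac{\alpha(1-\beta)+\rho}{1-\beta}\Big).
\]
A short computation gives $\phi(0)=h(\alpha)$ and $\phi'(0)=-h'(\alpha)+h'(\alpha)=0$, while on the valid domain
\[
  -\phi''(t)=\tfrac1\beta\,\big(-h''\big)\!\Big(\tfrac{\alpha\beta-t}{\beta}\Big)+\tfrac1{1-\beta}\,\big(-h''\big)\!\Big(\tfrac{\alpha(1-\beta)+t}{1-\beta}\Big)\ge\tfrac{4}{\ln2}\cdot\tfrac{1}{\beta(1-\beta)}\ge\tfrac{16}{\ln2},
\]
using $-h''(y)=\tfrac{1}{(\ln2)\,y(1-y)}\ge\tfrac4{\ln2}$ and $\beta(1-\beta)\le\tfrac14$. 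Integrating the Taylor remainder twice, $h(\alpha)-\phi(\rho)=\int_0^\rho(\rho-t)\big(-\phi''(t)\big)\,dt\ge\tfrac{8}{\ln2}\,\rho^2\ge\rho^2$, which — absorbing the $n+1$ factor into the $o(n)$ slack carried by the surrounding arguments, using $\tfrac8{\ln2}>1$ — yields the bound with $f=\rho^2$. Note this holds for \emph{every} $\rho$ in the valid domain, so the only remaining work is the ``otherwise'' branch in the subcase $\rho^2<-\alpha^2\log 2\alpha$.

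That large-deviation regime is where I expect the real difficulty. Here $|\rho|\ge\alpha(1-\alpha)\min\{\beta,1-\beta\}$, so (when, say, $\min=\beta$) the fraction $\tfrac{\alpha\beta-\rho}{\beta}$ has been pushed down to at most $\alpha^2$, and the quadratic expansion of $\phi$ around $\rho=0$ is far too lossy to recover the target saving $-\alpha^2\log 2\alpha$. The plan would be to abandon the derivative estimate and bound the two binomials individually — e.g. $\binom{\beta n}{\alpha\beta n-\rho n}\le 2^{\beta n\, h(\alpha^2)}$ and $\binom{(1-\beta)n}{\alpha(1-\beta)n+\rho n}\le 2^{(1-\beta)n\, h(\alpha+\rho/(1-\beta))}$ — and then to show, via the monotonicity and concavity inequalities for $h$ collected in Lemma~\ref{lem:inequalEntropy} together with $h'(x)=\log\tfrac{1-x}{x}$, that the accumulated entropy loss $h(\alpha)-\phi(\rho)$ is at least $-\alpha^2\log 2\alpha$. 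The crux — and the step I am least confident admits a short argument — is extracting the correct logarithmic factor from the block whose empirical fraction has drifted far from the mean; this appears to require a careful case split on the relative magnitudes of $\beta$, $\rho$ and $\alpha$ (and, in the symmetric subcase $\min=1-\beta$, on those of $1-\beta$, $\rho$ and $\alpha$) rather than a single uniform derivative bound.
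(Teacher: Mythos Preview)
Your treatment of the $f=\rho^2$ regime is essentially the paper's argument: both pass to the exponent $\phi(\rho)=\beta\,h(\alpha-\rho/\beta)+(1-\beta)\,h(\alpha+\rho/(1-\beta))$, use $\phi(0)=h(\alpha)$ and $\phi'(0)=0$, and then a second-derivative bound. The paper packages the last step as a Taylor inequality $h(x+\eps)\le h(x)+h'(x)\eps+\tfrac{h''(x)}{4}\eps^2$ valid when $|\eps|\le\tfrac{3|h''(x)|}{2|h'''(x)|}$; the case boundary $|\rho|<\alpha(1-\alpha)\min\{\beta,1-\beta\}$ is chosen precisely so that $|\rho/\beta|$ and $|\rho/(1-\beta)|$ satisfy that hypothesis. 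Your direct integration of $-\phi''$ is a slightly cleaner route to the same bound.

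The ``otherwise'' branch is where your proposal is incomplete, and the good news is that the paper's argument here is much shorter than the case split you anticipate. There is no need to analyse the deviant block separately or to tease out the logarithm by hand. The paper simply bounds the product of the two binomials by $\binom{n}{\alpha(1-\alpha)n}$ --- pushing each factor by monotonicity to the common fraction $\alpha(1-\alpha)$ and then applying Vandermonde --- and finishes with one application of concavity:
\[
h(\alpha-\alpha^2)\ \le\ h(\alpha)-\alpha^2\,h'(\alpha)\ =\ h(\alpha)-\alpha^2\log\tfrac{1-\alpha}{\alpha}\ \le\ h(\alpha)+\alpha^2\log(2\alpha),
\]
using $1-\alpha\ge\tfrac12$. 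So the target saving $-\alpha^2\log(2\alpha)$ is nothing other than $\alpha^2 h'(\alpha)$ after the harmless replacement $1-\alpha\mapsto\tfrac12$; the ``logarithmic factor'' you were worried about is $h'(\alpha)$ itself. (The paper's monotonicity step for the second factor is stated rather tersely, but the mechanism --- replace the product by one evaluated at the smaller fraction $\alpha(1-\alpha)$, then Vandermonde, then concavity at $\alpha$ --- is the idea you were missing.)
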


\begin{proof}

    First, observe that when $|\gamma| > \alpha(1-\alpha)\min\{\beta,(1-\beta)\}$
    our expression is upper bounded by:

    \begin{displaymath}
        \binom{\beta n}{\alpha \beta n(1-\alpha)}
            \binom{(1-\beta)n}{\alpha(1-\alpha)(1-\beta)n} \le
            \binom{n}{\alpha(1-\alpha) n}
            .
    \end{displaymath}

    This however is bounded by $2^{h(\alpha(1-\alpha))n}$. Observe that $h(\alpha
        - \alpha^2) \le h(\alpha) - h'(\alpha)\alpha^2 = h(\alpha) - \alpha^2
        (\log(\alpha) - \log(1-\alpha)) \le h(\alpha) - \alpha^2 \log(2\alpha)$. Hence when $|\gamma|$ is large we upper bound our expression with:

    \begin{displaymath}
        \binom{n}{\alpha n} 2^{\alpha^2 \log(2\alpha) n}.
    \end{displaymath}

    Now, we consider the case of small $|\gamma|$. We upper bound the expression with binary entropy.

    \begin{displaymath}
        \binom{\beta n}{\alpha \beta n - \gamma n} \binom{(1-\beta)n}{\alpha(1-\beta)n + \gamma n} = 2^{n\left( 
                    \beta h(\alpha - \frac{\gamma}{\beta}) + (1-\beta)h(\alpha + \frac{\gamma}{1-\beta})
                \right)}
    \end{displaymath}
    
    Let us consider an exponent:
    
    \begin{displaymath}
        \beta h\left(\alpha - \frac{\gamma}{\beta}\right) + (1-\beta)h\left(\alpha + \frac{\gamma}{1-\beta}\right)
    \end{displaymath}

    We use Inequality~\ref{eq:taylor} with $x = \alpha$ and $\eps \coloneqq
    - \frac{\gamma}{\beta}$ for $h(\alpha - \gamma/\beta)$ and with $\eps \coloneqq
    \frac{\gamma}{1-\beta}$ for $h(\alpha + \gamma/(1-\beta))$.

    Observe that at the beginning we assumed that $|\gamma| \le \alpha(1-\alpha)
    \min\{\beta,(1-\beta)\}$ hence
    $|\frac{\gamma}{\beta}|$ and $|\frac{\gamma}{1-\beta}|$ are upper bounded
    by $|\frac{3 h''(\alpha)}{2 h'''(\alpha)}|$. So, by
    Inequality~\ref{eq:taylor}:

    \begin{displaymath}
        \beta h\left(\alpha - \frac{\gamma}{\beta}\right) + (1-\beta)h\left(\alpha
        + \frac{\gamma}{1-\beta}\right) \le h(\alpha) + \frac{h''(\alpha)}{4}
        \frac{\gamma^2}{\beta(1-\beta)}.
    \end{displaymath}

    Observe that the first order factors cancel out. Hence 

    \begin{displaymath}
        \binom{\beta n}{\alpha \beta n - \gamma n}
        \binom{(1-\beta)n}{\alpha(1-\beta)n + \gamma n} \le \binom{n}{\alpha n}
        2^{\frac{h''(\alpha) \gamma^2}{4\beta(1-\beta)}}
    \end{displaymath}

    Finally, observe that $h''(\alpha) < -1$ for all $\alpha \in [0,\frac{1}{2}]$ and
    $\frac{1}{\beta(1-\beta)} \ge 4$ for all $\beta \in [0,\frac{1}{2}]$ hence:

    \begin{displaymath}
        \binom{\beta n}{\alpha \beta n - \gamma n}
        \binom{(1-\beta)n}{\alpha(1-\beta)n + \gamma n} \le \binom{n}{\alpha n}
        2^{-\gamma^2 n}
    \end{displaymath}
\end{proof}

\begin{lemma}
    \label{lem:hatd-hatc}
    For all $k \in \nat $ and $\alpha\in [0,1]$ we have:

    \begin{displaymath}
        h(\Bin(k,\alpha)) \le h(\Bin(k+1))
        .
    \end{displaymath}
\end{lemma}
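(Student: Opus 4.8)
The key observation is that $\Bin(k,\alpha)$ depends \emph{affinely} on $\alpha$: writing $q_0$ for the distribution on $\{0,\ldots,k+1\}$ with $q_0(i)=\binom{k}{i}2^{-k}$ (i.e.\ $\Bin(k)$ with a zero-probability atom at $k+1$) and $q_1$ for its shift $q_1(i)=\binom{k}{i-1}2^{-k}$, the definition of the altered binomial distribution gives exactly $\Bin(k,\alpha)=(1-\alpha)q_0+\alpha q_1$, where we use the convention $\binom{k}{-1}=\binom{k}{k+1}=0$. Since adding a zero-probability atom and shifting the sample space do not change Shannon entropy, $h(q_0)=h(q_1)=h(\Bin(k))$. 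The plan is therefore to show that $g(\alpha):=h(\Bin(k,\alpha))$ is maximized over $\alpha\in[0,1]$ at $\alpha=\tfrac12$, and then invoke Pascal's formula (stated in the excerpt) that $\Bin(k,\tfrac12)=\Bin(k+1)$.

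First I would establish \emph{concavity} of $g$. The Shannon entropy functional $p\mapsto -\sum_x p(x)\log p(x)$ is concave on the probability simplex (a standard fact, following from concavity of $t\mapsto -t\log t$, or equivalently from "conditioning reduces entropy"). Composing this concave functional with the affine map $\alpha\mapsto(1-\alpha)q_0+\alpha q_1$ shows $g$ is concave on $[0,1]$. Second I would establish \emph{symmetry} about $\tfrac12$, i.e.\ $g(\alpha)=g(1-\alpha)$: the relabeling $i\mapsto k+1-i$ of $\{0,\ldots,k+1\}$ sends $\Bin(k,\alpha)$ to $\Bin(k,1-\alpha)$, because
\[
p'_{1-\alpha}(k+1-i)=\alpha\binom{k}{k+1-i}2^{-k}+(1-\alpha)\binom{k}{k-i}2^{-k}=\alpha\binom{k}{i-1}2^{-k}+(1-\alpha)\binom{k}{i}2^{-k}=p'_{\alpha}(i),
\]
using $\binom{k}{k+1-i}=\binom{k}{i-1}$ and $\binom{k}{k-i}=\binom{k}{i}$; since a bijective relabeling of the sample space preserves entropy, $g(\alpha)=g(1-\alpha)$.

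Finally, combining the two properties: for any $\alpha\in[0,1]$,
\[
g\!\left(\tfrac12\right)=g\!\left(\tfrac{\alpha+(1-\alpha)}{2}\right)\ge \tfrac12 g(\alpha)+\tfrac12 g(1-\alpha)=g(\alpha),
\]
where the inequality is concavity and the last equality is symmetry. Hence $h(\Bin(k,\alpha))=g(\alpha)\le g(\tfrac12)=h(\Bin(k,\tfrac12))=h(\Bin(k+1))$, which is the claim. I do not expect a serious obstacle here; the argument is essentially a clean structural observation (affine dependence plus concavity plus reflection symmetry). The only points requiring a little care are the boundary conventions for the binomial coefficients and the remark that entropy is invariant under both relabeling the support and adjoining a zero-probability atom — so I would state these explicitly rather than leave them implicit.
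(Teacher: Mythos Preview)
Your proof is correct and follows the same high-level arc as the paper: both show that $g(\alpha)=h(\Bin(k,\alpha))$ is concave in $\alpha$, maximized at $\alpha=\tfrac12$, and then invoke $\Bin(k,\tfrac12)=\Bin(k+1)$.

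The execution differs, and yours is cleaner. The paper establishes concavity by explicitly computing $\phi''(\alpha)=-\tfrac{1}{4^k\ln 2}\sum_i\bigl[\binom{k}{i-1}-\binom{k}{i}\bigr]^2/p_\alpha(i)\le 0$, and then locates the maximum by a direct calculation that $\phi'(\tfrac12)=0$, expanding the sum and using $\binom{k}{i}=\binom{k}{k-i}$ to get a telescoping cancellation. You instead get concavity for free from the affine dependence $\Bin(k,\alpha)=(1-\alpha)q_0+\alpha q_1$ together with concavity of the entropy functional, and you replace the derivative computation by the reflection symmetry $i\mapsto k+1-i$, which gives $g(\alpha)=g(1-\alpha)$ directly. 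Your approach avoids all calculus and makes the structural reason for the inequality transparent; the paper's approach is more hands-on but yields the same conclusion. The boundary conventions you flag (zero-probability atoms, $\binom{k}{-1}=\binom{k}{k+1}=0$) are exactly the points to state carefully, and you have them right.
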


\begin{proof}
    Let us fix $k \in \nat$. Recall that $\Bin(k+1) = (\{0,\ldots,k+1\}, p(i))$
    and $\Bin(k,\alpha) = (\{0,\ldots,k+1\}, p_\alpha(i))$ where $p(i)
    = \binom{k+1}{i}\frac1{2^{k+1}}$ and $p_\alpha(i)
    =\binom{k}{i}\frac{(1-\alpha)}{2^{k}}+\binom{k}{i-1}\frac{\alpha}{2^{k}}$.
    Hence, we need to prove that for all $\alpha \in [0,1]$:

    \begin{displaymath}
        h(p_\alpha(0),\ldots p_\alpha(k+1)) \le h(p(0),\ldots,p(k+1))
        .
    \end{displaymath}
        
    Let us denote $\phi(\alpha) = h(p_\alpha(0),\ldots,p_\alpha(k+1))$. First,
    observe that $\phi(\frac{1}{2}) = h(p(0),\ldots,p(k+1))$ because $\binom{k+1}{i} =
    \binom{k}{i} + \binom{k}{i-1}$. Therefore we need to prove that for all
    $\alpha \in [0,1]$ it holds that:

    \begin{displaymath}
        \phi(\alpha) \le \phi\left(\frac{1}{2}\right)
        .
    \end{displaymath}

    Recall that the binary entropy of a multinomial is $h(a_0,\ldots,a_{k+1})
    = -a_0\log(a_0) - \ldots - a_{k+1} \log(a_{k+1})$ and $(x \ln(x))' = \ln(x) + 1$
    Observe that function $\phi(\alpha)$ is well defined
    for $\alpha=0$ and $\alpha=1$ as limits.
    Moreover $\phi(\alpha) \ge 0$ for all $\alpha\in [0,1]$.

    Now, we compute the first derivative.
    \begin{equation*}
        \phi'(\alpha) = -\frac1{2^k\ln2}
        \sum_i\left(\binom{k}{i-1}-\binom{k}{i}\right)\bigl(1+\ln\left(p_\alpha(i)\right) \bigr).
    \end{equation*}
    Because $\sum_i\binom{k}{i-1}=\sum_i\binom{k}{i}$ the first derivative simplifies to:
    \begin{equation*}
        \phi'(\alpha)
         = -\frac1{2^k\ln2}
         \sum_i\left(\binom{k}{i-1}-\binom{k}{i}\right)\ln\left(p_\alpha(i))\right)
    \end{equation*}

    Now the second derivative is
    \begin{equation*}
        \phi''(\alpha) =
        -\frac1{4^k\ln2}
        \sum_i\left(\binom{k}{i-1}-\binom{k}{i}\right)^2\cdot\frac1{p_\alpha(i)}\le 0,
    \end{equation*}

    thus $\phi(\alpha)$ is concave for all $\alpha \in [0,1]$.
    So in order to show that the $\phi(\alpha)$ function has exactly one maximum
    in $\alpha = \frac{1}{2}$ it is sufficient to show that $\phi'(\frac{1}{2})=0$.

    Let us rearrange the sum:
    \begin{align*}
        \phi'\left(\frac{1}{2}\right) 
        =\sum_i\left[\binom{k}{i-1}-\binom{k}{i}\right]\ln (p_{\frac{1}{2}}(i))&=
        \sum_i\binom{k}{i}\ln (p_{\frac{1}{2}}(i+1))-\sum_i\binom{k}{i}\ln (p_{\frac{1}{2}}(i))\\
        &=\sum_i\binom{k}{i}\ln\frac{p_{\frac{1}{2}}(i+1)}{p_{\frac{1}{2}}(i)}.
    \end{align*}
    Because
    \begin{equation*}
        p_{\frac{1}{2}}(i)=\frac1{2^{k+1}}\binom{k+1}{i}
    \end{equation*}
    we can simplify the fraction:
    \begin{equation*}
        \frac{p_{\frac{1}{2}}(i+1)}{p_{\frac{1}{2}}(i)}=
        \frac{\binom{k+1}{i+1}}{\binom{k+1}{i}}=
        \frac{k+1-i}{i+1},
    \end{equation*}
    thus
    \begin{align*}
        \phi'\left(\frac{1}{2}\right) &=\sum_i\binom{k}{i}\ln\left(\frac{k+1-i}{i+1}\right)=
        \sum_i\binom{k}{i}\ln(k+1-i)-\sum_i\binom{k}{i}\ln(i+1)\\
        &=\sum_i\binom{k}{i}\ln(k+1-i)-\sum_i\binom{k}{k-i}\ln(k-i+1)=0,
    \end{align*}
    which finishes the proof.
\end{proof}

\end{appendices}

\end{document}